\theoremstyle{definition}
\definecolor{mygreen}{rgb}{0,0.6,0}
\definecolor{mygray}{rgb}{0.5,0.5,0.5}
\definecolor{mymauve}{rgb}{0.58,0,0.82}
\ttfamily\color{mygreen}\bfseries,
\newcommand{\ignore}[1]{}
\newcommand{\batya}[1]{{\texttt{\color{blue} Batya: [{#1}]}}}
\newcommand{\A}{\texttt{A}}
\newcommand*{\rom}[1]{\expandafter\@slowromancap\romannumeral #1@}
\newcommand{\RNum}[1]{\uppercase\expandafter{\romannumeral #1\relax}}
\newtheorem{defn}{Definition}[section]
\newcommand{\eat}[1]{}
\newtheorem{corollary}[defn]{Corollary}
\newtheorem{proposition}[defn]{Proposition}
\newcommand{\sel}[1]{{\sigma}}
\newcommand{\cut}[1]{}
\newcommand{\triple}[3]{\langle{#1},{#2},{#3}\rangle}
\def\set#1{\mathord{\{#1\}}}
\def\eqdef{\mathrel{\stackrel{\textsf{\tiny def}}{=}}}
\def\A{\mathcal{A}}
\def\e#1{\emph{#1}}
\newenvironment{citedtheorem}[1]
{\begin{theorem}{\it\e{(#1)}}\,\,}
	{\end{theorem}}
\newenvironment{citedlemma}[1]
{\begin{lemma}{\it\e{(#1)}}\,\,}
	{\end{lemma}}
\newenvironment{citeddefinition}[1]
{\begin{definition}{\it\e{(#1)}}\,\,}
	{\end{definition}}
\newenvironment{repeatresult}[2]
{\vskip0.5em\par\textsc{#1} #2.\em}
{\vskip1em}
\def\appendix{\par
	\section*{APPENDIX}
	\setcounter{section}{0}
	\setcounter{subsection}{0}
	\def\thesection{\Alph{section}} }
\def\eqdef{\mathrel{\stackrel{\textsf{\tiny def}}{=}}}
\def\B{\mathcal{B}}
\def\e#1{\emph{#1}}
\newcommand{\algname}[1]{{\sf #1}}
\def\myrulewidth{2.80in}
\def\therule{\rule{\myrulewidth}{0.2pt}}
\def\myrulewidthwide{4in}
\def\therulewide{\rule{\myrulewidthwide}{0.2pt}}
\newenvironment{algserieswide}[2]
{\centering\begin{figure}[#1]\begin{center}\def\thecaption{\caption{#2}}
			\begin{tabular}{p{\myrulewidthwide}}\therulewide\end{tabular}\vskip0.2em}
		{\thecaption\end{center}\end{figure}}
\newenvironment{insidealgwide}[2]
{\normalsize\begin{insidecodewide}{#1}{#2}{Algorithm}}
	{\end{insidecodewide}}
\newenvironment{insidecode}[3]
{
	\begin{tabular}{p{\myrulewidth}}
		\multicolumn{1}{c}{\rule{0mm}{3mm}{\bf #3} $\algname{#1}(\mbox{#2})$\vspace{-0.6em}}\\
		\therule\vskip-0.8em\therule
		\vspace{-1em}
		\begin{algorithmic}[1]}
		{\end{algorithmic}
		\vskip-0.4em\therule
\end{tabular}}
\newenvironment{insidecodewide}[3]
{
	\begin{tabular}{p{\myrulewidthwide}}
		\multicolumn{1}{c}{\rule{0mm}{3mm}{\bf #3} $\algname{#1}(\mbox{#2})$\vspace{-0.6em}}\\
		\therulewide\vskip-0.8em\therulewide
		\vspace{-1em}
		\begin{algorithmic}[1]}
		{\end{algorithmic}
		\vskip-0.3em\therulewide
\end{tabular}}
\newcommand{\T}{{\mathcal{T}}}
\newcommand{\comp}[1]{\overline{#1}}
\newcommand{\minlsep}[2]{\mathcal{S}_{#1}(#2)}
\newcommand{\minlsepG}[3]{\mathcal{S}^{#1}_{#2}(#3)}
\newcommand{\minlsepst}[1]{\mathcal{S}_{s,t}(#1)}
\newcommand{\minsepst}[1]{\mathcal{L}_{s,t}(#1)}
\newcommand{\impsepst}[1]{\mathcal{S}_{s,t}^*(#1)}
\newcommand{\impsepstk}[1]{\mathcal{S}_{s,t,k}^*(#1)}
\newcommand{\minlsepEst}[2]{\mathcal{S}_{s,t}(#1,\comp{#2})}
\def\minsep{\mathcal{L}}
\def\sat{\mathrm{Sat}}
\newcommand{\edges}{\texttt{E}}
\newcommand{\nodes}{\texttt{V}}
\definecolor{mygreen}{rgb}{0,0.6,0}
\definecolor{mygray}{rgb}{0.5,0.5,0.5}
\definecolor{mymauve}{rgb}{0.58,0,0.82}
\definecolor{cadmiumgreen}{rgb}{0.0, 0.42, 0.24}
\def\gq1{{\geq}1}
\def\cc{\mathcal{C}}
\newif \ifnonplanar
\newcommand{\sminus}{\scalebox{0.75}[1.0]{$-$}}
\title{Listing Small Minimal $s,t$-separators in FPT-Delay} 
\titlerunning{}
\author{ }{ }{}{}{}
\author{Batya Kenig}{Technion, Israel Institute of Technology}{batyak@technion.ac.il}{}{}
\author{John Q. Public}{Dummy University Computing Laboratory, [optional: Address], Country \and My second affiliation, Country \and \url{http://www.myhomepage.edu} }{johnqpublic@dummyuni.org}{https://orcid.org/0000-0002-1825-0097}{(Optional) author-specific funding acknowledgements}
\authorrunning{Batya Kenig} 
\keywords{minimal separators, ranked enumeration} 
\begin{document}

\maketitle

\begin{abstract}
	Let $G$ be an undirected graph, and $s,t$ distinguished vertices of $G$. A minimal $s,t$-separator is an inclusion-wise minimal vertex-set whose removal places $s$ and $t$ in distinct connected components. 
	We present an algorithm for listing the minimal $s,t$-separators of a graph, whose cardinality is at most $k$, with FPT-delay, where the parameter depends only on $k$. This problem finds applications in various algorithms parameterized by treewidth, which include query evaluation in relational databases, probabilistic inference, and many more. We also present a simple algorithm that enumerates all of the (not necessarily minimal) $s,t$-separators of a graph in ranked order by size. \eat{
	In the process, we prove several results that are of independent interest. We establish a new island of tractability to the intensively studied \textsc{2-disjoint connected subgraphs} problem~\cite{van_t_hof_partitioning_2009}, which is NP-complete even for restricted graph classes that include planar graphs~\cite{gray_removing_2012}, and prove new characterizations of minimal $s,t$-separators. Ours is the first to present a ranked enumeration algorithm for minimal separators where the delay is polynomial in the size of the input graph.}
\eat{
	
	Let $A$ and $B$ be disjoint, non-adjacent vertex-sets in an undirected graph $G$. We consider the problem of finding a subset of vertices $S\subseteq \nodes(G)$, whose removal disconnects $A$ and $B$, while keeping each of them connected. We call such a subset of vertices a \e{safe} $A,B$-separator. Deciding whether a safe separator exists is NP-hard by reduction from the \textsc{2-disjoint connected subgraphs} problem~\cite{van_t_hof_partitioning_2009}, and remains NP-hard even for restricted graph classes that include planar graphs~\cite{gray_removing_2012}, and $P_\ell$-free split graphs if $\ell\geq 5$~\cite{van_t_hof_partitioning_2009}.
	In this work, we show that if there exists a pair of vertices $s\in A$ and $t\in B$, such that every vertex in $A\cup B \setminus \set{s,t}$ belongs to a minimum $s,t$-separator, then, in polynomial time, we can find a safe $A,B$-separator of minimum size, or establish that no safe $A,B$-separator exists.
	In contrast to previous work on the \textsc{2-disjoint connected subgraphs} problem that focused on restricted graph classes, we make no assumptions on the input graph $G$, and instead study restrictions to the input vertex-sets. 
}
\end{abstract}

\section{Introduction}
\label{sec:introduction}
Let  $G(V,E)$ be a finite, undirected graph with nodes $V=\nodes(G)$ and edges $E=\edges(G)$.
For two distinguished vertices $s,t\in \nodes(G)$, an $s,t$-vertex-separator is a subset $S\subseteq \nodes(G)$ such that $s$ and $t$ reside in distinct connected components of the graph that results from $G$ by removing vertices $S$ and their adjacent edges; an $s,t$-edge-separator is a subset $A\subseteq \edges(G)$ such that $s$ and $t$ reside in distinct connected components of the graph $G(V,E\setminus A)$.
Vertex and edge separators are widely studied in graph theory, with a long and deep history that is motivated by a wide range of applications in computer science~\cite{DBLP:books/cu/NI2008}.
Finding optimal (e.g., minimum weight) edge and vertex separators are fundamental problems, that are solved in polynomial time using network flow techniques~\cite{DBLP:journals/jal/HenzingerRG00,DBLP:journals/siamcomp/EvenT75,DBLP:books/daglib/0032640}. A natural extension is the task of their enumeration, and in particular, enumerating edge and vertex-separators in ranked order by their size or weight.\eat{A polynomial-delay algorithm for enumerating $s,t$-edge-separators in ascending order of their weight was presented by Vazirani and Yannakakis~\cite{DBLP:conf/icalp/VaziraniY92}, and before that, an algorithm for enumerating the best $K$ $s,t$-edge-separators was presented by Hamacher~\cite{DBLP:journals/orl/Hamacher82}. Both of these algorithms apply the known partitioning technique due to Lawler~\cite{Lawler1972,DBLP:journals/siamcomp/LawlerLK80,DBLP:journals/pvldb/GolenbergKS11}.} In this work, we focus on $s,t$-vertex-separators~\cite{DBLP:journals/ijfcs/BerryBC00,DBLP:journals/siamcomp/KloksK98,DBLP:journals/dam/Takata10}.
An $s,t$-separator $S\subseteq \nodes(G)$ is a \e{minimal $s,t$-separator} if no strict subset of $S$ is an $s,t$-separator. We say that $S$ is a \e{minimum $s,t$-separator} if $|S'|\geq |S|$ for every $s,t$-separator $S'$. 

Enumerating minimal separators of bounded cardinality refines and extends two well-studied enumeration problems~\cite{DBLP:journals/corr/abs-2012-09153}: enumeration of all minimal separators, and enumeration of all minimum separators~\cite{DBLP:conf/soda/Kanevsky90}.
Berry et al.~\cite{DBLP:journals/ijfcs/BerryBC00} developed an efficient algorithm that lists the minimal separators of a graph with a delay of $O(|\nodes(G)|^3)$ between consecutive outputs. The algorithm of Berry et al.~\cite{DBLP:journals/ijfcs/BerryBC00}, as well as others~\cite{DBLP:journals/siamcomp/KloksK98,DBLP:journals/dam/Takata10,SHEN1997169}, does not list the minimal separators in any ranked order. Kanevsky~\cite{DBLP:conf/soda/Kanevsky90} developed a complicated algorithm that enumerates all the minimum separators of a graph. In this work, we present an algorithm that given an undirected graph $G$ with $n$ vertices and $m$ edges, and a  bound $k$, lists all of the minimal $s,t$-separators whose size is at most $k$ with delay $O(p(n,m)\cdot 4^k)$ where $p$ is a polynomial in $n,m$.   Recently, Korhonen presented an algorithm that lists the minimal $s,t$-separators whose size is at most $k$ in incremental-polynomial-time~\cite{DBLP:journals/corr/abs-2012-09153}. We improve this to FPT-delay. We also present a simple algorithm that enumerates the $s,t$-separators in ranked order by size.
\eat{ Finally, a strict subset of the set of all minimal $st$-separators known as \e{important separators}~\cite{DBLP:books/sp/CyganFKLMPPS15,DBLP:journals/tcs/Marx06} posses a special structure that allows enumerating all important $st$-separators of size at most $q$, in polynomial delay~\cite{DBLP:journals/jacm/ChenLLOR08}.}

{\bf Motivation and Applications.} Fixed Parameter Tractable (FPT) algorithms~\cite{DBLP:books/sp/CyganFKLMPPS15,DBLP:series/txtcs/FlumG06} are those with running time $O(n^cf(k))$, for a constant $c$ independent of both the size of the input $n$, and the parameter $k$, and $f$ is a
computable function. The \e{treewidth} of a graph is an important graph complexity measure that
occurs as a parameter in many FPT algorithms~\cite{DBLP:series/txtcs/FlumG06,DBLP:journals/cj/BodlaenderK08} that operate over the \e{tree decomposition} of the graph. 
A tree decomposition of a graph $G$ is a tree $\T$ such that each
node of $\T$ is associated with a bag of vertices of $G$, every
edge in $\edges(G)$ is contained in at least one bag, and every vertex
in $\nodes(G)$ spans a connected subtree of $\T$. 
The \e{width} of a tree decomposition is defined as the size of the largest bag minus one. The \e{treewidth} of a graph $G$ is the minimum width among all possible tree decompositions of $G$. 

A tree decomposition is \e{proper} if it cannot be ``improved'' by removing or splitting a bag~\cite{DBLP:journals/dam/CarmeliKKK21}. There is a one-to-one correspondence between the \e{minimal triangulations} and the proper tree decompositions of a graph $G$~\cite{DBLP:journals/dam/CarmeliKKK21}. 
A graph $G$ is \e{chordal} if every cycle of $G$ with four or more vertices contains a chord.
A \e{triangulation} of $G$ is a graph $G'$ that is obtained from $G$ by adding edges so that $G'$ is chordal. The cardinality of the set of added edges is called the \e{fill-in} of the triangulation $G'$. A triangulation is \e{minimal} if no triangulation can be obtained using a strict subset of the added edges. It is shown in~\cite{DBLP:journals/dam/CarmeliKKK21} that every edge in a proper tree decomposition represents a minimal separator in $G$; the intersection of the two bags adjacent to the edge is a minimal separator of $G$. Therefore, a necessary condition for a (proper) tree decomposition to have low width, is that the minimal separators represented by its edges have bounded cardinality. Specifically, a proper tree decomposition whose width is $k$ can only represent minimal separators (i.e., via its edges) whose size is at most $k$~\cite{DBLP:conf/sea2/Tamaki19}.

\eat{
Every edge of the tree decomposition represents a minimal separator in $G$; the conjunction of the two bags adjacent to the edge are a minimal separator of $G$~\cite{DBLP:journals/dam/CarmeliKKK21}.
The runtime of algorithms that operate over a tree decomposition is exponential in the \e{width} of the tree decomposition, which is the cardinality of the largest bag (minus one). Another common measure for the quality of a tree decomposition is \e{fill in}, the number of missing edges among bag neighbors. Consequently, a necessary condition for a tree decomposition to have low width (or fill-in) is that it represents minimal separators of small cardinality (small number of ``fill'' edges).
Tree decompositions, and their generalizations to hypergraphs-- \e{hypertree decompositions}, are applied in a plethora of applications in databases and beyond. These include optimization of join queries in databases~\cite{DBLP:journals/jcss/ScarcelloGL07,10.1145/846241.846269,10.1145/2723372.2764946,DBLP:conf/sebd/GottlobLS99,DBLP:journals/sigmod/NgoRR13}, RNA analysis in bioinformatics~\cite{DBLP:conf/wabi/ZhaoMC06}, and inference in probabilistic graphical models~\cite{koller2009probabilistic}.
}

Indeed, while computing an optimal tree decomposition, with the smallest width, is NP-hard~\cite{doi:10.1137/0608024}, current state-of-the-art algorithms for finding an optimal tree-decomposition require the set of minimal separators with bounded cost~\cite{DBLP:conf/sea2/Tamaki19,DBLP:journals/jea/KorhonenBJ19,DBLP:journals/corr/abs-2012-09153,DBLP:journals/siamcomp/BouchitteT01, DBLP:journals/tcs/BouchitteT02}. A central (exponential time) algorithm for finding an optimal tree decomposition
is the one due to Bouchitté and Todinca~\cite{DBLP:journals/siamcomp/BouchitteT01} (BT algorithm). The BT algorithm has been shown to be highly efficient~\cite{dell_et_al:LIPIcs:2018:8558,DBLP:journals/jea/KorhonenBJ19}, and applicable to a range of cost functions~\cite{DBLP:journals/dam/BodlaenderF05,DBLP:journals/tcs/FuruseY14}.
The first phase of the BT algorithm, and its extensions, employs a subroutine for listing the set of minimal separators of the graph~\cite{DBLP:journals/siamcomp/BouchitteT01}. Consequently, an algorithm that efficiently lists the small minimal separators of a graph can lead to significant performance gains in finding \e{good}, or low-width, tree decompositions. Another parameter of interest is \e{treedepth}~\cite{DBLP:journals/corr/abs-2008-09822,xu_et_al:LIPIcs:2020:13334}, important in a variety of combinatorial optimization problems. Its computation also requires all the minimal separators of the graph of bounded size. Motivated by these tasks, we study the problem of efficiently enumerating small minimal $s,t$-separators.
\eat{

Computing an optimal tree decomposition in terms of \e{width} or \e{fill-in} is NP-hard~\cite{doi:10.1137/0608024}. A central (exponential time) algorithm for this task 
 is the one due to Bouchitté and Todinca~\cite{DBLP:journals/siamcomp/BouchitteT01} (BT algorithm). The BT algorithm has been shown to be highly efficient~\cite{dell_et_al:LIPIcs:2018:8558,DBLP:journals/jea/KorhonenBJ19}, and applicable to a range of cost functions~\cite{DBLP:journals/dam/BodlaenderF05,DBLP:journals/tcs/FuruseY14}.\eat{, and its implementations consistently occupy the top spots of the Parameterized Algorithms and Computational Experiments challenge (PACE)~\cite{dell_et_al:LIPIcs:2018:8558,DBLP:journals/jea/KorhonenBJ19}.}  
The first phase of the BT algorithm, and its extensions, employs a subroutine for listing the set of minimal separators of the graph~\cite{DBLP:journals/siamcomp/BouchitteT01}. For classes of graphs that have a polynomial number of minimal separators (have the \e{poly-MS} property~\cite{DBLP:journals/corr/FominTV13}), there is a polynomial time algorithm for computing a tree decomposition of
minimum width or fill-in~\cite{DBLP:journals/siamcomp/BouchitteT01, DBLP:journals/tcs/BouchitteT02}.
\eat{
	\ICDT
	Building on the techniques of Bouchitté and Todinca~\cite{DBLP:journals/siamcomp/BouchitteT01}, Ravid et al.~\cite{DBLP:conf/pods/RavidMK19} developed a ranked enumeration algorithm for tree decompositions graphs with the poly-MS property.}

By its definition, a necessary condition for a (proper) tree decomposition to have low width or fill-in, is that the minimal separators represented by its edges have bounded cardinality or fill-in. 
Hence, current state-of-the-art algorithms for finding an optimal tree-decomposition of an input graph, with respect to some cost function, require the set of minimal separators with bounded cost~\cite{DBLP:conf/sea2/Tamaki19,DBLP:journals/jea/KorhonenBJ19,DBLP:journals/corr/abs-2012-09153}. For example, a proper tree decomposition whose width is $k$ can only represent minimal separators (i.e., via its edges) whose size is at most $k$~\cite{DBLP:conf/sea2/Tamaki19}. 
Consequently, an algorithm that returns the minimal separators of the graph in non-decreasing order of cardinality can lead to significant performance gains in finding \e{good}, or low-width, tree decompositions, since it can be aborted once separators of size greater than some threshold are returned.\eat{
	State-of-the-art algorithms for deciding whether the treewidth of the graph has size at most $k$, or whether the graph has a tree decomposition whose factor sizes are at most $k$~\cite{DBLP:conf/sea2/Tamaki19,DBLP:journals/jea/KorhonenBJ19} require only those minimal separators whose weight is at most $k$~\cite{DBLP:journals/corr/abs-2012-09153}. }
Another parameter of interest is \e{treedepth}~\cite{DBLP:journals/corr/abs-2008-09822,xu_et_al:LIPIcs:2020:13334}, important in a variety of combinatorial optimization problems. Its computation also requires all the minimal separators of the graph of bounded size. Motivated by these tasks, we study the problem of enumerating the minimal $s,t$-separators of a graph in ranked order according to their size.  
}

In any ranked enumeration algorithm, finding the top element is basically an \e{optimization problem}. In our case, there are well known algorithms for finding a \e{minimum-cardinality $s,t$-separator}~\cite{DBLP:journals/jal/HenzingerRG00,DBLP:journals/siamcomp/EvenT75,Chen2022}. For $k>1$, finding the $k$-th ranking item amounts to computing the optimal minimal $s,t$-separator under the restriction that it is not among the first $k-1$ items previously returned. Handling this constraint is the main challenge when designing ranked enumeration and top-$K$ algorithms. 

The technique of Lawler and Murty~\cite{DBLP:journals/siamcomp/LawlerLK80} provides a general framework for ranked enumeration corresponding to discrete optimization problems. The main idea is to reduce a ranked enumeration problem to an optimization problem with constraints~\cite{DBLP:journals/pvldb/GolenbergKS11}.
In the standard approach to applying the Lawler-Murty technique, the algorithm first finds the optimal solution $S$ (e.g., minimum $s,t$-separator), then the subspace of solutions (excluding $S$) is partitioned using \e{inclusion} and \e{exclusion} constraints. The straightforward approach to applying the Lawler-Murty method to ranked enumeration of minimal $s,t$-separators is by solving the following optimization problem: find the smallest minimal $s,t$-separator that excludes a subset $U\subseteq \nodes(G)$, and includes a subset $I\subseteq \nodes(G)$ of vertices. Using this approach, we immediately hit an obstacle. In Appendix~\ref{sec:chordlessPath}, we show that deciding whether 
there exists a minimal $s,t$-separator that includes a distinguished vertex $v\in \nodes(G)$, is NP-complete by reduction from the \textsc{$3$-in-a-path} problem~\cite{DBLP:journals/dam/DerhyP09}. On the other hand, we show that the task of finding the smallest $s,t$- separator that excludes a subset of vertices can be performed in polynomial time. This leads to an algorithm for enumerating the $s,t$-separators in ranked order by size.
Overall, we prove the following.
\begin{theorem}
	\label{thm:fptDelay}
Let $G$ be a finite, undirected graph, and $s,t\in \nodes(G)$ be non-adjacent. The set of minimal $s,t$-separators whose cardinality is at most $k$ can be listed with delay $O(nk4^kT(n,m))$ where $n=|\nodes(G)|$, $m=|\edges(G)|$, and $T(n,m)$ is the time to find a minimum $s,t$-separator of $G$.
\end{theorem}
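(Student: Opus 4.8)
The plan is to cast the problem as a ranked enumeration driven by a single optimization oracle that respects only \emph{exclusion} constraints, and then to run the Lawler--Murty partitioning scheme on top of it. The oracle I would use is $\mathsf{SmallSep}(U)$: given $U\subseteq\nodes(G)$, return a minimum-cardinality $s,t$-separator disjoint from $U$, or $\bot$ if none exists. This is computable in time $O(T(n,m))$ by a single minimum vertex-cut computation in which the vertices of $U$ are made uncuttable (capacity $n+1$, the others capacity $1$) with terminals $s,t$. The observation that makes this oracle sufficient is that a minimum separator avoiding $U$ is automatically \emph{minimal}: if $S$ is a smallest separator disjoint from $U$ and $S'\subsetneq S$ were a separator, then $S'$ is also disjoint from $U$ and smaller, a contradiction. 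Moreover, since every minimal separator disjoint from $U$ is in particular a separator disjoint from $U$, the oracle returns the \emph{smallest minimal} $s,t$-separator disjoint from $U$; fixing a canonical (e.g.\ lexicographically least) minimizer makes the output deterministic. Finally, if $|\mathsf{SmallSep}(U)|>k$ then \emph{no} separator disjoint from $U$ has size $\le k$, so we may safely prune.

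Next I would organize the enumeration as a search tree whose nodes are labeled by exclusion sets $U$, the root being $U=\emptyset$; node $U$ is responsible for the minimal separators of size at most $k$ disjoint from $U$. At node $U$ we compute $S_0=\mathsf{SmallSep}(U)$, output it when $|S_0|\le k$, and then must cover all \emph{other} solutions disjoint from $U$. Here I use the fact that distinct minimal separators are incomparable: if $S_0$ is a minimal separator and $S_0\subseteq S'$ for a minimal separator $S'$, then $S'=S_0$, since $S_0$ is itself a separator contained in the minimal $S'$. Consequently every minimal separator $S'\ne S_0$ disjoint from $U$ must \emph{exclude} some vertex of $S_0$, i.e.\ $S_0\setminus S'\ne\emptyset$. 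Writing $S_0=\set{u_1,\dots,u_r}$ with $r\le k$, this yields the covering family of children $U\cup\set{u_i}$ for $i=1,\dots,r$, each solvable by the same oracle. Running this with a priority queue keyed by separator size gives the separators in nondecreasing order of size, and plain depth-first traversal gives the unranked listing.

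The main obstacle is that this partition is \emph{not} disjoint. In the classical Lawler--Murty scheme one would force child $i$ to additionally \emph{include} $u_1,\dots,u_{i-1}$, making the subspaces disjoint and the delay polynomial; but, as shown in Appendix~\ref{sec:chordlessPath}, deciding whether a minimal $s,t$-separator contains a prescribed vertex is NP-complete, so inclusion constraints are off-limits. With exclusion only, a separator $S'$ that misses several vertices of $S_0$ is produced in several children, so naive recursion both duplicates outputs and, worse, can spend unbounded time in a subtree yielding nothing new---fatal for a \emph{delay} guarantee. Resolving this is the heart of the proof: I would (i) attach to each node a deduplication test and a canonical-parent rule so that each minimal separator $S$ of size $\le k$ is emitted exactly once, at the unique node whose exclusion set is the canonical witness for $S=\mathsf{SmallSep}(U)$; and (ii) bound the fruitless search between consecutive outputs. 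For the latter I would use the structural characterization of minimal separators through their full $s$- and $t$-side components to argue that only vertices ``on the current separator'' are ever excluded, and that a size/potential argument caps the relevant branching at $4^k$, so that at most $O(4^k)$ oracle calls separate two successive outputs.

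Putting the pieces together yields the claimed delay. Between two consecutive outputs the algorithm visits $O(4^k)$ tree nodes; at each node it invokes $\mathsf{SmallSep}$ once at cost $T(n,m)$ and performs $O(nk)$ bookkeeping (scanning the at most $k$ vertices of the current separator, updating the exclusion set over $\nodes(G)$, and the deduplication/canonical-parent check), for a total delay of $O(nk\,4^k\,T(n,m))$. Correctness---that every minimal $s,t$-separator of size at most $k$ is listed and none is listed twice---follows from the covering lemma of the second paragraph together with the canonical-parent rule. I expect step (ii), namely proving the $4^k$ bound on the search effort between outputs using only exclusion constraints, to be the technically hardest part, since it is exactly where the NP-hardness of inclusion must be circumvented by exploiting the combinatorial structure of minimal separators.
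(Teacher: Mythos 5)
Your overall architecture---an exclusion-only Lawler--Murty tree driven by an oracle that returns a minimum separator disjoint from $U$---is sound as far as \emph{correctness} goes (the oracle's output is automatically minimal, incomparability of minimal separators gives the covering family of children, and a chain argument shows every $T\in\minlsepst{G}$ with $|T|\le k$ is eventually the oracle's answer at some node). But the theorem is a \emph{delay} bound, and the step you defer to ``a size/potential argument caps the relevant branching at $4^k$'' is precisely the content of the theorem, and it is not established by anything you write. In your tree a fixed separator $S$ can be the oracle's answer at many distinct nodes (the children overlap, the depth is bounded only by $n$, and the branching factor is $k$), so between two consecutive \emph{new} outputs the algorithm may visit a number of duplicate or fruitless nodes that is not obviously bounded by any function of $k$ alone; neither the deduplication rule nor the canonical-parent rule reduces the number of nodes \emph{visited}, only the number of lines printed. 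The assertion that ``only vertices on the current separator are ever excluded'' does not by itself yield $4^k$.

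The paper closes exactly this gap by a different mechanism: it does not use Lawler--Murty for this result at all, but instead relies on \emph{important separators}. The classical theorem that there are at most $4^k$ important $s,t$-separators of size at most $k$, enumerable in time $O(n\cdot T(n,m)\cdot 4^k)$, supplies the $4^k$ factor; the traversal is organized by the partial order $S\prec T\iff C_s(G\sminus S)\subset C_s(G\sminus T)$, and progress from a printed separator $S$ is made by the graph surgeries $H_S$ (add edges from $s$ to all of $S$, restricting attention to separators contained in $S\cup C_t(G\sminus S)$, Lemma~\ref{lem:Hgraph}) and $H_S^v$ (absorb $v\in S$ into $s$, realizing the exclusion constraint, Lemma~\ref{lem:contract}), after which one computes $\impsepstk{H_S^v}$ and pushes the results. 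Lemma~\ref{lem:mainCorrectness} (every small minimal separator has an important separator below it in $\prec$) is what guarantees that each pop of the queue makes bounded progress, so the delay is $k$ calls to the important-separator enumerator, i.e.\ $O(nk4^kT(n,m))$. Your scheme is in fact essentially the paper's Algorithm~\ref{alg:EnumSep} for Theorem~\ref{thm:rankednonminimal} (ranked enumeration of all, not necessarily minimal, separators), where the Lawler partition can be made disjoint because \emph{inclusion} constraints are tractable for minimum separators (Corollary~\ref{corr:vertexSetInclude}); for minimal separators of bounded size, where inclusion is NP-hard, some substitute for the important-separator machinery would have to be supplied before your delay claim can stand.
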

The problem of finding a minimum $s,t$-separator in an undirected graph can be reduced, by standard techniques~\cite{DBLP:books/daglib/0032640}, to finding a maximum flow from $s$ to $t$. 
Following a sequence of improvements to max-flow algorithms in the past few years~\cite{DBLP:conf/stoc/LiuS20,DBLP:conf/focs/KathuriaLS20,DBLP:conf/stoc/BrandLLSS0W21}, the current best performing algorithm runs in almost linear time $m^{1+o(1)}$~\cite{Chen2022}.
We also prove the following.
\begin{theorem}
	\label{thm:rankednonminimal}
	Let $G$ be a finite, undirected graph, and $s,t\in \nodes(G)$ be non-adjacent. There is an enumeration algorithm that outputs the $s,t$-separators of $G$ in ranked order by cardinality, whose delay is $O(n\cdot T(n,m))$. In particular, our algorithm lists the minimum $s,t$ separators of $G$ with delay  $O(k\cdot T(n,m))$, where $k$ is the cardinality of a minimum $s,t$-separator.
\end{theorem}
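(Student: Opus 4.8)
The plan is to apply the Lawler--Murty partitioning technique~\cite{DBLP:journals/siamcomp/LawlerLK80}, using as its optimization oracle the problem of finding a minimum-cardinality $s,t$-separator subject to an \emph{inclusion} set $I\subseteq\nodes(G)$ and an \emph{exclusion} set $U\subseteq\nodes(G)$ (disjoint, and avoiding $s,t$): we seek the smallest separator $S$ with $I\subseteq S$ and $S\cap U=\emptyset$. The crucial point is that, in contrast to the \emph{minimal} separators of Theorem~\ref{thm:fptDelay} (where deciding whether a vertex can belong to some minimal separator is NP-complete), for arbitrary separators the inclusion constraint is entirely benign: since any superset of an $s,t$-separator is again an $s,t$-separator, forcing vertices into the solution costs nothing beyond their own cardinality. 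Hence the oracle is solvable by a single minimum-cut computation, which is exactly what makes the (non-minimal) ranked enumeration tractable.

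Concretely, I would reduce the oracle to one min $s,t$-vertex-cut. Since $I\subseteq S$ gives $G\setminus S=(G\setminus I)\setminus(S\setminus I)$, the set $S$ separates $s,t$ in $G$ if and only if $S\setminus I$ separates them in $G\setminus I$. Therefore a minimum separator containing $I$ and avoiding $U$ is exactly $I\cup S'$, where $S'$ is a minimum $s,t$-separator of $G\setminus I$ that avoids $U$. The latter is computed by assigning capacity $1$ to every ``cuttable'' vertex and capacity $\infty$ to the vertices of $U$ (and to $s,t$), splitting each vertex into an in/out pair, and running a single max-flow; an infinite minimum cut signals infeasibility. This costs $O(T(n,m))$, matching the cost asserted in the theorem and subsuming the polynomial-time exclusion oracle mentioned in the introduction.

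With this oracle in hand, I maintain a min-heap of subproblems, each a pair $(I,U)$ keyed by the cardinality of its optimal solution, initialized with $(\emptyset,\emptyset)$ and a global minimum separator. At each step I pop the cheapest subproblem $(I,U)$, output its optimal solution $S$, and partition the remaining feasible solutions à la Murty: fix any ordering $x_1,\dots,x_q$ of the free vertices $\nodes(G)\setminus(\{s,t\}\cup I\cup U)$, and for each $j$ create the child that \emph{agrees} with $S$ on $x_1,\dots,x_{j-1}$ and \emph{disagrees} at $x_j$ (moving each $x_i$ into $I$ or $U$ according to whether $S$ contains it, and flipping the decision at $x_j$). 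Any separator $S'\neq S$ satisfying $(I,U)$ differs from $S$ at a first free vertex and so lands in exactly one child, while $S'=S$ is the popped element; thus the children together with $\{S\}$ partition the solutions of $(I,U)$ exactly. The invariant that the heap keys realize the minimum over all not-yet-output separators then guarantees that every separator is emitted once and in nondecreasing cardinality. Each of the $q\le n$ children requires one oracle call, giving delay $O(n\cdot T(n,m))$, with the $O(\log)$ heap cost dominated by $T(n,m)$.

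For the minimum separators specifically, I refine the partition to generate only $O(k)$ children, and run the heap until its minimum key exceeds $k$. Ordering the free vertices so that the $\le k$ free vertices of $S$ come first, I observe that distinct minimum separators are incomparable (both have size $k$), so any minimum $S'\neq S$ compatible with $(I,U)$ must \emph{drop} a vertex of $S$; hence its first disagreement with $S$ occurs at a vertex of $S$ and is of ``exclude'' type, and it suffices to branch on the $\le k$ exclusion-children of $S$ (the include-a-non-$S$-vertex branches only ever yield separators of size $>k$). I expect this refinement to require the most care: one must verify that restricting to these $k$ branches still yields a genuine partition of the size-$k$ solutions---no omissions and no repeats---which rests precisely on the incomparability of distinct minimum separators together with placing $S$'s vertices first in the branching order, and which delivers the claimed delay $O(k\cdot T(n,m))$.
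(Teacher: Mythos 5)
Your proof is correct, and it follows the same Lawler-style skeleton the paper uses (a priority queue of inclusion/exclusion-constrained subproblems, each solved by one constrained minimum-cut computation), but both ingredients are realized differently. For the oracle, your treatment of inclusion is exactly the paper's Corollary~\ref{corr:vertexSetInclude} (delete $I$, solve, add $I$ back), but for exclusion the paper does not modify capacities: its key technical result, Theorem~\ref{thm:MinlSepIU}, shows that the minimal $s,t$-separators avoiding $U$ are precisely the minimal $s,t$-separators of the saturated graph $\sat(G,U)$ in which each closed neighborhood $N_G[u]$, $u\in U$, is turned into a clique, so the exclusion constraint is absorbed into the graph itself and the unconstrained minimum-separator routine is reused verbatim. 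Your infinite-capacity vertex-splitting gadget is an equally valid and more elementary substitute that bypasses the saturation lemma entirely, at the cost of committing to a flow-based black box rather than an arbitrary minimum-separator oracle. For the partition, the paper's Algorithm~\ref{alg:EnumSep} branches only on your ``exclusion-children'' (one per vertex of $S\setminus I$) even in the all-separators case, whereas you use the full binary Lawler--Murty partition over all free vertices there; your version is the more careful one for that statement, since the exclusion-children alone leave uncovered exactly the separators strictly containing the popped $S$, which are legitimate non-minimal outputs and which your include-children capture explicitly. Your $O(k)$-children refinement for minimum separators---incomparability of equal-cardinality sets forces the first disagreement to be a drop of a vertex of $S\setminus I$, and a child survives only if its optimum is still $k$---is precisely the paper's Algorithm~\ref{alg:EnumSepMin} with its test $|T|=\kappa_{s,t}(G)-|I_i|$. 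Both routes yield $O(n)$ (resp.\ $O(k)$) oracle calls per output and hence the claimed delays.
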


The algorithm of Theorem~\ref{thm:rankednonminimal} follows the framework of Lawler and Murty, and is easily implementable (see Figure~\ref{fig:rankedMinSeps}).

\eat{
In this paper, we study the \textsc{Min Safe Separator} problem under the assumption that the input vertex-sets $A$ and $B$ have the following property: there exists a pair of vertices $s\in A$ and $t\in B$, such that every vertex $v\in A\cup B\setminus \set{s,t}$ belongs to some minimum $s,t$-separator $T_v$. 
Under this assumption, we show that we can find a safe $A,B$-separator of minimum size, or determine that no safe $A,B$-separator exists, in polynomial time. As a consequence,
under the assumption that the vertex-sets $A$ and $B$ have the required property, we can also solve the \textsc{2-Disjoint Connected Subgraphs} problem in polynomial time.
Specifically, in section~\ref{sec:inducedDisjointPath}, we show that an instance of the \textsc{2-Disjoint Connected Subgraphs} $\texttt{2Dis}(G,A,B)$ has a solution if and only if there exists a safe $A,B$-separator in the graph $G'$ that results from $G$ by \e{subdividing} its edges.
}

\eat{
\begin{theorem}
Let $G$ be an undirected, connected graph, and let $\A,\B\subseteq \nodes(G)$ be a pair of disjoint, non-adjacent vertex-sets where $\A=A_1\cup \dots \cup A_\ell$ and $\B=B_1\cup \dots B_m$ where $A_i,B_j\subseteq \nodes(G)$ for all $i\in \set{1,\dots,\ell}$ and $j\in \set{1,\dots,m}$. Also, for every $i\in \set{1,\dots,\ell}$ and $j\in \set{1,\dots,m}$, we let $\A_i\eqdef A_1\cup \cdots \cup A_i$ and $\B_j\eqdef B_1\cup \cdots \cup B_j$.
We let $G^{\A_i,\B}$ ($G^{\A,\B_j}$) denote the graph that results from $G$ by merging $\A_i\cup \set{s}$ to vertex $s$ and merging $\B\cup \set{t}$ to vertex $t$.
If, for every $v\in A_i$, it holds that there exists a minimum $s,t$-separator $T_v$ in $G^{\A_{i-1},\B}$ such that $v\in T_v$, and for every $u\in B_j$, it holds that there exists a minimum $s,t$-separator $T_u$ in $G^{\A,\B_{j-1}}$ such that $u\in T_u$, then there is an algorithm that finds a safe $sA,tB$-separator of minimum size, or establishes that no safe $sA,tB$-separator exists, in time $O(n\cdot T(n,m))$ where $n=|\nodes(G)|,m=|\edges(G)|$, and $T(n,m)$ is the time to find a minimum $s,t$-separator in $G$. 
\end{theorem}
The problem of finding a minimum $s,t$-separator in an undirected graph can be reduced, by standard techniques~\cite{DBLP:books/daglib/0032640}, to the problem of finding a minimum $s,t$-cut, or maximum flow from $s$ to $t$. 
Following a sequence of improvements to max-flow algorithms in the past few years~\cite{DBLP:conf/stoc/LiuS20,DBLP:conf/focs/KathuriaLS20,DBLP:conf/stoc/BrandLLSS0W21}, the current best running time is $O^*(m+n^{1.5})$\footnote{$O^*$ hides poly-logarithmic factors.}~\cite{DBLP:conf/stoc/BrandLLSS0W21}. 
\newline
}
\eat{
\textbf{Related work.} 
As previously mentioned, the \textsc{2-Disjoint Connected Subgraphs} problem is NP-complete~\cite{van_t_hof_partitioning_2009}, and remains so even if one of the input vertex-sets contains only two vertices, or if the input graph is a $P_5$-free split graph~\cite{van_t_hof_partitioning_2009}. Motivated by an application in computational-geometry, Gray et al.~\cite{gray_removing_2012} show that the \textsc{2-Disjoint Connected Subgraphs} problem is NP-complete even for the class of planar graphs.
A na\"ive brute-force algorithm that tries all $2$-partitions of the vertices in $\nodes(G)\setminus (A\cup B)$ runs in time $O(2^nn^{O(1)})$.
Cygan et al.~\cite{cygan_solving_2014} were the first to present an exponential time algorithm for general graphs that is faster than the trivial $O(2^nn^{O(1)})$ algorithm, and runs in time $O^*(1.933^n)$. This result was later improved by Telle and Villanger~\cite{DBLP:conf/wg/TelleV13}, that presented an enumeration-based algorithm that runs in time $O^*(1.7804^n)$. 

Restricting the input to the \textsc{2-Disjoint Connected Subgraphs} problem to special graph classes has been the focal point of previous research efforts. This approach has led to the discovery of islands of tractability and improved our understanding of its difficulty.
\eat{For this reason, much previous work on the \textsc{2-Disjoint Connected Subgraphs} problem focused on restricting the input graph.}
For example, in~\cite{van_t_hof_partitioning_2009}, the authors presented an algorithm that runs in polynomial time on co-graphs, and in time
$O((2-\varepsilon(\ell))^n)$ for $P_\ell$-free graphs (i.e., graphs that do not contain an induced path of size $\ell$). In subsequent work~\cite{paulusma_partitioning_2011}, the authors show that the \textsc{2-Disjoint Connected Subgraphs} problem can be solved in time $O(1.2501^n)$ on $P_6$-free graphs. More recently, Kern et al.~\cite{kern_disjoint_2022} studied the \textsc{2-Disjoint Connected Subgraphs} problem on $H$-free graphs (i.e., all graphs that do not contain the graph $H$ as an induced subgraph).
To the best of our knowledge, ours is the first that makes no assumptions on the input graph, but establishes tractability by restricting the vertex-sets that make up the input to this problem. 
This marks a new island of tractability for this important problem, and further contributes to our understanding of its complexity.

Deciding whether a safe separator exists remains NP-hard even if $|A|=|B|=2$. We show this by reduction from
the \textsc{Induced disjoint paths} problem. The input to this problem is an undirected graph $G$ and a collection of $k$ vertex pairs $\set{(s_1,t_1),\dots,(s_k,t_k)}$ where $s_i\neq t_i$ and $k\geq 2$. The goal is to determine whether $G$ has a set of $k$ paths that are mutually induced (i.e., they have neither common vertices nor adjacent vertices). The \textsc{Induced disjoint paths} problem remains NP-hard even if $k=2$~\cite{BIENSTOCK199185}. However, when $G$ is a planar graph and $k=2$, the problem can be solved in polynomial time, as shown by 
Kawarabayashi and Kobayashi~\cite{DBLP:conf/ipco/KawarabayashiK08}.
}

{\bf Organization.}
The rest of this paper is organized as follows. Following preliminaries in Section~\ref{sec:Preliminaries}, we prove some useful properties of minimal $s,t$-separators in Section~\ref{sec:charMinlSeps}.
In Section~\ref{sec:theAlg}, we show how small minimal $s,t$-separators can be enumerated in FPT-delay, and in Section~\ref{sec:enumseps} present a polynomial-delay algorithm for enuemrating $s,t$-separtors ranked by size (but without the guarantee of minimality). 
\eat{
 and show how these are applied in the context of the Lawler Murty technique. We present the enumeration algorithm in Section~\ref{sec:theAlg}. Due to space restrictions, most of the technical details are deferred to the Appendix. 
}

\section{Preliminaries and Notation}
\label{sec:Preliminaries}
Let $G$ be an undirected graph with nodes $\nodes(G)$ and edges $\edges(G)$, where $n=|\nodes(G)|$, and $m=|\edges(G)|$. We assume, without loss of generality, that $G$ is connected. 
For $A,B\subseteq \nodes(G)$, we abbreviate $AB\eqdef A\cup B$; for $v\in \nodes(G)$ we abbreviate $vA\eqdef \set{v}\cup A$. Let $v\in V$. We denote by $N_G(v)\eqdef\set{u\in\nodes(G) : (u,v)\in \edges(G)}$ the neighborhood of $v$, and by $N_G[v]\eqdef N_G(v)\cup \set{v}$ the \e{closed} neighborhood of $v$.
For a subset of vertices $T\subseteq \nodes(G)$, we denote by $N_G(T)\eqdef \bigcup_{v\in T}N_G(v){\setminus}T$, and $N_G[T]\eqdef N_G(T)\cup T$.
We denote by $G[T]$ the subgraph of $G$ induced by $T$. Formally, $\nodes(G[T])=T$, and $\edges(G[T])=\set{(u,v)\in \edges(G): \set{u,v}\subseteq T}$. For a subset $S\subseteq \nodes(G)$, we abbreviate $G\sminus S\eqdef G[\nodes(G){\setminus} S]$; for $v\in \nodes(G)$, we abbreviate $G\sminus v\eqdef G\sminus \set{v}$.
We say that $G'$ is a \e{subgraph} of $G$ if it results from $G$ by removing vertices and edges; formally, $\nodes(G')\subseteq \nodes(G)$ and $\edges(G')\subseteq \edges(G)$. In that case, we also say that $G$ is a \e{supergraph} of $G'$.

Let $e=(u,v)\in \edges(G)$ be an edge of $G$. The \e{contraction} of $e$ results in a new graph $G'$, where $u$ and $v$ are identified with a new vertex $w_e$ that is adjacent to $N_G(u)\cup N_G(v)$. Formally,
$\nodes(G')=\nodes(G){\setminus}\set{u,v}\cup \set{w_e}$, and $\edges(G')=\edges(G)\setminus \set{e}\cup \set{(w_e,y):y\in N_G(u)\cup N_G(v)}$. The \e{contraction} of $e$ to vertex $u$ results in the graph $G'$, where $v$ is identified with $u$ that is adjacent to $N_G(u)\cup N_G(v)$. Formally, $\nodes(G')=\nodes(G){\setminus}\set{v}$, and $\edges(G')=\edges(G)\setminus \set{e}\cup \set{(u,y):y\in N_G(v)}$.
\eat{
Let $T\subseteq \nodes(G)$, and $t\in \nodes(G)$. 
By \e{merging} $T$ into  vertex $t\in \nodes(G)$, we refer to the operation that adds an edge between $t$ and every vertex in $N_G[T]$. Formally, merging $T$ to $t$ results in the graph $G'$ where: 
\begin{align}
	\nodes(G')\eqdef \nodes(G) && \edges(G')\eqdef \edges(G)\cup \set{(t,u):u\in N_G[T]} \label{eq:mergeDef}
\end{align}
}
\eat{
\begin{definition}
	\label{def:merge}
	Let $T\subseteq \nodes(G)$, and let $t\in \nodes(G)$. By \e{merging} $T$ to vertex $t$ we refer to the graph $G'$ that results from $G$ by adding all edges between $t$ and $N_G[T]$. Formally:
	\begin{align*}
		\nodes(G')\eqdef \nodes(G) && \edges(G')\eqdef \edges(G)\cup \set{(t,u):u\in N_G[T]}
	\end{align*}
\end{definition}

We note the distinction from \e{vertex contraction} or \e{vertex identification}\footnote{\href{https://mathworld.wolfram.com/VertexContraction.html}{https://mathworld.wolfram.com/VertexContraction.html}} where the vertices of $T$ are replaced by a single vertex $t$ that is made adjacent to $N_G(T)$.}\eat{
In this paper, we will denote by $G^{A,B}$ the graph that results from $G$ by merging the vertex-set $A\subseteq \nodes(G)$ to vertex $s$, and $B\subseteq \nodes(G)$ to vertex $t$.}
 \eat{
In case $T=\set{u,t}$ where $e=(u,t)\in \edges(G)$, this process is called \e{contracting} $e$ in $G$, and the resulting graph is referred to as $G/e$.
We observe that if $A\subseteq \nodes(G)$ such that $G[A]$ is connected, then merging all vertices in $A$ (to vertex $a$) is equivalent to contracting all edges in $G[A]$.
}
\eat{A graph $H$ is a \e{minor} of $G$ if it can be obtained from $G$ by a series of edge deletions, vertex deletions, and edge contractions.}

\eat{
A graph $G$ is \e{planar} if it can be embedded in the plane, i.e., it can be drawn on the plane in such a way that its edges intersect only at their endpoints. A well known characterization due to Wagner is that a graph $G$ is planar if and only if it does not have $K_5$ or $K_{3,3}$ as a minor, where $K_5$ is the complete graph on $5$ vertices, and $K_{3,3}$ is the complete bipartite graph $(V_1,V_2,E)$ where $|V_1|=|V_2|=3$. In this paper, we consider the strict superset of planar graphs that do not contain $K_{3,3}$ as a minor, but may contain a $K_5$ as a minor.
}

Let $u,v \in \nodes(G)$. A \e{simple path} between $u$ and $v$, called a $uv$-path, is a finite sequence of distinct vertices $u=v_1,\dots,v_k=v$ where, for all $i\in [1,k-1]$, $(v_{i},v_{i+1})\in \edges(G)$, and whose ends are $u$ and $v$. A $uv$-path is \e{chordless} or \e{induced} if $(v_i,v_j)\notin \edges(G)$ whenever $|i-j|>1$. 

A subset of vertices $V'\subseteq \nodes(G)$ is called a \e{connected component} of $G$ if $G[V']$ contains a  path between every pair of vertices in $V'$. Let $V_1,V_2\subseteq \nodes(G)$ denote two disjoint vertex subsets of $\nodes(G)$. We say that $V_1$ and $V_2$ are adjacent if there is at least one pair of adjacent vertices $v_1 \in V_1$ and $v_2\in V_2$. We say that there is a path between $V_1$ and $V_2$ if there exist vertices $v_1 \in V_1$ and $v_2\in V_2$ such that there is a path between $v_1$ and $v_2$.

\subsection{Minimal Separators} 
\label{sec:minimalSeparators}
Let $s,t \in 
\nodes(G)$. 
For $X \subseteq \nodes(G)$, we let $\cc(G\sminus X)$ denote the set of connected components of $G\sminus X$. The vertex set $X$ is called a \e{separator} of $G$ if $|\cc(G\sminus X)|\geq 2$, an \e{$s,t$-separator} if $s$ and $t$ are in different connected components of $\cc(G\sminus X)$, and a \e{minimal $s,t$-separator} if no proper subset of $X$ is an $s,t$-separator of $G$. For an $s,t$-separator $X$, we denote by $C_s(G\sminus X)$ and $C_t(G\sminus X)$ the connected components of $\cc(G\sminus X)$ containing $s$ and $t$ respectively.
In other words, $C_s(G\sminus X)=\set{v\in \nodes(G): \text{ there is a path from }s\text{ to }v\text{ in }G\sminus X}$.
\begin{citedlemma}{\cite{DBLP:journals/ijfcs/BerryBC00}}\label{lem:fullComponents}
	An $s,t$-separator $X\subseteq \nodes(G)$ is a minimal $s,t$-separator if and only if $N_G(C_s(G\sminus X))=N_G(C_t(G\sminus X))=X$. \eat{, in which case $C_s(G\sminus X)$ and $C_t(G\sminus X)$ are called \e{full components} of $\cc(G\sminus X)$.}\eat{ 
		there are two connected components $C_s\eqdef C_s(G,X),C_t\eqdef C_t(G,X)\in \cc_G(X)$, such that $s\in C_s$, $t \in C_t$, and $N_G(C_s)=N_G(C_t)=X$; $C_s$ and $C_t$ are called \e{full components} of $\cc_G(X)$.}
\end{citedlemma}
A subset $X\subseteq \nodes(G)$ is a \e{minimal separator} if there exist a pair of vertices $u,v \in \nodes(G)$ such that $X$ is a minimal $u,v$-separator. A connected component $C\in \cc(G\sminus X)$ is called a \e{full component} of $X$ if $N_G(C)=X$. By Lemma~\ref{lem:fullComponents}, $X$ is a minimal $u,v$-separator if and only if the components $C_u(G\sminus X)$ and $C_v(G\sminus X)$ are full components of $X$.
We denote by $\minlsepst{G}$ the set of minimal $s,t$-separators of $G$, and by $\minlsep{}{G}$ the set of minimal separators of $G$.

When $S\in \minlsepst{G}$ where $S\subseteq N_G(s)$, then we say that $S$ is \e{close to} $s$~\cite{DBLP:journals/siamcomp/KloksK98}.
\begin{citedlemma}{\cite{DBLP:journals/siamcomp/KloksK98}}\label{lem:closeTos}
If $s$ and $t$ are non-adjacent, then there exists exactly one minimal $s,t$-separator that is close to $s$.
\end{citedlemma}
\begin{lemma}\label{lem:closeToImplication}
	Let $S\in \minlsepst{G}$ where $S\subseteq N_G(s)$. For every $T\in \minlsepst{G}$, it holds that $C_s(G\sminus S)\subseteq C_s(G\sminus T)$.
\end{lemma}
\begin{proof}
	Since $S\subseteq N_G(s)$, then $S\subseteq N_G(s) \subseteq  T\cup C_s(G\sminus T)$. Hence, $C_s(G\sminus S)  \subseteq C_s(G\sminus T)$.
\end{proof}

\subsection{Important Minimal Separators}
The notion of \e{important separators} has been applied to the design of various fixed-parameter tractable algorithms~\cite{DBLP:conf/wg/Marx11}.
\begin{citeddefinition}{\cite{DBLP:books/sp/CyganFKLMPPS15}}
	Let $S\subseteq \nodes(G)$. We say that $S$ is an \e{important $s,t$-separator} if $S\in \minlsepst{G}$, and for any other $S'\in \minlsepst{G}$ it holds that:
	\begin{align*}
		C_s(G\sminus S') \subset C_s(G\sminus S) \Longrightarrow |S'|>|S| 
		\end{align*}
\end{citeddefinition}
In what follows, we denote by $\impsepst{G}$ the set of important $s,t$-separators, and by $\impsepstk{G}$ the set of important $s,t$-separators whose size is at most $k$.
\begin{citedtheorem}{\cite{DBLP:books/sp/CyganFKLMPPS15}}
Let $T$ be an important $s,t$-separator of smallest size. Then the following holds:
\begin{enumerate}
	\item $G$ contains exactly one important separator of minimum size. That is, $T$ is unique. 
	\item $T$ can be found in time $O(nT(n,m))$ where $n=|\nodes(G)|$, $m=|\edges(G)|$, and $T(n,m)$ is the time to find a minimum $s,t$-separator in $G$.
\end{enumerate}
\end{citedtheorem}

\begin{citedtheorem}{\cite{DBLP:books/sp/CyganFKLMPPS15}}
	\label{thm:importantSepsEnum}
	There are at most $4^k$ important $s,t$-separators of $G$ whose size is at most $k$, and there is an algorithm that outputs them in total time $O(n\cdot T(n,m)\cdot 4^k)$.
	\eat{There is an algorithm which enumerates all important $s,t$-separators of $G$ whose size is at most $k$, that runs in time $O(n\cdot T(n,m)\cdot 4^k)$.}
\end{citedtheorem}
\eat{
In this work, we also consider separators between vertex-sets. Let $A,B\subseteq \nodes(G)$ be disjoint and non-adjacent. A subset $X\subseteq \nodes(G)\setminus AB$ is called an \e{$A,B$-separator} if, in the graph $G\sminus X$, there is no path between $A$ and $B$; $X$ is a minimal $A,B$-separator if no proper subset of $X$ has this property. We denote by $\minlsep{A,B}{G}$  the set of minimal $A,B$-separators in $G$. We say that  $X\subseteq \nodes(G)\setminus AB$ is a \e{safe} $A,B$-separator if it is an $A,B$-separator, and the graph $G\sminus X$ contains two distinct connected components $C_A,C_B \in \cc(G\sminus X)$ where $A\subseteq C_A$ and $B \subseteq C_B$; $X$ is a minimal, safe $A,B$-separator if it is a safe $A,B$-separator, and no proper subset of $X$ is an $A,B$-separator.

Let $S,T\in \minlsep{}{G}$. We say that $S$ \e{crosses} $T$ if there are vertices $u,v\in T$, such that $S$ is a $u,v$-separator. Crossing is known to be a symmetric relation: $S$ crosses $T$ if and only if $T$ crosses $S$~\cite{DBLP:journals/dam/ParraS97}. Hence, if $S$ crosses $T$, we say that $S$ and $T$ are \e{crossing}, and denote this relationship by $S\sharp T$~\cite{DBLP:journals/dam/ParraS97}. When $S$ and $T$ are non-crossing, then we say that they are \e{parallel}, and denote this by $S\| T$. It immediately follows that if $S$ and $T$  are parallel, then $S\subseteq C_S\cup T$ for some connected component $C_S\in \cc(G\sminus T)$, and $T\subseteq C_T\cup S$ for some $C_T\in \cc(G\sminus S)$.
}
\eat{

\begin{citedlemma}{Submodularity, \cite{DBLP:books/sp/CyganFKLMPPS15}}
	\label{lem:submodularity}
	For any $X,Y \subseteq \nodes(G)$:
	\begin{equation}
		\nonumber
		|N_G(X)|+|N_G(Y)| \geq |N_G(X\cap Y)|+|N_G(X \cup Y)|
	\end{equation}
\end{citedlemma}
Let $S,T\in \minlsepst{G}$. From Lemma~\ref{lem:fullComponents}, we have that $S=N_G(C_s(G\sminus S))$, and $T=N_G(C_s(G\sminus T))$. 
Consequently, we will usually apply Lemma~\ref{lem:submodularity} as follows.
\begin{corollary}
	\label{corr:submodularity}
	Let $S,T\in \minlsepst{G}$ then:
	\begin{align*}
		|S|+|T|\geq |N_G(C_s(G\sminus S)\cap C_s(G\sminus T))|+ |N_G(C_s(G\sminus S)\cup C_s(G\sminus T))|
	\end{align*}
\end{corollary}

Following Kloks and Kratsch~\cite{DBLP:journals/siamcomp/KloksK98}, we say that a minimal $s,t$-separator $S\in \minlsepst{G}$ is \e{close} to $s$ if $S\subseteq N_G(s)$.

\begin{citedlemma}{\cite{DBLP:journals/siamcomp/KloksK98}}
	\label{lem:uniqueCloseVertex}
	If $s$ and $t$ are non-adjacent, then there exists exactly one minimal $s,t$-separator $S\in \minlsepst{G}$ that is close to $s$, which can be found in polynomial time.
\end{citedlemma}
}

\eat{
Let $S,T\in \minlsep{}{G}$ be two minimal separators of
$G$. We say that $S$ \e{crosses} $T$ if there are vertices $u$ and $v$ in $T$, such
that $S$ is a $u,v$-separator. Crossing is known to be a symmetric
relation: $S$ crosses $T$ if and only if $T$ crosses $S$~\cite{DBLP:journals/dam/ParraS97}. 
Hence, if $S$ crosses
$T$, we say that $S$ and $T$ are \e{crossing}, and denote this relationship by $S\sharp T$~\cite{DBLP:journals/dam/ParraS97}.
It follows from this definition, and the fact that crossing is a symmetric relationship, that if $S\sharp T$ then there exist two connected components $C_1,C_2\in \cc_G(S)$ such that $C_1\cap T\neq \emptyset$, and $C_2\cap T\neq \emptyset$. 
When $S$ and $T$
are non-crossing, then we say that they are \e{parallel}. It immediately follows that if $S$ and $T$ are parallel (non-crossing) then $S \subseteq C_S\cup T$ for some connected component $C_S \in \cc_G(T)$ and $T \subseteq C_T \cup S$ for some connected component $C_T \in \cc_G(S)$. We denote by $S \| T$ the fact that $S$ and $T$ are parallel minimal separators.

\begin{lemma}
	\label{lem:parallelComponent}
	Let $S, T\in \minlsepst{G}$ be distinct minimal $s,t$-separators, such that  $S \| T$. Then $T\subseteq S \cup C_s(G,S)$ or $T\subseteq S \cup C_t(G,S)$.
\end{lemma}
\begin{proof}
	Since $S \| T$, then by definition, there exists a connected component $C_T\in \cc_G(S)$ such that $T\subseteq C_T\cup S$. Suppose, by way of contradiction, that $C_T\notin \set{C_s(G,S),C_t(G,S)}$. Hence, $C_T\cap (C_s(G,S)\cup C_t(G,S))=\emptyset$. By Lemma~\ref{lem:fullComponents}, $S=N_G(C_s(G,S))=N_G(C_t(G,S))$. Since $T$ separates $s$ from $t$, and $T\cap (C_s(G,S)\cup C_t(G,S))=\emptyset$, then $T\supseteq S$. Since $T\neq S$, then $T \notin \minlsepst{G}$, and we arrive at a contradiction.
\end{proof}
}

\eat{
\batya{remove}
\begin{definition}
	\label{def:2conn}
	We say that a graph $G$ has the \e{two-component-property} if, for every pair of non-adjacent vertices ${u,v}\in \nodes(G)$, it holds that $|\cc(G,S)|=2$ for every $S\in \minlsep{uv}{G}$.
\end{definition}
}

\subsection{Minimum Separators}
\label{sec:minseps}
A subset $S \subseteq \nodes(G)$ is a \e{minimum $s,t$-separator} of $G$ if $|S'|\geq |S|$ for every other $s,t$-separator $S'$.  We denote by $\kappa_{s,t}(G)$ the size of a minimum $s,t$-separator of $G$, and by $\minsepst{G}$ the set of all minimum $s,t$-separators of $G$; $\kappa_{s,t}(G)$ is called the $s,t$-\e{connectivity} of $G$.\eat{
Similarly, for $A,B\subseteq \nodes(G)$ that are disjoint and non-adjacent, we say that a subset $X\subseteq \nodes(G){\setminus} AB$ is a minimum $A,B$-separator if, for every $A,B$-separator $S$, it holds that $|X|\leq |S|$. We denote by $\minsep_{A,B}(G)$ the set of minimum $A,B$-separators, and by $\kappa_{A,B}(G)$ their size. We say that $X$ is a minimum, safe $A,B$-separator if it is a safe $A,B$-separator of minimum size.}
Finding a minimum $s,t$-separator can be reduced, by standard techniques~\cite{DBLP:books/daglib/0032640}, to the problem of finding a\eat{ minimum $s,t$-edge-cut, which is equivalent to finding} a maximum flow in the graph~\cite{10.5555/1942094}. Currently, the fastest known algorithm for max-flow runs in almost linear time $m^{1+o(1)}$~\cite{Chen2022}.

\begin{citedtheorem}{Menger~\cite{DBLP:books/daglib/0030488}}
	\label{thm:Menger}
	Let $G$ be an undirected graph and $s,t \in \nodes(G)$. Then the minimum number of vertices separating $s$ from $t$ in $G$ is equal to the maximum number of internally vertex-disjoint $s,t$-paths in $G$.
\end{citedtheorem}

\eat{
Lemma~\ref{lem:vertexInclude} below defines a simple procedure for testing whether a distinguished vertex $v\in \nodes(G)$ belongs to some minimum $s,t$-separator. The lemma is crucial for the ranked enumeration algorithm, and its proof is deferred to Appendix~\ref{sec:minsepsvertexsets}.
\def\vertexIncludeLem{
		Let $v\in \nodes(G)$. There exists a minimum $s,t$-separator $S\in \minsepst{G}$ that contains $v$ if and only if $\kappa_{s,t}(G\sminus v)=\kappa_{s,t}(G)-1$.
}
\begin{lemma}
	\label{lem:vertexInclude}
\vertexIncludeLem
\end{lemma}
}

\section{A required Characterization of Minimal $s,t$-separators}
\label{sec:charMinlSeps}
In this section, we prove three useful lemmas concerning minimal $s,t$-separators that will be instrumental for the enumeration algorithm.
\eat{
\begin{lemma}
	\label{lem:belongtoCt}
	Let $S\in \minlsepst{G}$ such that $S\subseteq N_G(s)$, and let $u\in \nodes(G)$. If $u\in C_s(G\sminus S)$ then $u\in C_s(G\sminus T)$ for every $T\in \minlsepst{G}$.
\end{lemma}
\begin{proof}
	Since $u\notin C_s(G\sminus S)$, then every path from $u$ to $s$ passes through a vertex in $S$. Now, let $T\in \minlsepst{G}{\setminus} \set{S}$. Since $S\subseteq N_G(t)$, then $S\subseteq T\cup C_t(G\sminus T)$. Therefore, every path from a vertex in $S$ to $s$ passes through a vertex in $T$. Consequently, every path from $u$ to $s$, which passes through a vertex in $S$, must also pass through a vertex in $T$. Therefore, $u\notin C_s(G\sminus T)$.
\end{proof}
}
\begin{lemma}
	\label{lem:contract}
Let $v\in N_G(s)$, and let $G'$ denote the graph where $\nodes(G')=\nodes(G)$ and $\edges(G')=\edges(G)\cup \set{(s,y):y\in N_G(v)}$. Then $\minlsepst{G'}=\set{S\in \minlsepst{G}: v\notin S}$. 
\end{lemma}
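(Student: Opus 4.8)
The plan is to work entirely through the standard full-component characterization of minimal separators: for a graph $H$ with $s,t\in\nodes(H)$, a set $S$ is a minimal $s,t$-separator iff $s$ and $t$ lie in distinct connected components $C_s(H\sminus S)$ and $C_t(H\sminus S)$ of $H\sminus S$, and every $w\in S$ has a neighbor in each of these two components. The only difference between $G$ and $G'$ is that every edge of $\edges(G')\setminus\edges(G)$ has the form $(s,y)$ with $y\in N_G(v)$; in particular $N_{G'}(v)=N_G(v)$ (no added edge is incident to $v$), whereas $N_{G'}(s)$ and each $N_{G'}(y)$ for $y\in N_G(v)$ may grow. I would first isolate the one structural fact that does all the work, a \emph{component-invariance} claim: if $v\notin S$, then $G\sminus S$ and $G'\sminus S$ induce exactly the same partition of $\nodes(G)\setminus S$ into connected components. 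This holds because $v\in N_G(s)$ forces $v\in C_s(G\sminus S)$, so for any added edge $(s,y)$ with $y\notin S$ both endpoints $s$ and $y$ already lie in $C_s(G\sminus S)$ along the path $s-v-y$ (using $y\in N_G(v)$); hence no added edge merges two distinct components, and the partition is unchanged. Thus $C_s$ and $C_t$ coincide in $G$ and $G'$ whenever $v\notin S$.

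For $\minlsepst{G'}\subseteq \set{S\in\minlsepst{G}: v\notin S}$, fix $S\in\minlsepst{G'}$. I first show $v\notin S$: if $v\in S$, minimality in $G'$ yields a neighbor $y\in N_{G'}(v)\cap C_t(G'\sminus S)$, and since $N_{G'}(v)=N_G(v)$ we get $y\in N_G(v)$, so $(s,y)\in\edges(G')$. But then $s$ and $y$ are adjacent in $G'\sminus S$ while lying in $C_s(G'\sminus S)$ and $C_t(G'\sminus S)$ respectively, a contradiction. With $v\notin S$ established, the component-invariance claim applies. Since $\edges(G)\subseteq\edges(G')$, any set separating $s$ from $t$ in $G'$ separates them in $G$, so $S$ is an $s,t$-separator of $G$. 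For minimality in $G$, take $w\in S$ (so $w\neq s$); by minimality in $G'$, $w$ has a $G'$-neighbor $b\in C_t(G'\sminus S)=C_t(G\sminus S)$, and since $b\neq s$ this $b$ is a genuine $G$-neighbor, while $w$ also has a $G'$-neighbor $a\in C_s(G'\sminus S)=C_s(G\sminus S)$; if $a\neq s$ then $a\in N_G(w)$, and if $a=s$ arose from an added edge then $w\in N_G(v)$, so $v\in N_G(w)\cap C_s(G\sminus S)$ serves as the witness. Hence $w$ sees both components in $G$, and $S\in\minlsepst{G}$.

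For the reverse inclusion, fix $S\in\minlsepst{G}$ with $v\notin S$. By component invariance $S$ still separates $s$ from $t$ in $G'$; and for each $w\in S$ the $G$-witnesses lying in $C_s(G\sminus S)=C_s(G'\sminus S)$ and $C_t(G\sminus S)=C_t(G'\sminus S)$ remain $G'$-neighbors of $w$, since edges are only added. Therefore $S\in\minlsepst{G'}$.

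I expect the crux to be exactly two places. The first is proving $v\notin S$ for every $S\in\minlsepst{G'}$, which is where the added edges are used decisively (the adjacency $s$–$y$ cannot coexist with $y\in C_t$). The second, and the only genuinely delicate bookkeeping, is the forward-direction minimality transfer: because adding the edges $(s,y)$ enlarges the neighborhoods of $s$ and of the vertices in $N_G(v)$, a $G'$-minimality witness for $w\in S$ in $C_s$ might be the ``new'' edge to $s$, and the substitution of $v$ for $s$ as a witness — valid precisely because $v\in N_G(s)\cap C_s(G\sminus S)$ — is what keeps the argument correct. Everything else reduces to monotonicity of separation under edge addition together with the component-invariance claim.
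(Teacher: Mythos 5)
Your proof is correct, and it reaches the same two pivotal observations as the paper -- that $v\in C_s(G\sminus S)$ whenever $v\notin S$ (so the added edges $\set{(s,y):y\in N_G(v)}$ land inside $S\cup C_s(G\sminus S)$ and cannot create a new $s,t$-path), and that $v$ is dominated by $s$ in $G'$ (so no minimal separator of $G'$ can contain $v$) -- but it transfers minimality in the direction $\minlsepst{G'}\subseteq\minlsepst{G}$ by a genuinely different mechanism. The paper never verifies the full-component condition in $G$ directly: given $T\in\minlsepst{G'}$, it passes to a minimal subset $T'\subseteq T$ with $T'\in\minlsepst{G}$, notes $v\notin T'$, feeds $T'$ back through the already-proved inclusion $\set{S\in\minlsepst{G}:v\notin S}\subseteq\minlsepst{G'}$, and concludes $T'=T$ from the minimality of $T$ in $G'$. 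You instead prove an explicit component-invariance claim (the partitions of $\nodes(G)\setminus S$ induced by $G\sminus S$ and $G'\sminus S$ coincide when $v\notin S$) and then check the full-component characterization by hand, which forces the witness-substitution step you flag: when a vertex $w\in S$ sees $C_s(G'\sminus S)$ only through an added edge $(w,s)$, you replace the witness $s$ by $v\in N_G(w)\cap C_s(G\sminus S)$. Your route costs a little more bookkeeping but yields a reusable structural fact (the components do not change), whereas the paper's bootstrap is shorter and sidesteps the witness analysis entirely. Both arguments are sound; the paper's statement $N_{G'}(v)\subseteq N_{G'}(s)$ is the same domination fact you use, just phrased as an inclusion of neighborhoods rather than as an adjacency contradiction with a $C_t$-witness.
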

\begin{proof}
Let $S\in \minlsepst{G}$ such that $v\notin S$. Since $v\in N_G(s)$, then $v\in C_s(G\sminus S)$, and hence $N_G(v)\subseteq S\cup C_s(G\sminus S)$. Therefore, $S$ is an $s,t$-separator in $G'$ as well. Since $\edges(G')\supseteq \edges(G)$, then $S\in \minlsepst{G'}$. Now, let $T\in \minlsepst{G'}$. Since $N_{G'}(v) \subseteq N_{G'}(s)$, and $v\in N_{G'}(s)$, then $v\in C_s(G'\sminus T)$, and in particular, $v\notin T$. Since $\edges(G')\supseteq \edges(G)$, then $T$ separates $s$ from $t$ in $G$. Therefore, there is a subset $T'\subseteq T$ such that $T'\in \minlsepst{G}$. Since $v\notin T$, then $v\notin T'$. Hence, $T'\in \set{S\in \minlsepst{G}: v\notin S}$. By the previous, $T'\in \minlsepst{G'}$. If $T'\subset T$, we arrive at a contradiction that $T\in \minlsepst{G'}$. Therefore, $T'=T$, and hence $T\in \minlsepst{G}$ as required.
\end{proof}
\begin{lemma}
	\label{lem:parallel1}
Let $S,T\in \minlsepst{G}$. Then:
\begin{align*}
	C_s(G\sminus S) \subseteq C_s(G\sminus T) \text{ if and only if } T\subseteq S\cup C_t(G\sminus S)
\end{align*}
\end{lemma}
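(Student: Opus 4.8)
The plan is to prove the two implications separately, relying throughout on the standard characterization of minimal $s,t$-separators recalled in the preliminaries: for $S\in\minlsepst{G}$, both $C_s(G\sminus S)$ and $C_t(G\sminus S)$ are \emph{full} components, i.e.\ $N_G(C_s(G\sminus S))=N_G(C_t(G\sminus S))=S$. I would also use repeatedly the elementary fact that a vertex of $C_s(G\sminus T)$ cannot be adjacent to a vertex of $C_t(G\sminus T)$, since these are distinct connected components of $G\sminus T$.

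For the ``if'' direction I would argue directly. Assuming $T\subseteq S\cup C_t(G\sminus S)$, I would observe that $C_s(G\sminus S)$, being a component of $G\sminus S$ different from $C_t(G\sminus S)$, is disjoint from $S$ and from $C_t(G\sminus S)$, hence disjoint from $T$. Thus $C_s(G\sminus S)$ is a connected subset of $G\sminus T$ containing $s$, so it is contained in the component of $s$ in $G\sminus T$; that is, $C_s(G\sminus S)\subseteq C_s(G\sminus T)$. This direction should be essentially immediate.

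For the ``only if'' direction I would assume $C_s(G\sminus S)\subseteq C_s(G\sminus T)$ and take an arbitrary $w\in T$, aiming to place $w$ in $S\cup C_t(G\sminus S)$; the case $w\in S$ is trivial, so the work is when $w\notin S$. The crux of the whole argument is the intermediate claim that $C_t(G\sminus T)\cap S=\emptyset$. To prove it, I would take $u\in S$; since $N_G(C_s(G\sminus S))=S$, the vertex $u$ has a neighbor in $C_s(G\sminus S)\subseteq C_s(G\sminus T)$, so $u$ is adjacent to $C_s(G\sminus T)$, and therefore $u$ cannot lie in $C_t(G\sminus T)$ by the elementary fact above.

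Granting the claim, I would finish as follows: since $T$ is minimal, $C_t(G\sminus T)$ is full, so $w$ has a neighbor $b\in C_t(G\sminus T)$, and there is a path from $b$ to $t$ inside the connected set $C_t(G\sminus T)$. Prepending $w$ gives a walk from $w$ to $t$ whose vertices other than $w$ all lie in $C_t(G\sminus T)$ and hence avoid $S$ by the claim; as $w\notin S$ too, the whole walk avoids $S$, placing $w$ in the same component of $G\sminus S$ as $t$, i.e.\ $w\in C_t(G\sminus S)$, which gives $T\subseteq S\cup C_t(G\sminus S)$. I expect the main obstacle to be isolating and proving the intermediate claim $C_t(G\sminus T)\cap S=\emptyset$, as it is precisely what allows the $t$-side connectivity of $T$ to be transported across to the separation by $S$; once it is in place, the rest reduces to routine connectivity arguments.
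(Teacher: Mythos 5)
Your proposal is correct and follows essentially the same route as the paper: the ``if'' direction via disjointness of $T$ from $C_s(G\sminus S)$, and the ``only if'' direction via the fullness facts $S=N_G(C_s(G\sminus S))$ and $T=N_G(C_t(G\sminus T))$, with your intermediate claim $C_t(G\sminus T)\cap S=\emptyset$ being exactly the paper's step that $S\subseteq T\cup C_s(G\sminus T)$. The only cosmetic difference is that you finish with an explicit walk from $w$ through $C_t(G\sminus T)$ to $t$, where the paper instead notes that $C_t(G\sminus T)$ stays connected in $G\sminus S$ and applies the neighborhood characterization once more.
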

\begin{proof}
If $T\subseteq S\cup C_t(G\sminus S)$, then by definition $T\cap C_s(G\sminus S)=\emptyset$. Therefore, $C_s(G\sminus S)$ remains connected in $G\sminus T$. This means that $C_s(G\sminus S)\subseteq C_s(G\sminus T)$.  

Now, suppose that $C_s(G\sminus S) \subseteq C_s(G\sminus T)$. By Lemma~\ref{lem:fullComponents}, it holds that $S=N_G(C_s(G\sminus S))$. Since $C_s(G\sminus S) \subseteq C_s(G\sminus T)$, then $S=N_G(C_s(G\sminus S))\subseteq T\cup C_s(G\sminus T)$. Since $S\subseteq T\cup C_s(G\sminus T)$ then by definition it holds that $S\cap C_t(G\sminus T)=\emptyset$. This, in turn, implies that $C_t(G\sminus T)$ remains connected in $G\sminus S$. In particular, we have that $C_t(G\sminus T)\subseteq C_t(G\sminus S)$. By Lemma~\ref{lem:fullComponents}, it holds that $T=N_G(C_t(G\sminus T))$. Since $C_t(G\sminus T)\subseteq C_t(G\sminus S)$, then $T=N_G(C_t(G\sminus T))\subseteq S\cup C_t(G\sminus S)$.
\end{proof}

\begin{lemma}
	\label{lem:Hgraph}
	Let $S\in \minlsepst{G}$, and let $H_S$ be the graph that results from $G$ by adding all edges from $s$ to $S$. That is, $\edges(H_S)=\edges(G)\cup \set{(s,v):v\in S}$. Then:
	\begin{align*}
		\minlsepst{H_S}=\set{Q\in \minlsepst{G}: Q\subseteq S\cup C_t(G\sminus S)}
	\end{align*}
\end{lemma}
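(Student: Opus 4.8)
The plan is to prove the equality by double inclusion, leaning on two observations. The first is purely structural: every edge added to form $H_S$ is incident to a vertex of $S$, so $H_S\sminus S = G\sminus S$. Consequently the connected components of $G\sminus S$ are untouched, $S$ still separates $C_s(G\sminus S)$ from $C_t(G\sminus S)$ in $H_S$, and every $s,t$-path in $H_S$ must traverse $S$. The second tool is Lemma~\ref{lem:parallel1}, which lets me translate the containment $Q\subseteq S\cup C_t(G\sminus S)$ into the component inclusion $C_s(G\sminus S)\subseteq C_s(G\sminus Q)$, i.e. into the statement that $Q$ avoids $C_s(G\sminus S)$.

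For the inclusion $\supseteq$, I take $Q\in\minlsepst{G}$ with $Q\subseteq S\cup C_t(G\sminus S)$. By Lemma~\ref{lem:parallel1}, $C_s(G\sminus S)\subseteq C_s(G\sminus Q)$; and since (Lemma~\ref{lem:fullComponents}) every $v\in S$ has a neighbour in $C_s(G\sminus S)$, each $v\in S\sm Q$ already lies in $C_s(G\sminus Q)$. Hence the new edges $(s,v)$ with $v\in S\sm Q$ join $s$ only to vertices already in its own component, so $G\sminus Q$ and $H_S\sminus Q$ have the same components and $Q$ still separates $s$ from $t$ in $H_S$. Minimality transfers for free: because $\edges(H_S)\supseteq\edges(G)$, any $s,t$-separator of $H_S$ contained in $Q$ is also one of $G$, forcing it to equal $Q$.

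The harder inclusion $\subseteq$ is where the work lies. Given $Q\in\minlsepst{H_S}$, the crux is to show $Q\cap C_s(G\sminus S)=\emptyset$, and I would argue this by contradiction from minimality. If $Q$ met $C_s(G\sminus S)$, I claim $Q^{-} := Q\sm C_s(G\sminus S)$ would still separate $s$ from $t$ in $H_S$, contradicting minimality since $Q^{-}\subsetneq Q$. Indeed, take any $s,t$-path $P$ in $H_S\sminus Q^{-}$; it must cross $S$, so let $v$ be the last vertex of $S$ along $P$. The suffix of $P$ from $v$ to $t$ avoids $S$, hence lives in a single component of $G\sminus S$, namely $C_t(G\sminus S)$, which is disjoint from $C_s(G\sminus S)$; therefore this suffix avoids $Q$ (on $C_t(G\sminus S)$ the sets $Q$ and $Q^{-}$ coincide), and $v\notin Q$ as well. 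Using the added edge $(s,v)$ of $H_S$, I splice $s\to v\to\cdots\to t$ into an $s,t$-path avoiding all of $Q$ — contradicting that $Q$ separates $s$ from $t$ in $H_S$. This shortcut step, which exploits that in $H_S$ every vertex of $S$ is directly adjacent to $s$, is the main obstacle.

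Finally I bootstrap. Having established $Q\cap C_s(G\sminus S)=\emptyset$, I pick a minimal $s,t$-separator $Q'\subseteq Q$ of $G$ (one exists since $Q$ separates $s,t$ in the subgraph $G$). Then $Q'\cap C_s(G\sminus S)=\emptyset$ too, so $C_s(G\sminus S)\subseteq C_s(G\sminus Q')$, and Lemma~\ref{lem:parallel1} gives $Q'\subseteq S\cup C_t(G\sminus S)$. By the already-proved inclusion $\supseteq$, this means $Q'\in\minlsepst{H_S}$; since $Q'\subseteq Q$ and $Q$ is minimal in $H_S$, we conclude $Q'=Q$. Thus $Q\in\minlsepst{G}$ and $Q\subseteq S\cup C_t(G\sminus S)$, which completes the argument.
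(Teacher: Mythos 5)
Your proof is correct, and its overall architecture --- double inclusion, with Lemma~\ref{lem:parallel1} translating between the containment $Q\subseteq S\cup C_t(G\sminus S)$ and the component inclusion $C_s(G\sminus S)\subseteq C_s(G\sminus Q)$ --- matches the paper's; the $\supseteq$ direction is essentially identical. The difference is in the $\subseteq$ direction: the paper gets the key fact $C_s(H_S\sminus S)\subseteq C_s(H_S\sminus Q)$ for free by citing Lemma~\ref{lem:closeToImplication} (a minimal separator contained in $N(s)$ is $\preceq$-minimal, applied to $S\subseteq N_{H_S}(s)$), whereas you re-prove exactly that instance from scratch via the path-splicing argument that shortcuts through the added edge $(s,v)$ at the last $S$-vertex $v$ of a putative path; your argument is sound (the suffix after $v$ lives in $H_S\sminus S=G\sminus S$, hence in $C_t(G\sminus S)$, so it misses $Q\cap C_s(G\sminus S)$), it is just less economical given the toolkit the paper has already set up. One point genuinely in your favour: the right-hand side of the lemma requires $Q\in\minlsepst{G}$, and the paper's proof only derives the containment $Q\subseteq S\cup C_t(G\sminus S)$, leaving minimality in $G$ implicit; your bootstrap --- pass to a minimal $Q'\subseteq Q$ in $G$, apply Lemma~\ref{lem:parallel1} and the already-proved $\supseteq$ inclusion to place $Q'$ in $\minlsepst{H_S}$, and conclude $Q'=Q$ by minimality of $Q$ in $H_S$ --- closes that gap cleanly.
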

\begin{proof}
	Let $Q\in \minlsepst{G}$ where $Q\subseteq S\cup C_t(G\sminus S)$. Since $Q\cap C_s(G\sminus S)=\emptyset$, then $C_s(G\sminus S)$ remains connected in $G\sminus Q$. Therefore, $C_s(G\sminus S) \subseteq C_s(G\sminus Q)$. By Lemma~\ref{lem:fullComponents}, $S=N_G(C_s(G\sminus S))$. Since $C_s(G\sminus S) \subseteq C_s(G\sminus Q)$, then $S=N_G(C_s(G\sminus S))\subseteq C_s(G\sminus Q)\cup Q$. In particular, $S\cap C_t(G\sminus Q)=\emptyset$. Consequently, $Q$ separates $C_t(G\sminus Q)$ from $s$ in $H_S$ as well. That is, $Q$ is an $s,t$-separator in $H_S$. Since $\edges(H_S)\supseteq \edges(G)$, then $Q\in \minlsepst{H_S}$.
	
	Let $T\in \minlsepst{H_S}$. By construction, $S\in \minlsepst{H_S}$ where $S\subseteq N_H(s)$. By Lemma~\ref{lem:closeToImplication}, it holds that $C_s(H_S\sminus S)\subseteq C_s(H_S\sminus T)$. By Lemma~\ref{lem:parallel1}, it holds that $T\subseteq S\cup C_t(H_S\sminus S)$. Since, by construction, $C_t(H_S\sminus S)=C_t(G\sminus S)$, we get that $T\subseteq S\cup C_t(G\sminus S)$.
\end{proof}
\eat{
\begin{lemma}
	\label{lem:crossingCharacterizing}
	Let $S,T\in \minlsepst{G}$. Then $S\sharp T$ if and only if there are two vertices $a,b\in S$ such that $a\in C_s(G\sminus T)$ and $b\in C_t(G\sminus T)$.
\end{lemma}
\begin{proof}
	If $a\in C_s(G\sminus T)$ and $b\in C_t(G\sminus T)$, then by definition $S\sharp T$. 
	
	Now, suppose that $S\sharp T$. Then, by definition, $T\not\subseteq C_s(G\sminus S)\cup S$. Let $w\in T\setminus (C_s(G\sminus S)\cup S)$. We look at the induced graph $G\sminus (T{\setminus}\set{w})$. Since $T\in \minlsepst{G}$, then every $s,t$-path in $G\sminus  (T{\setminus} \set{w})$ passes through the vertex $w$, and there is at least one such $s,t$-path, that we call $P_w$. Since $\nodes(P_w)\cap (T{\setminus} \set{w})=\emptyset$, then $\nodes(P_w)\cap T=\set{w}$, and hence $\nodes(P_w)\subseteq C_s(G\sminus T)\cup T \cup C_t(G\sminus T)$. 
	In other words, $P_w=P_1-w-P_2$ where $P_1$ is the subpath of $P_w$ from $s$ to $w$ (not including $w$), and $P_2$ is the subpath of $P_w$ from $w$ to $t$ (not including $w$). In particular, $\nodes(P_1)\subseteq C_s(G\sminus T)$ and $\nodes(P_2)\subseteq C_t(G\sminus T)$. 
	
	Since $S\in \minlsepst{G}$, then $P_w$ must meet a vertex in $S$. That is, $\nodes(P_w)\cap S\neq \emptyset$. Let $a\in \nodes(P_w)\cap S$ be the vertex in $S$ that is closest to $s$ on the path $P_w$. We denote by $P_a$ the subpath from $s$ to $a$ (without the endpoint $a$) on $P_w$. Since $a$ is the first vertex on $P_w$ that belongs to $S$, then $\nodes(P_a)\subseteq C_s(G\sminus S)$. If $a\in \nodes(P_2)$, then $w$ appears before $a$ on the path $P_w$, which means that $w$ is on the subpath $P_a$. In other words, if $a\in \nodes(P_2)$, then $w\in \nodes(P_a)\subseteq C_s(G\sminus S)$. But this is a contradiction to the fact that $w\in T{\setminus} (C_s(G\sminus S)\cup S)$. If $a=w$, then $w\in S$, and we again arrive at a contradiction that $w\in T{\setminus} (C_s(G\sminus S)\cup S)$. Therefore, it must hold that $a\in \nodes(P_1)\subseteq C_s(G\sminus T)$.
	Showing that there exists a vertex $b\in S$ such that $b\in C_t(G\sminus T)$ is symmetrical, and done in the same way. 
\end{proof}
}
\eat{
\begin{lemma}	
		\label{lem:Sconstruct}
	Let $S\in \minlsepst{G}$, and let $\set{(u_1,v_1),\dots,(u_k,v_k)}$ be a fixed order of the set of non-adjacent vertices of $G[S]$. Then:
	\begin{equation}
		\label{eq:disjointUnion}
		\set{T\in \minlsepst{G}: T\sharp S}=\biguplus_{i=1}^k\left(\minlsepG{G}{1}{u_i,v_i}\cap \bigcap_{j=1}^{i-1}\minlsepG{G}{34}{u_j,v_j}\right)~\uplus~ \biguplus_{i=1}^k\left(\minlsepG{G}{2}{u_i,v_i}\cap \bigcap_{j=1}^{i-1}\minlsepG{G}{34}{u_j,v_j}\right)
	\end{equation}
\end{lemma}
\begin{proof}
	For every pair $u,v\in \nodes(G)$ it holds, by definition, that $\minlsepG{G}{1}{u,v}\cap \minlsepG{G}{2}{u,v}=\emptyset$, and $(\minlsepG{G}{1}{u,v}\cup \minlsepG{G}{2}{u,v})\cap \minlsepG{G}{34}{u,v}=\emptyset$. Therefore, the union in~\eqref{eq:disjointUnion} is disjoint.

	Let $T\subseteq \nodes(G)$ belong to the RHS of~\eqref{eq:disjointUnion}. By definition, $T\in \minlsepst{G}$, and there exists a pair of non-adjacent vertices $u_i,v_i\in S$ where $T\in \minlsepG{G}{1}{u_i,v_i}\cup \minlsepG{G}{2}{u_i,v_i}$. That is, $u_i$ and $v_i$ reside in distinct connected components of $G\sminus T$. Therefore, $T\in \minlsepst{G}$, and $T\sharp S$.
		
	Now, let $T\in \minlsepst{G}$ where $T\sharp S$. By Lemma~\ref{lem:crossingCharacterizing}, there is a pair of non-adjacent vertices $u,v\in S$ where $u\in C_s(G\sminus T)$ and $v\in C_t(G\sminus T)$. Let $(u_i,v_i)$  be the pair with the smallest index for which this condition holds. By definition, this means that $T\in \minlsepG{G}{1}{u_i,v_i}\cup \minlsepG{G}{2}{u_i,v_i}$, and that for every $j<i$ it holds that $T\in \minlsepG{G}{34}{u_j,v_j}$. That is, $T\in (\minlsepG{G}{1}{u_i,v_i}\cup \minlsepG{G}{2}{u_i,v_i})\cap \bigcap_{j=1}^{i-1}\minlsepG{G}{34}{u_j,v_j}$. This completes the proof.
\end{proof}
}
\eat{
\begin{theorem}
	\label{thm:sepsByGraphs}
\sepsByGraphsThm
\end{theorem}	
\def\addEdgeGsubsetG{
		Let $S\in \minlsepst{G}$, and let $u,v \in S$. Let $G'$ be the graph where $\nodes(G')=\nodes(G)$, and $\edges(G')=\edges(G)\cup \set{(u,v)}$. Then $\minlsepG{G}{34}{u,v}\eqdef \minlsepG{G}{3}{u,v}\cup \minlsepG{G}{4}{u,v}\subseteq \minlsepst{G'}$.
}
}
\eat{
\begin{lemma}
	\label{lem:addEdgeGsubsetG'}
\addEdgeGsubsetG
\end{lemma}
}
\eat{
\begin{proposition}
	\label{prop:uvMinlsepH}
	Let $S{\in} \minlsepst{G}$, and $u,v{\in} S$. Then $uv {\in} \minlsepst{H_{uv}}$, where $H_{uv}{\eqdef}G-(S{\setminus} uv)$.
\end{proposition}
\begin{proof}
	Since $S\in \minlsepst{G}$, then $uv$ separates $s$ from $t$ in $H_{uv}$. Assume, by way of contradiction, that $uv \notin \minlsepst{H_{uv}}$, and let $T\subset uv$ be an $s,t$-separator in $H_{uv}$. 
	But then, $(S\setminus uv)\cup T \subset (S\setminus uv)\cup uv=S$ separates $s$ from $t$ in $G$, which means that $S\notin \minlsepst{G}$;  a contradiction.
\end{proof}
}
\eat{
\begin{lemma}
	\label{lem:uvcross}
	Let $S\in \minlsepst{G}$, where $u,v \in S$, and $(u,v)\notin \edges(G)$. Define $H\eqdef G\sminus (S\setminus uv)$. If $T\in \minlsepG{G}{1}{u,v}\cup \minlsepG{G}{2}{u,v}$, then there exists a subset $T'\subseteq T$ such that:
	\begin{align}
		\label{eq:HsepLem}
		T'\in \minlsep{}{H} &&\mbox{ and } && T'\cap C_s(H\sminus uv)\neq \emptyset && \mbox{ and } && T'\cap C_t(H\sminus uv)\neq \emptyset
	\end{align}
	\eat{where $C_s(H,\sminus v)$ and $C_t(H\sminus uv)$ are the full components of $H\sminus uv$ containing $s$ and $t$, respectively.}
\end{lemma}
\begin{proof}
	First, by Proposition~\ref{prop:uvMinlsepH}, we have that $uv\in \minlsepst{H}$, and in particular that $uv\in \minlsep{}{H}$.
	
	Suppose wlog that $T\in \minlsepG{G}{1}{u,v}$. That is, $u\in C_s(G\sminus T)$ and $v\in C_t(G\sminus T)$. By Lemma~\ref{lem:fullComponents}, it holds that $T\in \minlsep{u,v}{G}$. Since $H$ is a subgraph of $G$, then $T\cap \nodes(H)$ is a $u,v$-separator in $H$. Let $T'\subseteq T\cap \nodes(H)$ be a minimal $u,v$-separator in $H$. So, we have that $uv\in \minlsepst{H}\subseteq \minlsep{}{H}$, and $T'\in \minlsep{u,v}{H}\subseteq \minlsep{}{H}$. By definition, we have that $T'\sharp uv$ in $H$.
	
	Since $uv\in \minlsepst{H}$, then there exist two internally-disjoint paths between $u$ and $v$: $P_1$ contained entirely in $C_s(H\sminus uv)$, and $P_2$ contained entirely in $C_t(H\sminus uv)$. Since $T'\in \minlsep{u,v}{H}$, and since $T'\subseteq T \subseteq \nodes(G)\setminus \set{s,t,u,v}$, then there exist vertices $v_1 \in T'\cap \nodes(P_1)$  and $v_2\in  T'\cap \nodes(P_2)$ where $v_1 \notin \set{s,u,v}$ and $v_2\notin \set{u,v, t}$.  Therefore, $v_1\in T'\cap C_s(H\sminus uv)$ and $v_2\in T'\cap C_s(H\sminus uv)$. This proves~\eqref{eq:HsepLem}.
\end{proof}
}
\eat{
\begin{lemma}
	\label{lem:addEdge}
	Let $S\in \minlsepst{G}$, and let $u,v \in S$. Let $G'$ be the graph where $\nodes(G')=\nodes(G)$, and $\edges(G')=\edges(G)\cup \set{(u,v)}$. Then $\minlsepst{G'}\subseteq \minlsepG{G}{34}{u,v}$.
\end{lemma}
\begin{proof}
	The claim is immediate if $(u,v)\in \edges(G)$. So, assume that $(u,v)\notin \edges(G)$.

	Let $T\in \minlsepst{G'}$. Since $\edges(G)\subseteq \edges(G')$, then $T$ is an $s,t$-separator in $G$. Suppose, by way of contradiction, that $T\notin \minlsepst{G}$. Let $J \subset T$ such that $J \in \minlsepst{G}$. Then $J \in \minlsepG{G}{1}{u,v}\cup \cdots \cup \minlsepG{G}{4}{u,v}$ (see~\eqref{eq:partitionminlsepsuv}). By Lemma~\ref{lem:addEdgeGsubsetG'}, we have that $\minlsepG{G}{3}{u,v}\cup \minlsepG{G}{4}{u,v}\subseteq \minlsepst{G'}$. Consequently, it must hold that $J\in \minlsepG{G}{1}{u,v}\cup \minlsepG{G}{2}{u,v}$. Otherwise, $J\in \minlsepst{G'}$, contradicting the minimality of $T \in \minlsepst{G'}$.
	
	Let $H\eqdef G\sminus(S\setminus uv)$ where, by Proposition~\ref{prop:uvMinlsepH}, $uv\in \minlsepst{H}$. Since $J\in \minlsepG{G}{1}{u,v}\cup \minlsepG{G}{2}{u,v}$, then by Lemma~\ref{lem:uvcross}, there exists a subset $J'\subseteq J\subset T$ where $J'\in \minlsep{}{H}$, $J'\cap C_s(H\sminus uv)\neq \emptyset$, and $J'\cap C_t(H\sminus uv)\neq \emptyset$. 
	Now, consider the graph $H'$ that results from $H$ by adding the edge $(u,v)$, or equivalently, from $G'$ by removing vertices $S\setminus uv$. Clearly, it holds that $C_s(H\sminus uv)=C_s(H'\sminus uv)$ and $C_t(H\sminus uv)=C_t(H'\sminus uv)$. Therefore, $J'\cap C_s(H'\sminus uv)\neq \emptyset$, and $J'\cap C_t(H'\sminus uv)\neq \emptyset$.
	By definition of crossing minimal separators, this means that $J'\sharp uv$ in $H'$. But this is impossible because $uv$ is a clique-minimal-separator in $H'$. Consequently, for every $J'\subseteq J\subset T$ where $J'\in \minlsep{}{H}$, it holds that $J'||uv$. But then, by Lemma~\ref{lem:uvcross}, $J\notin \minlsepG{G}{1}{u,v}\cup \minlsepG{G}{2}{u,v}$, and we arrive at a contradiction. Therefore, we get that $\minlsepst{G'}\subseteq \minlsepst{G}$. 
	
	Since $(u,v)\in \edges(G')$, then clearly, $\minlsepG{G}{1}{u,v}\cap \minlsepst{G'}=\emptyset$, and that $\minlsepG{G}{2}{u,v}\cap \minlsepst{G'}=\emptyset$. Consequently, by~\eqref{eq:partitionminlsepsuv}, and the fact that $\minlsepst{G'}\subseteq \minlsepst{G}$, we get that $\minlsepst{G'}\subseteq \minlsepG{G}{34}{u,v}$, as required.
\end{proof}
}
\def\lemaddEdges{
	Let $S\in \minlsepst{G}$ and let $\set{(u_1,v_1),\dots,(u_k,v_k)}$ be a non-empty subset of non-adjacent vertices of $G[S]$. Let $G'$ be the graph where $\nodes(G')=\nodes(G)$ and $\edges(G')=\edges(G)\cup \bigcup_{i=1}^k(u_i,v_i)$. Then $	\minlsepst{G'}=\bigcap_{i=1}^k\minlsepG{G}{34}{u_i,v_i}$.\eat{
	\begin{equation}
		\minlsepst{G'}=\bigcap_{i=1}^k\left(\minlsepG{G}{3}{u_i,v_i}\cup \minlsepG{G}{4}{u_i,v_i}\right)
	\end{equation}	
}
}

\eat{
\begin{proof}
	We first prove that $\minlsepst{G'}\subseteq \minlsepst{G}$ by induction on $k$.  For $k=1$, the claim follows immediately from Lemma~\ref{lem:addEdge}.
	Assume the claim holds for $\ell \leq k-1$, and we prove for $\ell=k$.
	
	Consider the graph $G''$ where $\nodes(G'')=\nodes(G)$ and $\edges(G'')=\edges(G)\cup \bigcup_{i=1}^{k-1}(u_i,v_i)$. We claim that $S\in \minlsepst{G''}$. Since every edge in $\edges(G'')\setminus \edges(G)$ is between a pair of vertices in $S$, then $C_s(G,S)=C_s(G'',S)$ and $C_t(G,S)=C_t(G'',S)$ remain separated in $G''$. Since $C_s(G,S)$ and $C_t(G,S)$ are full connected components with respect to $S$ in $G$, then $C_s(G'',S)$ and $C_t(G'',S)$ are full connected components with respect to $S$ in $G''$. By Lemma~\ref{lem:fullComponents}, $S\in \minlsepst{G''}$.
	\eat{
		remain separated in $G''$; in particular, $S$ is an $st$-separator of $G''$. If $S \notin \minlsepst{G''}$, then there is a subset $S'\subset S$ that separates $s$ from $t$ in $G''$. But since $\edges(G'')\supseteq \edges(G)$ then $S'$ is an $st$-separator of $G$. But then, $S\notin \minlsepst{G}$, which is a contradiction.
	}
	By the induction hypothesis,
	$\minlsepst{G''}\subseteq \minlsepst{G}$.
	Now, since $S\in \minlsepst{G''}$, and $(u_k,v_k)$ is a non-edge in $G''[S]$, then by Lemma~\ref{lem:addEdge}, it holds that 
	$\minlsepst{G'}\subseteq \minlsepst{G''}\subseteq \minlsepst{G}$.
	
	Now, let $T\in \minlsepst{G'}$. By the previous, $\minlsepst{G'}\subseteq \minlsepst{G}$, and hence $T\in \minlsepst{G}$. Suppose, by way of contradiction, that $T\notin \bigcap_{i=1}^k(\minlsepG{G}{3}{u_i,v_i}\cup \minlsepG{G}{4}{u_i,v_i})$. Let $i\in [1,k]$ such that $T\notin (\minlsepG{G}{3}{u_i,v_i}\cup \minlsepG{G}{4}{u_i,v_i})$.  By Proposition~\ref{prop:classifyuv}, it holds that $T\in \minlsepG{G}{1}{u_i,v_i}\cup \minlsepG{G}{2}{u_i,v_i}$. But since $(u_i,v_i)\in \edges(G')$, it means that $T$ does not separate $s$ from $t$ in $G'$. But then, $T\notin \minlsepst{G'}$, a contradiction. Therefore, $\minlsepst{G'}\subseteq \bigcap_{i=1}^k\left(\minlsepG{G}{3}{u_i,v_i}\cup \minlsepG{G}{4}{u_i,v_i}\right)$.
	
	Let $T\in \bigcap_{i=1}^{k}(\minlsepG{G}{3}{u_i,v_i} \cup \minlsepG{G}{4}{u_i,v_i})$, and let $C_s(G,T)$ and $C_t(G,T)$ denote the full connected components containing $s$ and $t$, respectively.
	By assumption, for all $(u_i,v_i)$, it holds that $T\in \minlsepG{G}{3}{u_i,v_i}\cup \minlsepG{G}{4}{u_i,v_i}$.
	If $T\in \minlsepG{G}{3}{u_i,v_i}$, then $u_i$ and $v_i$ reside in the same connected component, and hence $T$ separates $s$ from $t$ in the graph that results from $G$ by adding the edge $(u_i,v_i)$. If $T\in \minlsepG{G}{4}{u_i,v_i}$, then the edge added between $u_i$ and $v_i$ cannot be between nodes in $C_s(G,T)$ and $C_t(G,T)$. Hence, in this case as well, $T$ separates $s$ from $t$ in the graph that results from $G$ by adding the edge $(u_i,v_i)$. Overall, we have that $T$ separates $s$ from $t$ in $G'$ as well. If $T\notin \minlsepst{G'}$, then there is an $s,t$-separator $T'\in \minlsepst{G'}$ in $G'$ where $T'\subset T$. 
	Since $\edges(G) \subseteq \edges(G')$, then $T'$ separates $s$ from $t$ in $G$. But then, $T\notin \minlsepst{G}$, and we arrive at a contradiction. This completes the proof.
	\eat{
		We prove the other direction by induction on $k$ as well. The case for $k=1$ again follows directly from Lemma~\ref{lem:addEdge}. Assume the claim holds for $\ell \leq k-1$, and we prove for $\ell=k$. 
		Consider the graph $G''$ where $\nodes(G'')=\nodes(G)$ and $\edges(G'')=\edges(G)\cup \bigcup_{i=1}^{k-1}(u_i,v_i)$. By the induction hypothesis, we have that $\bigcap_{i=1}^{k-1}(\minlsep{3}{u_i,v_i} \cup \minlsep{4}{u_i,v_i})\subseteq \minlsepst{G''}$. 
		By the previous, we have that $S\in \minlsepst{G''}$. Since $(u_k,v_k)\notin \edges(G''[S])$, then by Lemma~\ref{lem:addEdgeCross}, we have that 
	}	
\end{proof}
}
\eat{

\begin{figure}[t]
	 \centering
	\begin{subfigure}{0.24\textwidth}
		\centering
		\includegraphics[width=1.0\textwidth]{G0.pdf}
		\caption{$G_0$ 	\label{fig:G0} }
	\end{subfigure}
	\begin{subfigure}{0.24\textwidth}
	\centering
	\includegraphics[width=1.0\textwidth]{G1.pdf}	
	\caption{$G_1$ \label{fig:G1} }
\end{subfigure}
	\begin{subfigure}{0.5\textwidth}
		\centering
		\includegraphics[width=1.0\textwidth]{H1H2.pdf} 
		\caption{$H_1$ and $H_2$ 	\label{fig:H1H2}}
	\end{subfigure}
\caption{Consider the graph $G_0$ in Figure~\ref{fig:G0}, and the minimum $s,t$-separator $S\eqdef dhg$. According to Theorem~\ref{thm:finalGraphCharCrossing}, $$\set{T\in \minlsepst{G_0}:T\sharp S}=\minlsepG{G_0}{1}{d,h}\cup \minlsepG{G_0}{2}{d,h}\cup \minlsepG{G_1}{1}{d,g}\cup \minlsepG{G_1}{2}{d,g}$$ where graph $G_1$ is presented in Figure~\ref{fig:G1}. According to Lemma~\ref{lem:addEdgesFromSTot}, $$\set{T\in \minlsepst{G}:T\| S}=\minlsepst{H_1}\cup \minlsepst{H_2}$$ where $H_1$ and $H_2$ are presented in Figure~\ref{fig:H1H2}. Observe that $\minlsepst{H_1}\cap \minlsepst{H_2}=S$.}
\end{figure}

}
\eat{
\subsection{A refined Characterization of Parallel Minimal $s,t$-separators}
\label{sec:charParallelSeps}
Given a minimal separator $S\in \minlsepst{G}$, the standard way to discover all minimal separators that are parallel to $S$ is to process the graph that results from $G$ by turning $G[S]$ to a clique~\cite{DBLP:journals/dam/CarmeliKKK21,DBLP:conf/pods/RavidMK19}. In lemma~\ref{lem:addEdgesFromSTot}, we propose a different approach that better fits our purposes of ranked enumeration. Proof is deferred to Section~\ref{sec:ProofsForParallelstSeps} of the Appendix.
\begin{lemma}
	\label{lem:addEdgesFromSTot}
	Let $S\in \minlsepst{G}$, and let $H_1$ ($H_2$) be the graph that results from $G$ by adding all edges from $S$ to $t$ (from $S$ to $s$). That is, $\edges(H_1)=\edges(G)\cup \set{(v,t):v\in S}$ and $\edges(H_2)=\edges(G)\cup \set{(s,v):v\in S}$. Then:
	\begin{align}
		\minlsepst{H_1}&=\set{Q\in \minlsepst{G}: Q\subseteq S\cup C_s(G\sminus S)} \label{eq:addEdgesStot1}\\
		\minlsepst{H_2}&=\set{Q\in \minlsepst{G}: Q\subseteq S\cup C_t(G\sminus S)}, \text{ and } \label{eq:addEdgesStot2}\\
		\minlsepst{H_1}\cup \minlsepst{H_2}&=\set{Q\in \minlsepst{G}:Q\|S} \label{eq:addEdgesStot3}
	\end{align}
\end{lemma}
}

\section{Listing small, minimal $s,t$-separators in FPT-delay}
\label{sec:theAlg}
Given two subsets $C_1,C_2 \subseteq \nodes(G)$, we say that they are \e{incomparable} if $C_1\not\subseteq C_2$ and $C_2\not\subseteq C_1$. 
We define a partial order $\preceq$ over the members of $\minlsepst{G}$. 
We say that $S \prec T$ if and only if $C_s(G\sminus S)\subset C_s(G\sminus T)$. Our enumeration algorithm will output all minimal $s,t$-separators whose size is at most $k$ according to the order $\prec$.

\begin{algserieswide}
	{H}{Algorithm for listing the minimal $s,t$-separators of $G$ whose size is at most $k$. \label{fig:rankedMinSeps}}
	\begin{insidealgwide}{SmallMinimalSeps}{$G$, $s$, $t$, $k$}
		\IF{$(s,t)\in \edges(G)$}
			\STATE Print $\bot$
			\RETURN
		\ENDIF
		\STATE $Q\gets \algname{PriorityQueue}(\preceq)$
		\STATE Compute $\impsepstk{G}$ \COMMENT{Takes time $O(n\cdot T(n,m)\cdot 4^k)$ (see Theorem~\ref{thm:importantSepsEnum})}
		\FOR{$S\in \impsepstk{G}$}
			\STATE $Q.\texttt{push}(S)$ \label{line:pushImportantG}
		\ENDFOR
		\WHILE{$Q$ is not empty}
			\STATE $S\gets Q.\texttt{pop}()$ \label{line:popQ}
			\STATE Print $S$ \label{line:printS}
			\STATE Let $H_S$ be the graph where $\nodes(H_S)=\nodes(G)$, and $\edges(H_S)=\edges(G)\cup \set{(s,v):v\in S}$ \label{line:generateHS}
			\FOR{$v\in S$}
				\STATE Let $H_S^v$ be the graph that results from $H_S$ by adding all edges $\set{(s,y):y\in N_{H_S}(v)}$. \label{line:HContractv}
				\STATE Compute $\impsepstk{H_S^v}$ 
				\FOR{$T\in \impsepstk{H_S^v}$}
				\IF{$T\notin Q$}
					\STATE $Q.\texttt{push}(T)$ \label{line:pushQ}
				\ENDIF
				\ENDFOR
			\ENDFOR
		\ENDWHILE	
	\end{insidealgwide}
\end{algserieswide}

\subsection{Proof of Correctness}
\label{sec:algCorrectness}
We begin by establishing that a subset $S\subseteq \nodes(G)$ is printed by the algorithm if and only if $S\in \minlsepst{G}$, and $|S|\leq k$. Then, we analyze the runtime. 
\begin{theorem}
If $S\subseteq \nodes(G)$ is printed by the algorithm, then $S\in \minlsepst{G}$ and $|S|\leq k$.
\end{theorem}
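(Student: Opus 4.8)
The plan is to prove the stronger statement that \emph{every} element ever pushed onto the priority queue $Q$---and hence every element popped and printed---is a minimal $s,t$-separator of $G$ of cardinality at most $k$. Since the subsets printed are exactly those popped from $Q$, this immediately yields the theorem. (The symbol $\bot$ emitted in the degenerate branch $(s,t)\in\edges(G)$ is not a vertex-subset, so it lies outside the scope of the claim; in the setting of Theorem~\ref{thm:fptDelay}, where $s,t$ are non-adjacent, this branch is never taken.) The argument is a strong induction on the order in which elements are pushed onto $Q$.

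For the base case, the elements pushed in the initial loop belong to $\impsepstk{G}$. Every important $s,t$-separator is in particular a minimal $s,t$-separator, and the members of $\impsepstk{G}$ have size at most $k$ by definition, so $\impsepstk{G}\subseteq\minlsepst{G}$ with every member of size at most $k$, as required.

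For the inductive step, consider an element $T$ pushed during the while loop. By construction $T\in\impsepstk{H_S^v}$ for some $S$ that was previously popped (hence previously pushed, strictly before $T$) and some $v\in S$. By the induction hypothesis applied to $S$ we have $S\in\minlsepst{G}$ and $|S|\leq k$. The graph $H_S$ is obtained from $G$ by adding all edges $\set{(s,v):v\in S}$, so since $S\in\minlsepst{G}$, Lemma~\ref{lem:Hgraph} gives
\begin{align*}
\minlsepst{H_S}=\set{Q\in\minlsepst{G}: Q\subseteq S\cup C_t(G\sminus S)}\subseteq\minlsepst{G}.
\end{align*}
Moreover, every $v\in S$ is adjacent to $s$ in $H_S$ (we inserted the edge $(s,v)$), and $H_S^v$ is exactly the graph obtained from $H_S$ by adding all edges $\set{(s,y):y\in N_{H_S}(v)}$, so Lemma~\ref{lem:contract} applied to $H_S$ and $v$ yields
\begin{align*}
\minlsepst{H_S^v}=\set{T'\in\minlsepst{H_S}: v\notin T'}\subseteq\minlsepst{H_S}.
\end{align*}
Composing the two inclusions gives $\minlsepst{H_S^v}\subseteq\minlsepst{H_S}\subseteq\minlsepst{G}$. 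The members of $\impsepstk{H_S^v}$ are important (hence minimal) $s,t$-separators of $H_S^v$ of size at most $k$, so $T\in\impsepstk{H_S^v}\subseteq\minlsepst{H_S^v}\subseteq\minlsepst{G}$ with $|T|\leq k$, closing the induction.

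This is essentially a bookkeeping induction, and I do not anticipate a serious technical obstacle. The one point requiring care is checking the hypotheses of the two lemmas at the precise graphs the algorithm constructs: that the popped $S$ is genuinely a member of $\minlsepst{G}$ (this is supplied by the induction hypothesis, which is exactly why the stronger ``every pushed element'' statement is needed rather than reasoning about a single printed $S$ in isolation), and that each $v\in S$ lies in $N_{H_S}(s)$ so that Lemma~\ref{lem:contract} is applicable to $H_S^v$. Both conditions follow directly from the construction of $H_S$, which adds precisely the edges $\set{(s,v):v\in S}$.
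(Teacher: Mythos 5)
Your proof is correct and follows essentially the same route as the paper's: both arguments reduce the claim to the chain $\impsepstk{H_S^v}\subseteq\minlsepst{H_S^v}\subseteq\minlsepst{H_S}\subseteq\minlsepst{G}$ obtained by applying Lemma~\ref{lem:Hgraph} and then Lemma~\ref{lem:contract}, after noting that every pushed set has size at most $k$. The only difference is organizational --- you run a direct strong induction on push order where the paper argues by contradiction from a first bad printed set --- which does not change the substance.
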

\begin{proof}
	We first note that every subset of vertices inserted into the queue (in lines~\ref{line:pushImportantG} and~\ref{line:pushQ}) has cardinality at most $k$, and separates $s$ from $t$. Therefore, we only need to show that every subset of vertices printed by the algorithm belongs to $\minlsepst{G}$.
	
Suppose, by way of contradiction, that this is not the case, and let $T\subseteq \nodes(G)$ be the first subset of vertices printed by the algorithm where $T\notin \minlsepst{G}$. Then $T$ must be inserted into the queue in line~\ref{line:pushQ}. Consider the set $S$ that was printed before $T$ is inserted into the queue. By our assumption $S\in \minlsepst{G}$. Therefore, $T\in \impsepstk{H_S^v}$, where $v\in S$. By Lemma~\ref{lem:Hgraph}, $\minlsepst{H_S}\subseteq \minlsepst{G}$. Since $v\in N_{H_S}(s)$, and $H_S^v$ is the graph that results from $H_S$ by adding all edges $\set{(s,y):y\in N_{H_S}(v)}$, by Lemma~\ref{lem:contract}, it holds that $\minlsepst{H_S^v}\subseteq \minlsepst{H_S}\subseteq  \minlsepst{G}$. 
Since $T\in \impsepstk{H_S^v}\subseteq \minlsepst{H_S^v}$, we get that $T\in \minlsepst{G}$, which brings us to a contradiction.
\end{proof}
We now prove that every minimal separator $T\in \minlsepst{G}$, where $|T|\leq k$, is printed by the algorithm.
\begin{lemma}
	\label{lem:mainCorrectness}
	Let $T\in \minlsepst{G}$ such that $|T|\leq k$. There exists a $S\in \impsepstk{G}$ such that $C_s(G\sminus S)\subseteq C_s(G\sminus T)$.
\end{lemma}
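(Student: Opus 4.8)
The plan is to realize $S$ as an extremal element of a suitable finite family and then check that this extremal choice is forced to be important. I will use the characterization (from the preliminaries) that $\impsepstk{G}$ is the set of \emph{important} $s,t$-separators of size at most $k$, i.e.\ those $S\in\minlsepst{G}$ with $|S|\le k$ for which there is \emph{no} $s,t$-separator $S'$ satisfying both $|S'|\le|S|$ and $C_s(G\sminus S')\subsetneq C_s(G\sminus S)$; in words, important separators are those pushed as close to $s$ as possible for their cardinality, consistent with the order $\preceq$ and the way the algorithm sweeps from small to large $s$-sides.

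First I would introduce the family
$\mathcal{F}=\set{S'}{S' \text{ is an } s,t\text{-separator},\ |S'|\le k,\ C_s(G\sminus S')\subseteq C_s(G\sminus T)}$,
which is nonempty since $T\in\mathcal{F}$, and finite. Among its members I pick $S^*$ whose $s$-side $C_s(G\sminus S^*)$ is inclusion-minimal. Then I pass to the canonical minimal separator of that side, $S\defeq N_G(A)$ where $A\defeq C_s(G\sminus S^*)$. Invoking the full-component characterization (Lemma~\ref{lem:fullComponents}), $S\in\minlsepst{G}$, and since $N_G(A)\subseteq S^*$ we get $|S|\le|S^*|\le k$. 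The crucial bookkeeping point is that $C_s(G\sminus S)=A$: because $S$ is exactly the neighborhood of $A$, no $s$-path can leave $A$ in $G\sminus S$, so removing the (possibly smaller) set $S$ rather than $S^*$ does not enlarge the $s$-side past $A$. Hence $S$ again lies in $\mathcal{F}$ and realizes the same inclusion-minimal $s$-side as $S^*$, with $C_s(G\sminus S)=A\subseteq C_s(G\sminus T)$.

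Finally I would argue $S$ is important by contradiction. Suppose some $s,t$-separator $S''$ had $|S''|\le|S|$ and $C_s(G\sminus S'')\subsetneq C_s(G\sminus S)$. Then $|S''|\le|S|\le k$ and $C_s(G\sminus S'')\subsetneq C_s(G\sminus S)=A\subseteq C_s(G\sminus T)$, so $S''\in\mathcal{F}$ yet its $s$-side is strictly smaller than $A$, contradicting the inclusion-minimal choice of $S^*$. Therefore no such $S''$ exists, $S\in\impsepstk{G}$, and $C_s(G\sminus S)\subseteq C_s(G\sminus T)$, as required.

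The only place demanding care is the orientation of the important-separator convention (toward $s$, i.e.\ minimizing $C_s$) and the verification that replacing $S^*$ by the neighborhood separator $N_G(A)$ preserves both the $s$-side and membership in $\mathcal{F}$; once those are pinned down via Lemma~\ref{lem:fullComponents}, the extremal/minimality argument is routine.
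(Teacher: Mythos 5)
Your overall strategy --- realize $S$ as an extremal (inclusion-minimal $s$-side) element of a finite family and argue it must be important --- is sound and is essentially the well-ordering rephrasing of the paper's induction on $|C_s(G\sminus T)|$. However, there is a genuine gap in the middle step. You define $\mathcal{F}$ over \emph{all} $s,t$-separators of size at most $k$, pick $S^*$ with inclusion-minimal $s$-side $A$, and then assert that $S\eqdef N_G(A)$ lies in $\minlsepst{G}$ ``by Lemma~\ref{lem:fullComponents}.'' That lemma requires \emph{both} the $s$-side and the $t$-side of $G\sminus S$ to be full components; for $S=N_G(A)$ only the $s$-side is guaranteed to be full, and the $t$-side can fail. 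Concretely, take $\nodes(G)=\set{s,u,w,v,t}$ with $\edges(G)=\set{(s,u),(s,w),(u,w),(u,v),(v,t)}$ and $k\geq 2$. The inclusion-minimal $s$-side realized by a separator of size at most $k$ is $A=\set{s}$ (via $S^*=N_G(s)=\set{u,w}$), but $N_G(A)=\set{u,w}$ is \emph{not} a minimal $s,t$-separator: $w$ has no neighbor in the $t$-component $\set{v,t}$, and the only minimal separator contained in $\set{u,w}$ is $\set{u}$, whose $s$-side is $\set{s,w}\supsetneq A$. So your construction can output a non-minimal set, which cannot belong to $\impsepstk{G}$, and passing to a minimal separator inside $N_G(A)$ strictly enlarges the $s$-side, destroying the inclusion-minimality on which your final contradiction rests.

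The fix is small but necessary: restrict the family from the outset to minimal separators, i.e.\ $\mathcal{F}\eqdef\setof{S'\in\minlsepst{G}}{|S'|\leq k,\ C_s(G\sminus S')\subseteq C_s(G\sminus T)}$, which is still nonempty (it contains $T$) and finite. Choosing $S\in\mathcal{F}$ with inclusion-minimal $s$-side, any witness $S''\in\minlsepst{G}$ to non-importance of $S$ satisfies $|S''|\leq|S|\leq k$ and $C_s(G\sminus S'')\subsetneq C_s(G\sminus S)\subseteq C_s(G\sminus T)$, hence lies in $\mathcal{F}$ with a strictly smaller $s$-side --- the desired contradiction, with no detour through $N_G(A)$. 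With that repair your argument is correct and is a mild variant of the paper's proof (the paper descends step by step by induction and needs Lemma~\ref{lem:closeTos} for the base case $|C_s(G\sminus T)|=1$, whereas the extremal formulation needs no base case); also note the paper's definition of importance quantifies the competitor $T'$ over $\minlsepst{G}$, not over arbitrary separators as in your paraphrase, so keeping everything inside $\minlsepst{G}$ is also what makes the final contradiction line up with the definition actually used.
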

\begin{proof}
	If $T\in \impsepstk{G}$, then the claim is immediate. We prove the claim by induction on $|C_s(G\sminus T)|$. If $|C_s(G\sminus T)|=1$, then clearly $T\subseteq N_G(s)$. By Lemma~\ref{lem:closeTos}, $T$ is the unique minimal $s,t$-separator that is close to $s$, and hence $T\in \impsepstk{G}$, and the claim follows.
	So, we assume the claim holds for all $T\in \minlsepst{G}$, where $|T|\leq k$, and $1\leq |C_s(G\sminus T)|\leq \ell$. Let $T\in \minlsepst{G}$, where $|T|\leq k$, and $|C_s(G\sminus T)|= \ell+1$. Since $T\notin \impsepstk{G}$, then by definition, there exists a $T'\in \minlsepst{G}$ such that $|T'|\leq |T|\leq k$, and $C_s(G\sminus T')\subset C_s(G\sminus T)$.  \eat{By Lemma~\ref{lem:parallel1} it holds that $T\subseteq T'\cup C_t(G\sminus T')$.} If $T'\in  \impsepstk{G}$, then we are done. 
	Otherwise, observe that $|C_s(G\sminus T')|<|C_s(G\sminus T)|=\ell+1$. Therefore, $|C_s(G\sminus T')|\leq \ell$, and by the induction hypothesis there exists a $S\in \impsepstk{G}$ such that $C_s(G\sminus S)\subseteq C_s(G\sminus T')$. Consequently, we have that $C_s(G\sminus S)\subseteq C_s(G\sminus T')\subset C_s(G\sminus T)$, and thus $C_s(G\sminus S)\subseteq C_s(G\sminus T)$, where $S\in  \impsepstk{G}$. This proves the lemma.
\end{proof}

\begin{corollary}
		\label{corr:mainCorrectness}
			Let $T\in \minlsepst{G}$ such that $|T|\leq k$. There exists a $S\in \impsepstk{G}$ such that $T\in S\cup C_t(G\sminus S)$.
\end{corollary}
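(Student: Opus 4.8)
The plan is to derive this corollary immediately from the two preceding results, Lemma~\ref{lem:mainCorrectness} and Lemma~\ref{lem:parallel1}, with essentially no additional work. (I read the displayed conclusion $T \in S\cup C_t(G\sminus S)$ as the set containment $T \subseteq S\cup C_t(G\sminus S)$, which is the natural reading since $T$ is a vertex-set.)

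First I would invoke Lemma~\ref{lem:mainCorrectness}. Given $T\in \minlsepst{G}$ with $|T|\leq k$, that lemma guarantees the existence of some $S\in \impsepstk{G}$ satisfying $C_s(G\sminus S)\subseteq C_s(G\sminus T)$. The only point worth checking before proceeding is that both $S$ and $T$ are genuine minimal $s,t$-separators of $G$, so that the hypotheses of Lemma~\ref{lem:parallel1} are met: this holds because $T\in \minlsepst{G}$ by assumption, and $S\in \impsepstk{G}\subseteq \minlsepst{G}$.

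Then I would apply Lemma~\ref{lem:parallel1} to this pair $S,T$. That lemma states precisely that for $S,T\in \minlsepst{G}$ we have $C_s(G\sminus S)\subseteq C_s(G\sminus T)$ if and only if $T\subseteq S\cup C_t(G\sminus S)$. Since the containment $C_s(G\sminus S)\subseteq C_s(G\sminus T)$ has just been supplied by Lemma~\ref{lem:mainCorrectness}, the ``only if'' direction yields $T\subseteq S\cup C_t(G\sminus S)$ for the same $S\in \impsepstk{G}$, which is the desired conclusion.

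There is no genuine obstacle here; the content of the corollary is a direct translation of the ``component-containment'' formulation in Lemma~\ref{lem:mainCorrectness} into the ``set-containment'' formulation via the characterization in Lemma~\ref{lem:parallel1}. The only care needed is the bookkeeping that $S$ is the \emph{same} witness in both steps and that it lies in $\impsepstk{G}$, which is immediate from how it is produced.
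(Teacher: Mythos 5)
Your proposal is correct and matches the paper's own proof exactly: both apply Lemma~\ref{lem:mainCorrectness} to obtain $S\in \impsepstk{G}$ with $C_s(G\sminus S)\subseteq C_s(G\sminus T)$ and then invoke Lemma~\ref{lem:parallel1} to convert this into $T\subseteq S\cup C_t(G\sminus S)$. Your reading of the ``$\in$'' in the statement as set containment is also the intended one.
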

\begin{proof}
By Lemma~\ref{lem:mainCorrectness}, there exists a $S\in \impsepstk{G}$ such that $C_s(G\sminus S)\subseteq C_s(G\sminus T)$. By Lemma~\ref{lem:parallel1}, it holds that $T\in S\cup C_t(G\sminus S)$.
\end{proof}

\begin{theorem}
	\label{thm:mainCorrectness}
		Let $T\in \minlsepst{G}$ such that $|T|\leq k$. Then $T$ is printed by Algorithm \algname{SmallMinimalSeps} in Figure~\ref{fig:rankedMinSeps}.
\end{theorem}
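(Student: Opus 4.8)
The plan is to reduce the theorem to a statement amenable to induction on the ``gap'' $|C_s(G\sminus T)|-|C_s(G\sminus S)|$ between the target $T$ and a separator $S$ that the algorithm has \emph{already} printed. Concretely, I would first prove the helper claim: if $S\in\minlsepst{G}$ is printed by \algname{SmallMinimalSeps} and $T\in\minlsepst{G}$ satisfies $|T|\le k$ and $C_s(G\sminus S)\subseteq C_s(G\sminus T)$, then $T$ is printed as well. The theorem then follows at once: by Lemma~\ref{lem:mainCorrectness} there is an $S_0\in\impsepstk{G}$ with $C_s(G\sminus S_0)\subseteq C_s(G\sminus T)$, and since every element of $\impsepstk{G}$ is pushed onto $Q$ in line~\ref{line:pushImportantG} while every set entering $Q$ is eventually popped and printed (each iteration of the while-loop prints exactly one element), $S_0$ is printed; applying the helper claim to $S_0$ and $T$ closes the argument.

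I would prove the helper claim by induction on $j\eqdef|C_s(G\sminus T)|-|C_s(G\sminus S)|\ge 0$. The base case $j=0$ gives $C_s(G\sminus S)=C_s(G\sminus T)$, whence $S=N_G(C_s(G\sminus S))=N_G(C_s(G\sminus T))=T$ by Lemma~\ref{lem:fullComponents}, so $T=S$ is printed. For the inductive step $C_s(G\sminus S)\subsetneq C_s(G\sminus T)$, so $S\ne T$; since both are minimal, $S\not\subseteq T$, and I may fix a vertex $v\in S\sm T$. Lemma~\ref{lem:parallel1} turns $C_s(G\sminus S)\subseteq C_s(G\sminus T)$ into $T\subseteq S\cup C_t(G\sminus S)$, so $T\in\minlsepst{H_S}$ by Lemma~\ref{lem:Hgraph}; and as $v\notin T$, Lemma~\ref{lem:contract} gives $T\in\minlsepst{H_S^v}$, where $H_S^v$ is exactly the graph built in line~\ref{line:HContractv}. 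Applying Lemma~\ref{lem:mainCorrectness} \emph{inside} $H_S^v$ produces $S'\in\impsepstk{H_S^v}$ with $C_s(H_S^v\sminus S')\subseteq C_s(H_S^v\sminus T)$; because $S$ is printed, $S'$ is computed and pushed while $S$ is processed, hence $S'$ is printed.

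The crux is to verify that $S'$ makes strict progress in the order of $G$ while remaining below $T$, i.e.\ $S\prec S'\preceq T$ in $G$. For $S'\preceq T$ I would note that all edges added to form $H_S^v$ are incident to $s$ and reach only vertices of $S\cup N_G(v)$, all of which lie in $C_s(G\sminus T)\cup T$ (here $v\in C_s(G\sminus T)$ since $v\in S=N_G(C_s(G\sminus S))\subseteq C_s(G\sminus T)\cup T$ and $v\notin T$, and similarly for $N_G(v)$); removing $T$ therefore leaves these new edges joining $s$ only to vertices already in $C_s(G\sminus T)$, so $C_s(H_S^v\sminus T)=C_s(G\sminus T)$. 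Combined with $C_s(G\sminus S')\subseteq C_s(H_S^v\sminus S')\subseteq C_s(H_S^v\sminus T)$ this gives $C_s(G\sminus S')\subseteq C_s(G\sminus T)$. For $S\prec S'$, Lemma~\ref{lem:Hgraph} (applied via $S'\in\minlsepst{H_S}$) yields $S'\subseteq S\cup C_t(G\sminus S)$, so $S'\cap C_s(G\sminus S)=\emptyset$ and $C_s(G\sminus S)\subseteq C_s(G\sminus S')$; the inclusion is strict because $v\in S\sm S'$, as otherwise Lemma~\ref{lem:fullComponents} would force $S'=N_G(C_s(G\sminus S'))=N_G(C_s(G\sminus S))=S$. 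Hence $|C_s(G\sminus T)|-|C_s(G\sminus S')|<j$, and the induction hypothesis applied to the printed separator $S'$ and target $T$ gives that $T$ is printed.

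The main obstacle I anticipate is precisely this last step: controlling how the $s$-component of the \emph{target} $T$ behaves under augmentation to $H_S^v$. The subtlety is that augmentation never shrinks $C_s(\cdot\sminus T)$, so one cannot naively induct on $|C_s(G\sminus T)|$; the correct progress measure is the gap to the already-printed $S$, and the identity $C_s(H_S^v\sminus T)=C_s(G\sminus T)$ is exactly what makes the generated $S'$ comparable to $T$ in $G$ while being strictly ahead of $S$, guaranteeing the induction is well-founded and terminates precisely at $T$.
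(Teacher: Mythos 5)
Your proof is correct and is essentially the paper's own argument recast in inductive rather than extremal form: both hinge on the same progress step of taking $v\in S\setminus T$, passing to $H_S^v$, and extracting via Lemma~\ref{lem:mainCorrectness} an important separator $S'$ with $S\prec S'\preceq T$ (the paper instead picks a $\preceq$-maximal printed $T'\preceq T$ and derives a contradiction). If anything, your version is slightly more careful, since you explicitly justify the identity $C_s(H_S^v\sminus T)=C_s(G\sminus T)$ needed to transfer the component comparison from the augmented graph back to $G$, and the strictness $S\prec S'$ via $v\in S\setminus S'$ --- two points the paper's proof asserts without argument.
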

\begin{proof}
Suppose that $T$ is not printed by the algorithm. Let $T'\in \minlsepst{G}$ be the largest minimal $s,t$-separator, with respect to $\prec$, that is printed by the algorithm, such that $T'\preceq T$. In other words, there does not exist a $T''\in \minlsepst{G}$, that is printed by the algorithm where $T'\prec T''\preceq T$. By Lemma~\ref{lem:mainCorrectness}, such a separator $T'$ must exist.

Since  $C_s(G\sminus T')\subseteq C_s(G\sminus T)$, then by Lemma~\ref{lem:parallel1}, it holds that $T\in T'\cup C_t(G\sminus T')$. By Lemma~\ref{lem:Hgraph}, it holds that $T\in \minlsepst{H_{T'}}$. Consider what happens when $T'$ is popped from the queue in line~\ref{line:popQ}, and the graph $H_{T'}$ is generated in line~\ref{line:generateHS}. Since $T\neq T'$ (we assume that $T$ is not printed), $T'\subseteq N_{H_{T'}}(s)$, and $T\in \minlsepst{H_{T'}}$, then there exists a vertex $v\in T'$, such that $T\in \minlsepst{H_{T'}^v}$ (see line~\ref{line:HContractv}). If $T\in \impsepstk{H_{T'}^v}$, then $T$ is pushed into the queue in line~\ref{line:pushQ}, and will therefore be printed. Otherwise, by Lemma~\ref{lem:mainCorrectness}, there exists an $S\in \impsepstk{H_{T'}^v}$, such that $C_s(H_{T'}^v\sminus S)\subseteq C_s(H_{T'}^v\sminus T)$. By construction, we have that  $C_s(H_{T'}\sminus T')\subseteq  C_s(H_{T'}^v\sminus S)\subseteq C_s(H_{T'}\sminus T)$. Since $S$ is pushed into the queue in line~\ref{line:pushQ}, then it will be printed by the algorithm in line~\ref{line:printS}. By Theorem, we have that $S\in \minlsepst{G}$, where $|S|\leq k$ and where $C_s(G\sminus T')\subseteq C_s(G\sminus S)\subseteq C_s(G\sminus T)$ is printed by the algorithm, contradicting our assumption that $T'$ is maximal with respect to the partial order $\prec$ that is printed before $T$.
\end{proof}

\begin{theorem}
	\label{thm:runtime}
	The delay between the printing of consecutive minimal $s,t$-separators whose size is at most $k$ is $O(n\cdot k\cdot T(n,m)\cdot 4^k)$, where $n=|\nodes(G)|$, $m=|\edges(G)|$, and $T(n,m)$ is the time to find a minimum $s,t$-separator in $G$.
\end{theorem}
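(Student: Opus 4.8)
The plan is to observe that every iteration of the \textbf{while}-loop emits exactly one separator, in line~\ref{line:printS}, immediately after the \texttt{pop} in line~\ref{line:popQ}. Hence the delay between two consecutive outputs is exactly the cost of running the body of one \textbf{while}-iteration (together with the \texttt{pop} that begins the next one), the time to the very first output is the cost of the initialization (computing $\impsepstk{G}$ and populating the queue) plus one \texttt{pop}, and the time from the last output until termination is again bounded by a single iteration body. So it suffices to bound the work performed in one iteration of the \textbf{while}-loop and to check that the initialization fits the same bound. Distinctness of the outputs, i.e.\ that no set is popped twice, follows from the ordering argument underlying Theorem~\ref{thm:mainCorrectness}, so ``one output per iteration'' need not be re-examined here.

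First I would bound the loop over $v\in S$, which dominates. The set $S$ popped in line~\ref{line:popQ} satisfies $|S|\le k$, so the loop makes at most $k$ iterations. In each one we build $H_S^v$ in line~\ref{line:HContractv}, which only adds edges incident to $s$ and therefore costs $O(n+m)$, and then compute $\impsepstk{H_S^v}$. By Theorem~\ref{thm:importantSepsEnum} this takes $O(n\cdot T(n,m)\cdot 4^k)$ time and returns at most $4^k$ separators. For each such set $T$ we perform a membership test $T\notin Q$ and possibly a \texttt{push} in line~\ref{line:pushQ}. Storing, alongside $Q$, a hash table keyed by the vertex set (equivalently, by $C_s(G\sminus S)$) makes each membership test cost $O(n)$; ordering the priority queue by a linear extension of $\preceq$ — for instance by $|C_s(G\sminus S)|$, which refines $\prec$ since $S\prec T$ implies $C_s(G\sminus S)\subset C_s(G\sminus T)$ — lets each \texttt{push} compute its key in $O(n+m)$ and sift in $O(\log|Q|)=O(n)$ integer comparisons. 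Thus the bookkeeping for the $\le 4^k$ candidates costs $O(4^k\cdot(n+m))$ per vertex $v$.

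Summing over the $\le k$ choices of $v$, the dominant contribution is the $k$ invocations of the important-separator subroutine, giving $O(k\cdot n\cdot T(n,m)\cdot 4^k)=O(nk\,T(n,m)\,4^k)$; the construction of $H_S$ in line~\ref{line:generateHS}, the at most $k$ constructions of $H_S^v$, and the $O(k\cdot 4^k\cdot(n+m))$ of queue bookkeeping are all dominated, because $T(n,m)=\Omega(n+m)$ (a minimum $s,t$-separator cannot be found without reading the graph), so $n\,T(n,m)\ge n+m$. The initialization computes $\impsepstk{G}$ once in $O(n\,T(n,m)\,4^k)$ and pushes at most $4^k$ elements, which is within the same bound, as is the single \texttt{pop} that precedes each print. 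This yields the claimed delay of $O(nk\,T(n,m)\,4^k)$.

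The step I expect to be the crux is the accounting in the second paragraph: the factor $k$ in the final bound must come \emph{solely} from the loop over $v\in S$, so it is essential that (i) each call to the important-separator routine costs only $O(n\,T(n,m)\,4^k)$ (Theorem~\ref{thm:importantSepsEnum}) and, crucially, (ii) each such call returns at most $4^k$ separators, so that the membership-and-\texttt{push} work does not secretly introduce an extra factor proportional to the number of minimal $s,t$-separators of the graph. Everything else — the graph surgeries of lines~\ref{line:generateHS} and~\ref{line:HContractv} and the priority-queue operations — is routine polynomial work that is absorbed once one observes $T(n,m)=\Omega(n+m)$.
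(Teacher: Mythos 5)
Your proposal is correct and follows essentially the same accounting as the paper: one output per \textbf{while}-iteration, an outer factor of $k$ from the loop over $v\in S$, the dominant cost $O(n\cdot T(n,m)\cdot 4^k)$ per call to the important-separator subroutine of Theorem~\ref{thm:importantSepsEnum}, and queue bookkeeping (the paper bounds the queue size by $n^k$ and charges $O(k\log n)$ per operation, while you use a hash table and a key refining $\prec$ — both are dominated terms). Your treatment is somewhat more careful than the paper's about the at-most-$4^k$ count of returned candidates and about absorbing the lower-order terms via $T(n,m)=\Omega(n+m)$, but it is not a different route.
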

\begin{proof}
	The size of the queue can be at most $n^k$. We make the standard assumption that the queue allows logarithmic insertion and extraction. Hence, insertion and extraction from the queue take time $O(k\log n)$. Therefore, the runtime of the loop in lines is:
	$$
	O\left(k\cdot \left(n+n\cdot T(n,m)\cdot 4^k + k\cdot 4^k\log n\right)\right)
	$$
	Overall, the delay is $O(n\cdot k\cdot T(n,m)\cdot 4^k)$.
\end{proof}
\section{Enumeration Algorithms for $s,t$-Separators}
\label{sec:enumseps}
In this section, we develop an enumeration algorithm that returns all (not necessarily minimal) $s,t$-separators of $G$ in ranked order by cardinality, and an enumeration algorithm that returns only the minimum-cardinality $s,t$-separators of $G$ (i.e., $\minsep_{s,t}(G)$). 
We characterize vertices included in minimum-cardinality $s,t$-separators in Section~\ref{sec:includeMinSeps}, and vertices excluded from any minimal $s,t$-separator (and in-particular, any minimum $s,t$-separator) in Section~\ref{sec:excludeU}.
The resulting simple algorithms are then presented in Section~\ref{sec:enumAlgs}.
\subsection{Characterizations of Vertices Included In Minimum $s,t$-Separators}
\label{sec:includeMinSeps}
In this section, we show that there is a polynomial-time algorithm, that given a subset of vertices $I\subseteq \nodes(G)$, returns a minimum $s,t$-separator that contains $I$, if one exists.
\begin{lemma}
	\label{lem:vertexInclude}
	Let $v\in \nodes(G)$. There exists a minimum $s,t$-separator $S\in \minsep_{s,t}(G)$ that contains $v$ if and only if $\kappa_{s,t}(G\sminus v)=\kappa_{s,t}(G)-1$.
\end{lemma}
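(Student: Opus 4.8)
The plan is to prove the equivalence by first determining exactly how the minimum separator size can change when the single vertex $v$ is deleted, and then reading off each direction from a short cardinality count. Throughout I would assume $v\notin\set{s,t}$; if $v\in\set{s,t}$ the left-hand side is vacuously false, since an $s,t$-separator never contains $s$ or $t$, so that case is degenerate and handled trivially. Write $k\eqdef\kappa_{s,t}(G)$.

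First I would isolate two observations that are purely about the semantics of vertex deletion. (i) If $S$ is any $s,t$-separator of $G$, then $S\sminus v$ is an $s,t$-separator of $G\sminus v$: deleting $S\sminus v$ from $G\sminus v$ removes exactly the vertices $S\cup\set{v}$ from $G$, and removing a superset of a separator still separates $s$ from $t$. (ii) If $S'$ is any $s,t$-separator of $G\sminus v$, then $S'\cup\set{v}$ is an $s,t$-separator of $G$, because deleting $S'\cup\set{v}$ from $G$ amounts to deleting $v$ and then deleting $S'$ from $G\sminus v$. Applying (i) to a minimum separator of $G$ gives $\kappa_{s,t}(G\sminus v)\le k$, while applying (ii) to a minimum separator of $G\sminus v$ gives $k\le\kappa_{s,t}(G\sminus v)+1$. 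Together these yield the two-sided bound
\begin{align*}
	k-1 \;\le\; \kappa_{s,t}(G\sminus v)\;\le\; k,
\end{align*}
so deleting $v$ either leaves the connectivity unchanged or lowers it by exactly one.

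With this bound established, each direction becomes a one-line count. For the forward direction I would take $S\in\minsep_{s,t}(G)$ with $v\in S$, so $|S|=k$; by (i), $S\sminus v$ separates $s$ from $t$ in $G\sminus v$ and has size $k-1$, hence $\kappa_{s,t}(G\sminus v)\le k-1$, and combined with the lower bound this forces $\kappa_{s,t}(G\sminus v)=k-1=\kappa_{s,t}(G)-1$. For the backward direction I would assume $\kappa_{s,t}(G\sminus v)=k-1$ and take a minimum $s,t$-separator $S'$ of $G\sminus v$, so $|S'|=k-1$; by (ii), $S'\cup\set{v}$ separates $s$ from $t$ in $G$, and since $S'\subseteq\nodes(G\sminus v)$ while $v\notin\nodes(G\sminus v)$ we have $v\notin S'$, so $|S'\cup\set{v}|=k=\kappa_{s,t}(G)$. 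Thus $S'\cup\set{v}$ is a \emph{minimum} $s,t$-separator of $G$ containing $v$.

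I do not expect a genuine obstacle: all the content sits in the deletion semantics of (i)--(ii) and the resulting two-sided bound, both of which are elementary. The only points needing care are the cardinality bookkeeping — in particular using $v\notin S'$ in the backward direction, which is automatic because $v$ is not a vertex of $G\sminus v$ — and excluding the degenerate case $v\in\set{s,t}$ at the outset.
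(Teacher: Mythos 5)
Your proof is correct and follows essentially the same route as the paper: the forward direction uses that $S\sminus\set{v}$ separates $s$ from $t$ in $G\sminus v$ together with the fact that deleting one vertex drops $\kappa_{s,t}$ by at most one, and the backward direction adds $v$ back to a minimum separator of $G\sminus v$. The only difference is presentational — you derive the ``drops by at most one'' bound explicitly from your observation (ii) and flag the degenerate case $v\in\set{s,t}$, where the paper simply asserts the bound.
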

\begin{proof}
	Let $\kappa_{s,t}(G)=k$. Let $S\in \minsep_{s,t}(G)$ be such that $v \in S$. Consider the graph $G'\eqdef G\sminus v$. Removing a single vertex can decrease the $s,t$-connectivity of a graph by at most $1$. Therefore, $\kappa_{s,t}(G)-1 \leq \kappa_{s,t}(G')$. On the other hand, $S\setminus \set{v}$ is an $s,t$-separator in $G'$, and hence $\kappa_{s,t}(G')\leq |S|-1=\kappa_{s,t}(G)-1$. Overall, we get that $\kappa_{s,t}(G')=\kappa_{s,t}(G-v)=\kappa_{s,t}(G)-1$.
	
	Now, suppose that $\kappa_{s,t}(G)=k$ and $\kappa_{s,t}(G\sminus v)=k-1$. Take $S\in \minsep_{s,t}(G\sminus v)$. By our assumption $|S|=k-1$. Since $S$ separates $s$ from $t$ in $G\sminus v$, then clearly $S\cup \set{v}$ separates $s$ from $t$ in $G$. Since $|S\cup \set{v}|=k=\kappa_{s,t}(G)$, then $S\in \minsep_{s,t}(G)$ where $v\in S$.
\end{proof}

\begin{corollary}
	\label{corr:vertexSetInclude}
	Let $I \subseteq \nodes (G)$, and let $k_I=|I|$. There exists a minimum separator $S \in \minsep_{st}(G)$ that contains $I$ if and only if $\kappa_{s,t}(G\sminus I)=\kappa_{s,t}(G)-k_I$.
\end{corollary}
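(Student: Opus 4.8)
The plan is to mirror the proof of Lemma~\ref{lem:vertexInclude}, replacing the single vertex $v$ by the set $I$ and the decrement $1$ by $k_I$. Throughout I assume $s,t\notin I$, which is safe since $s,t$ never belong to any $s,t$-separator; write $\kappa_{s,t}(G)=k$.

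For the forward direction, suppose $S\in \minsep_{s,t}(G)$ with $I\subseteq S$, so $|S|=k$. First I would invoke the monotonicity fact already used in Lemma~\ref{lem:vertexInclude}---that deleting a single vertex lowers the $s,t$-connectivity by at most one---iterated $k_I$ times, to obtain $\kappa_{s,t}(G\sminus I)\geq k-k_I$. For the matching upper bound, observe that $S\setminus I$ is an $s,t$-separator in $G\sminus I$ of size exactly $|S|-|I|=k-k_I$ (here the exactness uses $I\subseteq S$), so $\kappa_{s,t}(G\sminus I)\leq k-k_I$. Combining the two inequalities yields $\kappa_{s,t}(G\sminus I)=\kappa_{s,t}(G)-k_I$.

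For the converse, suppose $\kappa_{s,t}(G\sminus I)=k-k_I$ and take any $S'\in \minsep_{s,t}(G\sminus I)$, so $|S'|=k-k_I$. Then $S'\cup I$ separates $s$ from $t$ in $G$---re-inserting the deleted vertices $I$ into the separator restores the separation---and $|S'\cup I|\leq (k-k_I)+k_I=k=\kappa_{s,t}(G)$. Since every $s,t$-separator of $G$ has size at least $k$, the set $S'\cup I$ must have size exactly $k$, hence $S'\cup I\in \minsep_{s,t}(G)$, and it contains $I$, as required. Note that $S'\cap I=\emptyset$ automatically, since $S'\subseteq \nodes(G)\setminus I$, which is what forces $|S'\cup I|=k$ rather than something smaller.

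I expect no deep difficulty here: the whole content is the monotonicity of $s,t$-connectivity under vertex deletion together with careful size bookkeeping. The two points to watch are (i) the well-definedness requirement $s,t\notin I$, and (ii) keeping the cardinality arithmetic exact---that $|S\setminus I|=k-k_I$ in the forward direction relies on $I\subseteq S$, and that $|S'\cup I|=k$ in the converse relies on the disjointness $S'\cap I=\emptyset$ coming for free from $S'\subseteq \nodes(G)\sminus I$. An alternative route would be induction on $k_I$ using Lemma~\ref{lem:vertexInclude} as the base case, but I prefer the direct argument, since threading the biconditional through an induction would require re-establishing at each step that the connectivity drops by exactly one.
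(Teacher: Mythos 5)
Your proof is correct, but it takes a genuinely different route from the paper. The paper proves the corollary by induction on $k_I$, using Lemma~\ref{lem:vertexInclude} as the base case: it picks $x\in I$, applies the lemma to conclude that either no minimum separator contains $x$ (and hence none contains $I$) or $\kappa_{s,t}(G\sminus x)=k-1$, and then invokes the induction hypothesis on $H=G\sminus x$ and $Y=I\setminus\set{x}$ before chaining the connectivity equalities. You instead argue directly: iterated monotonicity of $\kappa_{s,t}$ under vertex deletion gives the lower bound $\kappa_{s,t}(G\sminus I)\geq k-k_I$, the separator $S\setminus I$ witnesses the matching upper bound when $I\subseteq S$, and for the converse you lift a minimum separator $S'$ of $G\sminus I$ to the separator $S'\cup I$ of $G$, whose size is forced to be exactly $k$ by the disjointness $S'\cap I=\emptyset$. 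Both arguments are sound, but yours is arguably tighter: the paper's inductive step quietly relies on the correspondence between minimum separators of $H$ containing $Y$ and minimum separators of $G$ containing $I$ (re-running the two directions of Lemma~\ref{lem:vertexInclude}) without spelling it out, whereas your direct argument makes the size bookkeeping and the two required inclusions ($I\subseteq S$ for exactness of $|S\setminus I|$, and $S'\subseteq\nodes(G)\setminus I$ for exactness of $|S'\cup I|$) fully explicit. The one premise you share with the paper without proof is the monotonicity fact that deleting a vertex decreases $\kappa_{s,t}$ by at most one, but that is already used and accepted in Lemma~\ref{lem:vertexInclude}, so nothing is missing.
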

\begin{proof}
	Let $\kappa_{s,t}(G)=k$. We prove by induction on $k_I$, the size of $I$. The case where $I=\emptyset$ is trivial, and the case where $|I|=1$ is proved in Lemma~\ref{lem:vertexInclude}. So assume the claim holds for all sets $I$ where $0\leq |I|\leq \ell$, and we prove for the case where $|I|=\ell+1$.
	
	Let $x\in I$. By Lemma~\ref{lem:vertexInclude}, there exists a minimum separator $S \in \minsep_{s,t}(G)$ that contains $x$ if and only if $\kappa_{s,t}(G\sminus x)=\kappa_{s,t}(G)-1$. Therefore, if $\kappa_{s,t}(G\sminus x)>\kappa_{s,t}(G)-1$, there does not exist a minimum separator that includes $x$, and hence, there does not exist one that contains $I$.
	Otherwise, let $H\eqdef G\sminus x$, and hence $\kappa_{s,t}(H)=k-1$. Let $Y\eqdef I\setminus \set{x}$. Since $|Y|\leq \ell$, then we can apply the induction hypothesis. There is a minimum separator $S \in \minsep_{s,t}(H)$ that contains $Y$ if and only if $\kappa_{s,t}(H\sminus Y)=\kappa_{s,t}(H)-|Y|=\kappa_{s,t}(H)-\ell$.
	Since $H=G\sminus x$, we get that
	\begin{align*}
		\kappa_{s,t}(G\sminus I)&=\kappa_{s,t}(G\sminus (Y\cup \set{x}))\\
		&=\kappa_{s,t}(H\sminus Y)=\kappa_{s,t}(H)-|Y|=\kappa_{s,t}(H)-\ell\\
		&=\kappa_{s,t}(G)-1-\ell=\kappa_{s,t}(G)-(\ell+1)
	\end{align*}
\end{proof}

\begin{proposition}
	\label{prop:inclusionPTimeDecision}
	Let $I\subseteq \nodes(G)$. We can determine whether there is a minimum $s,t$-separator $S \in \minsep_{s,t}(G)$ that contains $I$, and return one if exists, in time $O(n+T(n,m))$, where $T(n,m)$ is the time to find a minimum $s,t$-separator.
\end{proposition}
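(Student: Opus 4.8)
The plan is to convert Corollary~\ref{corr:vertexSetInclude} into a constructive procedure that makes only a constant number of calls to the minimum-separator subroutine. First I would dispose of the degenerate case: since every $s,t$-separator excludes $s$ and $t$ by definition, if $I\cap \set{s,t}\neq \emptyset$ I immediately report that no minimum $s,t$-separator contains $I$. Otherwise, set $k_I=|I|$ and construct the graph $G\sminus I$, which costs $O(n)$ time.

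Next I would compute $\kappa_{s,t}(G)$ and $\kappa_{s,t}(G\sminus I)$, each by invoking the minimum-separator subroutine once; since $G\sminus I$ has at most $n$ vertices and $m$ edges, each call costs at most $O(T(n,m))$. By Corollary~\ref{corr:vertexSetInclude}, a minimum $s,t$-separator containing $I$ exists if and only if $\kappa_{s,t}(G\sminus I)=\kappa_{s,t}(G)-k_I$, so the decision reduces to a single comparison of the two computed quantities.

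For the construction, when this equality holds I would return $S\cup I$, where $S\in \minsep_{s,t}(G\sminus I)$ is the minimum separator already produced when computing $\kappa_{s,t}(G\sminus I)$; this is precisely the set the inductive proof of the corollary builds. Correctness is immediate: $S$ separates $s$ from $t$ in $G\sminus I$, hence $S\cup I$ separates $s$ from $t$ in $G$; moreover $S\subseteq \nodes(G)\setminus I$, so the union is disjoint and $|S\cup I|=|S|+k_I=\kappa_{s,t}(G\sminus I)+k_I=\kappa_{s,t}(G)$, which certifies that $S\cup I$ is a \emph{minimum} $s,t$-separator containing $I$.

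Finally, for the runtime, the two subroutine calls contribute $O(T(n,m))$, while the auxiliary work—building $G\sminus I$, forming the union, and comparing cardinalities—is $O(n)$, for a total of $O(n+T(n,m))$. Since the heavy lifting has already been done in Corollary~\ref{corr:vertexSetInclude}, there is no real obstacle here; the only point requiring care is to extract the witness separator from the \emph{same} min-separator computation used for the decision, so that a constant number of max-flow calls suffices rather than re-running the subroutine per vertex.
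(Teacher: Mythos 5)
Your proof is correct and follows essentially the same route as the paper's: compute $\kappa_{s,t}(G)$ and $\kappa_{s,t}(G\sminus I)$ with two calls to the minimum-separator subroutine and apply the test of Corollary~\ref{corr:vertexSetInclude}. Your explicit witness construction $S\cup I$, with the cardinality argument certifying minimality, is a point the paper only gestures at, so it is a welcome bit of extra care.
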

\eat{
\begin{proposition}
	\label{prop:inclusionPTimeDecision}
	Let $I\subseteq \nodes(G)$. We can determine whether there is a minimum $s,t$-separator $S \in \minsep_{s,t}(G)$ that contains $I$, and return one if exists, in time $O(n^{\frac{1}{2}}m)$.
\end{proposition}
\begin{proof}
	The algorithm of Even and Tarjan~\cite{DBLP:journals/siamcomp/EvenT75} computes the connectivity of $G$, or $\kappa(G)$, in time $O(n^{\frac{1}{2}}m)$ by using network flow techniques. It also allows for constructing the $\kappa(G)$ internally disjoint $st$-paths, and hence returning a minimum separator (see also~\cite{DBLP:journals/algorithmica/ChenLL09}).
	
	It takes time $O(n+m)$ to construct the induced graph $G[\nodes(G)\setminus I]$, and again $O(n^{\frac{1}{2}}m)$ time to compute $\kappa(G[\nodes(G)\setminus I])$. By Corollary~\ref{corr:vertexSetInclude}, the answer is positive if and only if $\kappa(G)-|I|=\kappa(G[\nodes(G)\setminus I])$. Hence, the total time is $O(n^{\frac{1}{2}}m)$.
\end{proof}
}
\eat{
	Let $(u,v)\in \edges(G)$. The graph obtained by contracting $(u,v)$ in $G$, denoted $(G,\set{u,v})$ is defined over the vertex set $\nodes(G)\setminus \set{u,v} \cup \set{uv}$ where $uv\notin \nodes(G)$ is a fresh object. The edge set of the new graph is the set $\set{e\in \edges(G): e\cap \set{u,v}=\emptyset}\cup \set{(uv,w):w \in N(u)\cup N(v)}$. Let $U\subseteq \nodes(G)$ such that $G[U]$ is connected. We denote by $(G,U)$ the graph that results from contracting all edges in $G[U]$. In the resulting graph $(G,U)$, the set $U$ is represented by a single vertex $u$, where the neighbors of $u$ in $(G,U)$ is the same as $N(U)$ in the original graph $G$.
}

\eat{
	\section{Minimal Separators and Chordless $st$-paths}
	
	\begin{theorem}
		\label{thm:chordlessstpath}
		Let $v \in \nodes(G)$ where $v \notin N_G(s)\cup N_G(t)$. There exists a minimal $st$-separator that includes $v$ if and only if there exists a chordless $st$-path through $v$.
	\end{theorem}
	\begin{proof}
		Let $S$ be a minimal $st$-separator in $G$ that includes $v$, and let $C_s(S)$, $C_t(S)$ denote the connected components of $G[\nodes(G)\setminus S]$ that contain $s$ and $t$ respectively. By Lemma~\ref{lem:fullComponents}, vertex $v$ has two neighbors $a\in C_s(S)$ and $b\in C_t(S)$ such that $(a,b) \notin \edges(G)$. Since $C_s(S)$, $C_t(S)$ are connected components, then there is a path from $s$ to $a$ in $C_s(S)$, and a path from $b$ to $t$ in $C_t(S)$. Denote by $P_{sa}$ and $P_{bt}$ the shortest $sa$ and $bt$ paths respectively. Hence, $P_{sa}$ and $P_{bt}$ are chordless paths in $C_s(S)$, $C_t(S)$ respectively.
		Since $C_s(S)\cap C_t(S)=\emptyset$, and there are no edges between vertices in $C_s(S)$ and $C_t(S)$, then the path $P_{sa}vP_{bt}$ is a simple $st$-path through $v$ that is chordless.
		
		Let $P=s,a_1,\dots,a_k,v,b_1,\dots,b_\ell,t$  denote a simple, chordless $st$-path through $v$. By the assumption that $v \notin N_G(s)\cup N_G(t)$, we have that $k\geq 1$ and $\ell \geq 1$. Contract all edges of the sub-path $P_a\eqdef (s,a_1,\dots,a_k)$ into a new merged vertex $s'$, and all edges of the path $P_b\eqdef (b_1,\dots,b_\ell,t)$ into a new merged vertex $t'$. Since $P$ is simple, then $P_a$ and $P_b$ are vertex-disjoint. Since $P$ is chordless, then there are no edges between $(a_i,b_j)$ for all $i\in [1,k]$ and all $j\in [1,\ell]$. Therefore, following the contraction, $s'$ and $t'$ are not adjacent in the new graph $G'$, and hence separable.
		
		Let $S'$ be a minimal $s't'$-separator in $G'$. By construction, $v \in N_{G'}(s')\cap N_{G'}(t')$, and hence $v \in S'$. It is left to show that $S'$ is a minimal $st$-separator of $G$. Let $C_{s'}(S')$ and $C_{t'}(S')$ denote the full connected components of $G'[\nodes(G')\setminus S']$ containing $s'$ and $t'$ respectively.
		Define $D_s(S')\eqdef (C_{s'}(S')\setminus s')\cup \set{s,a_1,\dots,a_k}$ and $D_t(S')\eqdef (C_{t'}(S')\setminus t')\cup \set{b_1,\dots,b_\ell,t}$. By construction, $D_s(S')$ and $D_t(S')$ are connected and disjoint. Since $C_{s'}(S')$ and $C_{t'}(S')$ are full components of $S'$, then so are $D_s(S')$ and $D_t(S')$. By Lemma~\ref{lem:fullComponents}, $S'$ is a minimal $st$-separator of $G$.
	\end{proof}
	
	Theorem~\ref{thm:chordlessstpath} provides a characterization of when a vertex $v$ is included in a minimal separator. However, the problem of deciding whether there is a chordless $st$-path through a third vertex $v$ is known to be NP-complete~\cite{DBLP:journals/tcs/HaasH06}. In fact, even deciding whether there is such a path of length at most $k$ was shown to be $W[1]$-complete with respect to the length parameter $k$~\cite{DBLP:journals/tcs/HaasH06}.
	In our case, we do not need to solve this problem directly. Let $S$ be a minimal $st$-separator of $G$, and let $u,v \in S$. By Theorem~\ref{thm:chordlessstpath} there is a simple, chordless $st$-path through $u$ and $v$. Let $a,b \in N_G(u)$. We wish to know if, after the addition of edge $(a,b)$ to $G$, there is still a chordless $st$-path through $v$. To that end we define the following.
}

\subsection{Excluding Vertices from Minimal Separators}
\label{sec:excludeU}
In this section, we characterize minimal $s,t$-separators that exclude a subset $U \subseteq \nodes(G)$ of vertices. 
We define some required notation. 
Let $U \subseteq \nodes(G)$. We denote by $\minlsepEst{G}{U}$ the set of minimal $s,t$-separators that exclude $U$. Formally:
\begin{equation}
	\label{eq:MinSepGIU}
	\minlsepEst{G}{U}\eqdef \set{S \in \minlsepst{G}: S \subseteq \nodes(G)\setminus U}
\end{equation}
For a single vertex $u\in \nodes(G)$, we denote by $\minlsepEst{G}{u}\eqdef \set{S \in \minlsepst{G}:u\notin S}$.
We denote by $\kappa_{s,t}(G,\comp{U})$ the minimum size of any minimal $s,t$-separator that excludes $U$. Formally:
\begin{equation}
	\label{eq:kappaIncludeExclude}
	\kappa_{s,t}(G,\comp{U})\eqdef\min\set{|S| : S \in \minlsepEst{G}{U}}
\end{equation}
We denote by $\minsep_{s,t}(G,\comp{U})$ the subset of $\minlsepEst{G}{U}$ that have the smallest cardinality. Formally:
\begin{equation}
	\label{eq:MinimumIncludeExcludeFixed}
	\minsep_{s,t}(G,\comp{U})\eqdef\set{S\in \minlsepEst{G}{U} : |S|=\kappa_{s,t}(G,\comp{U})}
\end{equation}
Let $u \in \nodes(G)$; we denote by $\sat(G,u)$ the graph that results by adding edges between all vertices in $N_G[u]$. In other words, $\sat(G,u)$ is the graph where the set $N_G[u]$ has been \e{saturated}, and forms a clique.
For a set of vertices $U \subseteq \nodes(G)$, we denote by $\sat(G,U)$ the graph that results by adding edges between all vertices in $N_G[u]$ for all $u\in U$. Formally, $\nodes(\sat(G,U))=\nodes(G)$ and 
\begin{equation}
	\label{eq:ESatG}
	\edges(\sat(G,U))\eqdef\edges(G)\cup \bigcup_{u\in U}\set{(x,y):x,y \in N_G[u]}
\end{equation}
\eat{
	For this section, we recall the definitions of $\minlsepEst{G}{U}$ (see~\eqref{eq:MinSepGIU}), $\minsep_{st}(G,\comp{U})$ (see~\eqref{eq:MinimumIncludeExcludeFixed}), and $\sat(G,U)$ (see~\eqref{eq:ESatG}).} We prove that $\minlsepEst{G}{U}=\minlsepst{\sat(G,U)}$.
We proceed by a series of lemmas.

\begin{lemma}
	\label{lem:excludeNoClique}
	Let $u\in \nodes(G)$ such that $N_G[u]$ forms a clique. Then $u\notin S$ for every $S\in \minlsepst{G}$.
\end{lemma}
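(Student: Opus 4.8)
The plan is to argue by contradiction, leaning on the full-component characterization of minimal separators already available as Lemma~\ref{lem:fullComponents}. Suppose toward a contradiction that some minimal $s,t$-separator $S\in \minlsepst{G}$ contains $u$. Since $s,t\notin S$, the graph $G\sminus S$ has well-defined components $C_s(G\sminus S)$ and $C_t(G\sminus S)$, and because $S$ is minimal both of these are \emph{full} with respect to $S$; that is, by Lemma~\ref{lem:fullComponents}, $S=N_G(C_s(G\sminus S))=N_G(C_t(G\sminus S))$.

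From $u\in S=N_G(C_s(G\sminus S))$ I obtain a neighbor $a\in N_G(u)$ with $a\in C_s(G\sminus S)$, and symmetrically from $u\in S=N_G(C_t(G\sminus S))$ a neighbor $b\in N_G(u)$ with $b\in C_t(G\sminus S)$. Both $a$ and $b$ lie in $N_G(u)\subseteq N_G[u]$, which is a clique by hypothesis, so $(a,b)\in \edges(G)$.

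The contradiction is then immediate: $a$ and $b$ belong respectively to $C_s(G\sminus S)$ and $C_t(G\sminus S)$, two \emph{distinct} connected components of $G\sminus S$, and neither of them lies in $S$; hence the edge $(a,b)$ survives in $G\sminus S$ and would join these two components, which is impossible. Therefore no minimal $s,t$-separator can contain $u$, i.e.\ $u\notin S$ for every $S\in \minlsepst{G}$.

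I expect this argument to be entirely routine. The only substantive ingredient is the full-component characterization (Lemma~\ref{lem:fullComponents}), which guarantees that a vertex of a minimal separator has neighbors in both the $s$-side and the $t$-side full components; after extracting those two neighbors $a,b$, the clique assumption and the separation property collide in a single line. Consequently there is no real obstacle to overcome here, and the lemma will serve as the clean base case ensuring that the saturation construction $\sat(G,U)$ genuinely forbids the excluded vertices from appearing in any minimal $s,t$-separator.
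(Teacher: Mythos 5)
Your argument is correct and is essentially identical to the paper's proof: both invoke Lemma~\ref{lem:fullComponents} to extract neighbors of $u$ in the two full components $C_s(G\sminus S)$ and $C_t(G\sminus S)$, and then use the clique hypothesis on $N_G[u]$ to produce an edge joining the two components of $G\sminus S$, contradicting separation. No gaps.
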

\begin{proof}
	Let $S\in \minlsepst{G}$. By Lemma~\ref{lem:fullComponents}, $G\sminus S$ contains two full components $C_s(G\sminus S)$ and $C_t(G\sminus S)$ containing $s$ and $t$ respectively, such that $S=N_G(C_s(G\sminus S))=N_G(C_t(G\sminus S))$. Therefore, if $u \in S$, then it has two neighbors $v_1 \in C_s(G\sminus S)$ and $v_2 \in C_t(G\sminus S)$ that are connected by an edge (because $N_G[u]$ is a clique). But then, there is an $s,t$-path in $G\sminus S$ that avoids $S$, which contradicts the fact that $S$ is an $s,t$-separator.
\end{proof}

\begin{lemma}
	\label{lem:exclude_u}
	For every $S \in \minlsepEst{G}{u}$, there exists a connected component $C_u\in \cc_G(S)$ such that $N_G[u] \subseteq C_u \cup S$.
\end{lemma}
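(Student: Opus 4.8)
The plan is to observe that the statement follows almost immediately from the definition of connected components, once we note that $u$ itself survives the removal of $S$. Since $S\in\minlsepEst{G}{u}$, by definition $u\notin S$, so $u$ lies in exactly one connected component of $G\sminus S$; call it $C_u\in\cc_G(S)$. It then remains only to show that every neighbor of $u$ lands in $C_u\cup S$.

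First I would take an arbitrary $w\in N_G(u)$ and split into two cases. If $w\in S$, there is nothing to prove. Otherwise $w\notin S$, and since $u\notin S$ as well, the edge $(u,w)\in\edges(G)$ is not destroyed by the removal of $S$; that is, $(u,w)$ remains an edge of $G\sminus S$. Hence $u$ and $w$ lie in the same connected component of $G\sminus S$, which forces $w\in C_u$. Combining the two cases yields $N_G(u)\subseteq C_u\cup S$, and since $u\in C_u$ we conclude $N_G[u]=\set{u}\cup N_G(u)\subseteq C_u\cup S$, as required.

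There is no real obstacle here: the only points to keep straight are that membership in $\minlsepEst{G}{u}$ is exactly what guarantees $u\notin S$, and that an edge both of whose endpoints avoid $S$ persists in $G\sminus S$. In particular, I expect the proof to use neither the minimality of $S$ nor the fact that $S$ separates $s$ from $t$ in any essential way; the argument goes through for any vertex-set $S$ with $u\notin S$, so the statement is genuinely a structural observation about closed neighborhoods rather than about separators.
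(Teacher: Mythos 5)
Your proof is correct and follows essentially the same argument as the paper's: take $C_u$ to be the component of $G\sminus S$ containing $u$ (which exists since $u\notin S$), and observe that any neighbor of $u$ not in $S$ must lie in $C_u$ because the edge to $u$ survives the removal of $S$. The paper phrases this as a one-line contradiction, but the content is identical, and your closing remark that neither minimality nor the separating property of $S$ is used is accurate.
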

\begin{proof}	
	Let $C_u\in \cc_G(S)$ be the connected component that contains $u$. Such a component must exist because $u \notin S$. If $N_G(u) \not\subseteq C_u\cup S$, then there exists a vertex $v \in N_G(u)$ that resides in a connected component $C_v \in \cc_G(S)$ distinct from $C_u$. But this is a contradiction because, by definition, $(u,v)\in \edges(G)$. Hence, $C_v=C_u$, and this proves the claim.
\end{proof}

\begin{lemma}
	\label{lem:MinlSepIU}
	Let $u \in \nodes(G)$. Then $\minlsepEst{G}{u}=\minlsepst{\sat(G,\set{u})}$.
\end{lemma}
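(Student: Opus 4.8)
The plan is to prove the set equality by establishing the two inclusions $\minlsepEst{G}{u}\subseteq \minlsepst{\sat(G,\set{u})}$ and $\minlsepst{\sat(G,\set{u})}\subseteq \minlsepEst{G}{u}$ separately, exploiting the two structural facts that $\sat(G,\set{u})$ is obtained from $G$ only by adding edges inside $N_G[u]$, so that $\edges(G)\subseteq \edges(\sat(G,\set{u}))$, while $N_{\sat(G,\set{u})}[u]=N_G[u]$ now induces a clique.

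First I would handle $\minlsepEst{G}{u}\subseteq \minlsepst{\sat(G,\set{u})}$. Take $S\in \minlsepEst{G}{u}$, so $S\in \minlsepst{G}$ and $u\notin S$. By Lemma~\ref{lem:exclude_u} there is a component $C_u\in \cc_G(S)$ with $N_G[u]\subseteq C_u\cup S$; in particular every vertex of $N_G[u]\setminus S$ lies in the single component $C_u$. Hence each edge added in forming $\sat(G,\set{u})$, having both endpoints in $N_G[u]\subseteq C_u\cup S$, either is incident to $S$ or joins two vertices that already lie together in $C_u$. Consequently the connected components of $\sat(G,\set{u})\sminus S$ coincide with those of $G\sminus S$, and so $S$ still separates $s$ from $t$ in $\sat(G,\set{u})$. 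Minimality then comes for free: since $\edges(G)\subseteq \edges(\sat(G,\set{u}))$, any proper subset of $S$ separating $s$ from $t$ in $\sat(G,\set{u})$ would also separate them in $G$, contradicting $S\in \minlsepst{G}$. Thus $S\in \minlsepst{\sat(G,\set{u})}$.

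For the reverse inclusion, let $S\in \minlsepst{\sat(G,\set{u})}$. Since $N_{\sat(G,\set{u})}[u]=N_G[u]$ is a clique in $\sat(G,\set{u})$, Lemma~\ref{lem:excludeNoClique} gives $u\notin S$. Because $\edges(G)\subseteq \edges(\sat(G,\set{u}))$, the set $S$ is an $s,t$-separator of $G$, so only minimality in $G$ remains. Suppose it fails, and pick a minimal $s,t$-separator $S'$ of $G$ with $S'\subsetneq S$. Since $u\notin S$ we have $u\notin S'$, hence $S'\in \minlsepEst{G}{u}$, and by the inclusion already established $S'\in \minlsepst{\sat(G,\set{u})}$. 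But then $S'$ is an $s,t$-separator of $\sat(G,\set{u})$ strictly contained in $S$, contradicting the minimality of $S$ in $\sat(G,\set{u})$. Therefore $S$ is minimal in $G$ and avoids $u$, i.e.\ $S\in \minlsepEst{G}{u}$.

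The main obstacle is the first inclusion: one must argue that saturating $N_G[u]$ creates no new $s,t$-path bypassing $S$. This is exactly where Lemma~\ref{lem:exclude_u} is essential, as it confines all newly added edges to $C_u\cup S$, i.e.\ to a single component of $G\sminus S$, so the component structure on $\nodes(G)\setminus S$ is unchanged. The reverse inclusion is then a short bootstrapping argument that reuses the forward inclusion together with Lemma~\ref{lem:excludeNoClique}; the only care needed there is to first pass to a minimal sub-separator $S'$ before invoking the forward direction.
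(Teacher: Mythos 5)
Your proof is correct. The first inclusion is essentially the paper's argument: Lemma~\ref{lem:exclude_u} confines all added edges to $C_u\cup S$, so the components of $G\sminus S$ survive in $\sat(G,\set{u})\sminus S$, and minimality transfers because $\edges(G)\subseteq\edges(\sat(G,\set{u}))$. Where you genuinely diverge is the reverse inclusion. The paper proves minimality of $S$ in $G$ directly, by showing that the components $C_s$ and $C_t$ of $\sat(G,\set{u})\sminus S$ containing $s$ and $t$ are \emph{full} components of $S$ in $G$ as well, which requires a somewhat delicate path argument (a case analysis on whether $u\in C_s$, and rerouting any shortest path that uses an added edge $(y,w)$ through $u$ itself) before invoking Lemma~\ref{lem:fullComponents}. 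You instead bootstrap: if $S$ were not minimal in $G$, a minimal sub-separator $S'\subsetneq S$ would still avoid $u$, hence by the already-proved forward inclusion $S'$ would separate $s$ from $t$ in $\sat(G,\set{u})$, contradicting minimality of $S$ there. This is shorter, avoids the component analysis entirely, and is not circular since the forward inclusion is established independently first; what it gives up is the explicit structural byproduct of the paper's argument, namely that the $s$- and $t$-components of $S$ are literally unchanged between $G$ and $\sat(G,\set{u})$. Both proofs are sound.
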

\begin{proof}
	Let $S \in \minlsepEst{G}{u}$. Let $C_u \in \cc_G(S)$ denote the connected component containing $u$ in $G\sminus S$.
	By Lemma~\ref{lem:exclude_u}, $N_G[u] \subseteq C_u\cup S$. Therefore, no added edge in $\edges(\sat(G,\set{u}))\setminus \edges(G)$ connects vertices in distinct connected components in $\cc_G(S)$. Hence, $S$ separates $s$ and $t$ also in $\sat(G,\set{u})$. Since the addition of edges cannot eliminate any path between $s$ and $t$, we get that $S$ is a minimal $s,t$-separator also in $\sat(G,\set{u})$. \eat{The claim for the entire set $U$ follows by induction.}

	Now, let $S \in \minlsepst{\sat(G,\set{u})}$. Hence, $N_G[u]$ is a clique in $\sat(G,\set{u})$. By Lemma~\ref{lem:excludeNoClique}, $u \notin S$.
	Since $G$ is a subgraph of $\sat(G,\set{u})$, then if $S$ separates $s$ from $t$ in $\sat(G,\set{u})$, it must separate $s$ from $t$ in $G$. Hence, $S$ is an $s,t$-separator in $G$ where $u\notin S$. 
	It is left to show that $S$ is a \e{minimal} $s,t$-separator in $G$. To that end, we show that the connected components $C_s,C_t \in \cc_{\sat(G,\set{u})}(S)$, containing $s$ and $t$ respectively, are full connected components of $S$ also in $G$. That is, we show that $S=N_G(C_s)=N_G(C_t)$. By Lemma~\ref{lem:fullComponents}, this proves that $S\in \minlsepEst{G}{u}\subseteq \minlsepst{G}$.
	
	Denote by $D_s,D_t\in \cc_G(S)$ the connected components containing $s$ and $t$ respectively in $G\sminus S$.
	Since $G[D_s]$ ($G[D_t]$) is connected, $D_s\cap S=\emptyset$ ($D_t\cap S=\emptyset$), and $s\in D_s$ ($t\in D_t$), then $D_s \subseteq C_s$ ($D_t \subseteq C_t$). We now prove that $C_s \subseteq D_s$. We first consider the case where $u \notin C_s$. Hence, by definition of connected component of $G\sminus S$, we have that $N_G[u]\cap C_s=\emptyset$. Since the only added edges are between vertices in $N_G(u)$, then $\edges(\sat(G,u)[C_s])=\edges(G[C_s])$. Therefore $C_s$ is a connected component containing $s$ also in $G\sminus S$, thus $C_s\subseteq D_s$, and $C_s=D_s$. Since $N_G[u]\cap C_s=\emptyset$, then $N_G(C_s)=N_{\sat(G,\set{u})}(C_s)=S$ as required. 
	
	We now consider the case where $u\in C_s$, and
	suppose, by way of contradiction, that \eat{$D_s \subset C_s$} $C_s \not\subseteq D_s$. Let $v \in C_s \setminus D_s$. This means that there is a path from $s$ to $v$ in $\sat(G,\set{u})$ that avoids $S$. Let $P$ denote the shortest such path. Then $P$ passes through a single edge $(y,w)\in \edges(\sat(G,u))\setminus \edges(G)$. In other words, there is a path $P_{vy}$ from $v$ to $y$ in $G$ that avoids $S$, and a path $P_{sw}$ from $s$ to $w$ in $G$ that avoids $S$. In particular, $\nodes(P_{sw})\subseteq D_s$.
	By construction, $\set{y,w}\subseteq N_G(u)$. \eat{Since the path $P$ from $s$ to $v$ lies entirely in $G[D_s]$, then $\set{y,w}\subseteq D_s$. Hence, $u\in C_s$, and $\set{y,w}\subseteq N_G(u)$.}
	Since $\nodes(P_{ws})\cap S=\emptyset$,  $w\in N_G(u)$, and $\set{u,w,y}\cap S=\emptyset$, this means that $\set{u,w,y}\subseteq D_s$. 
	But this means that the path $P_{vy}uP_{ws}$ is contained in $G$, and avoids $S$. Consequently, $v\in D_s$, and we arrive at a contradiction.
	Hence, $D_s=C_s$. Since $u\in C_s$, we get that $N_G(C_s)=N_{\sat(G,\set{u})}(C_s)$, making $C_s$ a full connected component of $S$ also in $G$. This completes the proof.
	
	\eat{
		Suppose, by way of contradiction, that there is a vertex $w \in S$, such that $N_G(w)\cap C_s=\emptyset$ or $N_G(w)\cap C_t=\emptyset$. Wlog suppose that $N_G(w)\cap C_s(S)=\emptyset$. Since $S$ is a minimal separator in $\sat(G,U)$, then (by Lemma~\ref{lem:fullComponents}) there is a vertex $v\in C_s$ such that $(w,v)\in \edges(\sat(G,U))\setminus \edges(G)$.
		This, in turn, means that $w$ and $v$ have a common neighbor $u\in U$ in $G$. That is, $(u,v),(u,w)\in \edges(G)$. But since $v\in C_s$, $u\notin S$, and $(u,v)\in \edges(G)$, then $u\in C_s$. But this means that $u\in N_G(w)\cap C_s$, which is a contradiction to our assumption that $N_G(w)\cap C_s=\emptyset$. Hence, $N_G(w)\cap C_s\neq \emptyset$. Symmetrically, $N_G(w)\cap C_t\neq \emptyset$. Thus, $C_s$ and $C_t$ are full components of $S$ in $G$, containing $s$ and $t$ respectively, and by Lemma~\ref{lem:fullComponents}, $S\in \minlsepEst{G}{U}$.
	}
\end{proof}

\begin{theorem}
	\label{thm:MinlSepIU}
	Let $U \subseteq \nodes(G)$. Then $\minlsepEst{G}{U}=\minlsepst{\sat(G,U)}$.
\end{theorem}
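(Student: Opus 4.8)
The plan is to prove the two inclusions $\minlsepEst{G}{U}\subseteq\minlsepst{\sat(G,U)}$ and $\minlsepst{\sat(G,U)}\subseteq\minlsepEst{G}{U}$, lifting the single-vertex Lemma~\ref{lem:MinlSepIU} to an arbitrary set $U$. The natural scaffolding is induction on $|U|$, with base cases $U=\emptyset$ (where $\sat(G,\emptyset)=G$ and the claim is the definition) and $|U|=1$ (which is exactly Lemma~\ref{lem:MinlSepIU}); equivalently, one may use the set-theoretic identity $\minlsepEst{G}{U}=\bigcap_{u\in U}\minlsepEst{G}{u}$ together with Lemma~\ref{lem:MinlSepIU}. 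Either way, essentially all of the difficulty is concentrated in one structural fact about the saturated graph.

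For the inclusion $\minlsepEst{G}{U}\subseteq\minlsepst{\sat(G,U)}$ I would follow the forward half of the proof of Lemma~\ref{lem:MinlSepIU}. Take $S\in\minlsepEst{G}{U}$. Since $S$ excludes every $u\in U$, Lemma~\ref{lem:exclude_u} gives, for each $u$, a single component $C_u\in\cc_G(S)$ with $N_G[u]\subseteq C_u\cup S$; hence every edge added in forming $\sat(G,U)$ joins two vertices inside some $C_u\cup S$. Consequently no two distinct components of $G\sminus S$ are merged in $\sat(G,U)\sminus S$, the full components $C_s(G\sminus S)$ and $C_t(G\sminus S)$ survive with unchanged neighbourhood $S$, and Lemma~\ref{lem:fullComponents} yields $S\in\minlsepst{\sat(G,U)}$. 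This direction is routine.

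The converse is where the real work lies. Given $S\in\minlsepst{\sat(G,U)}$ I must show both that $S$ is a minimal $s,t$-separator of $G$ and that $S\cap U=\emptyset$. The first part reuses the reverse half of Lemma~\ref{lem:MinlSepIU}: taking a shortest $s,t$-path witnessing any discrepancy between the components of $G\sminus S$ and of $\sat(G,U)\sminus S$, one reroutes it across a single added edge (which always lies inside some saturated clique $N_G[u]$) to show that the components containing $s$ and $t$ coincide in the two graphs, whence $S$ is a minimal separator of $G$ by Lemma~\ref{lem:fullComponents}.

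I expect the main obstacle to be proving $S\cap U=\emptyset$. In the single-vertex case this was immediate from Lemma~\ref{lem:excludeNoClique}, because $N_G[u]$ is a clique in $\sat(G,\set{u})$. For a set $U$ this argument breaks: saturating the neighbourhoods of the other vertices of $U$ can give a vertex $u\in U$ extra neighbours, so the closed neighbourhood of $u$ in $\sat(G,U)$ need not be a clique, and Lemma~\ref{lem:excludeNoClique} does not apply. I would instead prove the sharper statement that if $S$ separates $s$ from $t$ in $\sat(G,U)$ then so does $S\setminus U$; combined with the minimality of $S$, this forces $S\cap U=\emptyset$. To prove it, I would take an $s,t$-path avoiding $S\setminus U$ that is extremal, minimizing first the number of its vertices lying in $S$ and then its length, and argue that any $w\in U\cap S$ it meets can be bypassed: the edges incident to $w$ all lie in cliques $N_G[w]$ or $N_G[w']$ with $w$ adjacent to $w'$ in $G$ (so $w'\in N_G[w]$), and rerouting through such a neighbour strictly improves the extremal measure, a contradiction. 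Making this rerouting and its termination airtight, equivalently verifying in the inductive formulation that replacing $\sat(G,U)$ by $\sat(\sat(G,U\setminus\set{u}),\set{u})$, whose saturated neighbourhood of $u$ is larger, leaves the minimal $s,t$-separators unchanged, is the crux of the argument.
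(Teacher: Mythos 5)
Your reading of the inductive step is more careful than the paper's own one-line proof, and the difficulty you isolate is real: for $|U|\geq 2$ the closed neighbourhood $N_{\sat(G,U)}[u]$ of a vertex $u\in U$ need not be a clique in $\sat(G,U)$, so Lemma~\ref{lem:excludeNoClique} no longer forces $u\notin S$, and the single-vertex lemma applied to the partially saturated graph saturates the enlarged set $N_{\sat(G,U\setminus\set{u})}[u]$ rather than $N_G[u]$, landing on a graph with strictly more edges than $\sat(G,U)$. However, the repair you propose --- that if $S$ separates $s$ from $t$ in $\sat(G,U)$ then so does $S\setminus U$ --- is false, and no extremal rerouting argument can establish it, because Theorem~\ref{thm:MinlSepIU} itself fails for the one-shot saturation defined in~\eqref{eq:ESatG}. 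Take $G$ to be the path $s,a,u_1,u_2,b,t$ and $U=\set{u_1,u_2}$. Then $\sat(G,U)$ adds only the edges $(a,u_2)$ and $(u_1,b)$, and $\set{u_1,u_2}$ is a \emph{minimal} $s,t$-separator of $\sat(G,U)$: deleting it leaves the components $\set{s,a}$ and $\set{b,t}$, while deleting $u_1$ alone leaves the path $s,a,u_2,b,t$ and deleting $u_2$ alone leaves $s,a,u_1,b,t$. Since $\set{u_1,u_2}$ meets $U$, it does not lie in $\minlsepEst{G}{U}=\set{\set{a},\set{b}}$, and $S\setminus U=\emptyset$ is certainly not a separator. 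The same example refutes the ``crux'' identity you state at the end: $\sat(\sat(G,\set{u_1}),\set{u_2})$ contains the extra edge $(a,b)$, which destroys the spurious separator $\set{u_1,u_2}$, so the two graphs do \emph{not} have the same minimal $s,t$-separators.

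What the simple induction actually proves is the statement for the \emph{iterated} saturation $H_k$, where $H_0=G$ and $H_i=\sat(H_{i-1},\set{u_i})$: Lemma~\ref{lem:MinlSepIU} gives $\minlsepst{H_i}=\set{S\in\minlsepst{H_{i-1}}:u_i\notin S}$ at each step, hence $\minlsepst{H_k}=\minlsepEst{G}{U}$. That iterated version is also all that is used downstream --- Algorithms~\ref{alg:EnumSep} and~\ref{alg:EnumSepMin} saturate one vertex at a time and carry the resulting graph forward --- so the correct course is to restate the theorem with $\sat(G,U)$ replaced by the iterated saturation, at which point your inductive scaffolding with base case Lemma~\ref{lem:MinlSepIU} goes through verbatim and the ``crux'' evaporates. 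Your forward inclusion $\minlsepEst{G}{U}\subseteq\minlsepst{\sat(G,U)}$ via Lemma~\ref{lem:exclude_u} is correct as written (and indeed holds for either definition); it is only the converse containment, and with it the equality asserted by the theorem, that cannot be saved for the one-shot $\sat(G,U)$.
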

\begin{proof}
	This follows from Lemma~\ref{lem:MinlSepIU} by a simple induction on $|U|$. 
\end{proof}

\begin{corollary}
	\label{cor:IUThm}
	The following holds for every $U\subseteq \nodes(G)$. 
	$\minsep_{st}(G,\comp{U})=\minsep_{st}(\sat(G,U))$.
\end{corollary}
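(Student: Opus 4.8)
The plan is to derive this essentially for free from Theorem~\ref{thm:MinlSepIU}, using only one additional, standard observation: a minimum-cardinality $s,t$-separator is automatically minimal.

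First I would invoke Theorem~\ref{thm:MinlSepIU}, which gives the set-level identity $\minlsepEst{G}{U}=\minlsepst{\sat(G,U)}$; the collection of minimal $s,t$-separators of $G$ that exclude $U$ is literally the same set as the collection of \emph{all} minimal $s,t$-separators of $\sat(G,U)$. Now $\minsep_{s,t}(G,\comp{U})$ was defined in~\eqref{eq:MinimumIncludeExcludeFixed} to be exactly the minimum-cardinality members of $\minlsepEst{G}{U}$. Since the minimum-cardinality members of a finite set of vertex-sets depend only on the set itself, the equality of the two collections forces $\minsep_{s,t}(G,\comp{U})$ to equal the minimum-cardinality members of $\minlsepst{\sat(G,U)}$.

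It then remains to identify the minimum-cardinality members of $\minlsepst{\sat(G,U)}$ with $\minsep_{s,t}(\sat(G,U))$, the set of minimum $s,t$-separators of $\sat(G,U)$. The key point is that every $s,t$-separator contains a minimal one of no larger size, while conversely every minimal separator is itself a separator; hence the minimum cardinality over all $s,t$-separators of $\sat(G,U)$ coincides with the minimum cardinality over the minimal ones, i.e. $\kappa_{s,t}(\sat(G,U))=\min\set{|S|:S\in\minlsepst{\sat(G,U)}}$. Consequently a minimal $s,t$-separator of $\sat(G,U)$ attaining this common minimum is a genuine minimum $s,t$-separator, and conversely every minimum $s,t$-separator, being minimal, lies in $\minlsepst{\sat(G,U)}$ and attains the minimum. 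The two descriptions therefore agree, and chaining the two equalities yields $\minsep_{s,t}(G,\comp{U})=\minsep_{s,t}(\sat(G,U))$.

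Since the argument is purely set-theoretic once Theorem~\ref{thm:MinlSepIU} is in hand, I do not expect any genuine obstacle; the only step deserving a moment's care is the identification of ``minimum-cardinality minimal separator'' with ``minimum separator'', which is exactly the remark that every minimum $s,t$-separator is necessarily minimal.
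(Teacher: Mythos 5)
Your proposal is correct and follows essentially the same route as the paper: both derive the corollary purely set-theoretically from the identity $\minlsepEst{G}{U}=\minlsepst{\sat(G,U)}$ of Theorem~\ref{thm:MinlSepIU} by restricting to the minimum-cardinality members. You are merely a bit more explicit than the paper about the (standard) identification of ``minimum-cardinality minimal $s,t$-separator'' with ``minimum $s,t$-separator'' in $\sat(G,U)$, which the paper's proof uses implicitly when it equates the two collections at $k=\kappa_{s,t}(\sat(G,U))$.
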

\begin{proof}
	In Theorem~\ref{thm:MinlSepIU}, we have shown that for any $U \subseteq \nodes(G)$, it holds that $\minlsepEst{G}{U}=\minlsepst{\sat(G,U)}$. 
	Let $0\leq k\leq n$ be an integer, and $\minlsepEst{G}{U}_k$ and $\minlsepst{\sat(G,U)}_k$ denote the sets of minimal $s,t$-separators in $\minlsepEst{G}{U}$ and $\minlsepst{\sat(G,U)}$ whose size is exactly $k$, respectively. Since 
	$\minlsepEst{G}{U}= \minlsepst{\sat(G,U)}$, then $\minlsepEst{G}{U}_k= \minlsepst{\sat(G,U)}_k$ for every integer $0 \leq k \leq n$. In particular, this is the case for $k=\kappa_{s,t}(G,\comp{U})=\kappa_{s,t}(\sat(G,U))$. Hence, $\minsep_{s,t}(G,\comp{U})=\minsep_{s,t}(\sat(G,U))$ (see~\eqref{eq:MinimumIncludeExcludeFixed}).
\end{proof}
\subsection{Ranked Enumeration Algorithms for all $s,t$-separators in Ranked Order, and an Enumeration Algorithm for Minimum $s,t$-separators}
\label{sec:enumAlgs}
We directly apply the results of Sections~\ref{sec:includeMinSeps}, and~\ref{sec:excludeU} to the task of enumerating all $s,t$-separators in ranked order in Algorithm~\ref{alg:EnumSep}, and to the task of enumerating the minimum-cardinality $s,t$-separators $\minsep_{s,t}(G)$ in Algorithm~\ref{alg:EnumSepMin}. Both algorithms apply the standard Lawler technique with inclusion and exclusion constraints (see~\cite{DBLP:journals/pvldb/GolenbergKS11} for an overview), leading to simple polynomial-delay algorithms.

\begin{algserieswide}{H}{Algorithm for enumerating all $s,t$-separators in ranked order\label{alg:EnumSep}}	
	\begin{insidealgwide}{RankedEnumSeps}{$G$, $\set{s,t}$}					
		\STATE Let $S$ be a minimum-weight $s,t$-separator of $G$
		\STATE $Q \gets \emptyset$	
		\STATE $Q.\mathrm{push}(\triple{G}{S}{I=\emptyset})$  \COMMENT{$Q$ is sorted by $|S|$}		
		\WHILE{$Q\neq \emptyset$}
		\STATE $\triple{G}{S}{I} \gets Q.\mathrm{pop}()$	
		\STATE \texttt{Print }$S$
		\FORALL{$v_i \in S\setminus I=\set{v_1,\dots,v_q}$}
		\STATE $I_i\gets I\cup \set{v_1,\dots, v_{i-1}}$
		\STATE $H\gets \sat(G,v_i)$ \COMMENT{Exclude $v_i$}
		\STATE $T\gets$ minimum-weight $s,t$-separator in $H\sminus I_i$
		\STATE $Q.\mathrm{push}(\triple{H}{T\cup I_i}{I_i})$
		\ENDFOR	
		\ENDWHILE
	\end{insidealgwide}
\end{algserieswide}
\begin{algserieswide}{H}{Algorithm for enumerating minimum-cardinality $s,t$-separators \label{alg:EnumSepMin}}
	\begin{insidealgwide}{MinimumSeps}{$G$, $\set{s,t}$}					
		\STATE Let $S$ be a minimum-cardinality $s,t$-separator of $G$
		\STATE $\kappa_{s,t}(G)\gets |S|$
		\STATE $Q \gets \emptyset$	
		\STATE $Q.\mathrm{push}(\triple{G}{S}{I=\emptyset})$  
		\WHILE{$Q\neq \emptyset$}
		\STATE $\triple{G}{S}{I} \gets Q.\mathrm{pop}()$	
		\STATE \texttt{Print }$S$
		\FORALL{$v_i \in S\setminus I=\set{v_1,\dots,v_q}$}
		\STATE $I_i\gets I\cup \set{v_1,\dots, v_{i-1}}$
		\STATE $H\gets \sat(G,v_i)$ \COMMENT{Exclude $v_i$ (see Lemma~\ref{lem:MinlSepIU})}
		\STATE $T\gets$ minimum $s,t$-separator of $H\sminus I_i$		
		\IF{$|T|=\kappa_{s,t}(G)-|I_i|$} 
		\STATE $Q.\mathrm{push}(\triple{H}{T}{I_i})$ \COMMENT{See Corollary~\ref{corr:vertexSetInclude}}
		\ENDIF	
		\ENDFOR	
		\ENDWHILE
	\end{insidealgwide}	
\end{algserieswide}

\newpage 
\bibliography{main}
\newpage 
\appendix

\section{Minimal Separators and Chordless $s,t$-paths}
\label{sec:chordlessPath}
In this section we show that given a set $I\subseteq \nodes(G)$, it is NP-hard to decide whether there exists a minimal $s,t$-separator $S\in \minlsepst{G}$ such that $I\subset S$. We prove this by showing a reduction from the problem \textsc{3-in-a-path} that
asks whether there is an induced (or chordless) path containing three given terminals. Bienstock~\cite{BIENSTOCK199185} has shown that deciding whether two terminals belong to an induced cycle is NP-hard. From this, it is easy to show that the \textsc{3-in-a-path}  problem is NP-hard even for graphs whose degree is at most three~\cite{DBLP:journals/dam/DerhyP09}. In fact, even deciding whether there is such a path of length at most $k$ was shown to be $W[1]$-complete with respect to the length parameter $k$~\cite{DBLP:journals/tcs/HaasH06}. The related problem, called \textsc{three-in-a-tree}, for deciding whether there is an induced tree containing three terminals, is in PTIME~\cite{10.1145/3357713.3384235}.

\eat{
\begin{theorem}
Let $S$ be a minimal $st$-separator, and let $u\in S$.
We define $C_s^*\eqdef C_s(S)\cup \set{u}$,  $C_t^*\eqdef C_t(S)\cup \set{u}$.
There exists a minimal $st$-separator $X$ that includes $S{\setminus} \set{u}$ and excludes $u$ if and only if one of the following holds:
\begin{enumerate}
	\item $S{\setminus}\set{u} \subseteq N_G(D_t)$ where $D_t$ is the connected component in $G[\nodes(G)\setminus N_G(C_s^*)]$ that contains $t$.
	\item $S{\setminus}\set{u} \subseteq N_G(D_s)$ where $D_s$ is the connected component in $G[\nodes(G)\setminus N_G(C_t^*)]$ that contains $s$.
\end{enumerate}	
\end{theorem}
\begin{proof}
Observe that $C_s^*$ and $C_t^*$ are connected components that contain $s$ and $t$ respectively, and that by construction, $S{\setminus}\set{u} \subseteq N_G(C_s^*)\cap N_G(C_t^*)$. Assume wlog that item 1 holds. By Theorem~\ref{thm:Takata}, there is exactly one minimal $st$-separator contained in $N_G(C_s^*)$, and this is $N_G(D_t)$. Therefore, if $N_G(D_t)\supseteq S{\setminus}\set{u}$, then $N_G(D_t)$ is a minimal $st$-separator that includes $S{\setminus}\set{u}$. Since $u \in C_s^*$ then $u \notin N_G(C_s^*)$ and since $N_G(D_t) \subseteq N_G(C_s^*)$, then $u\notin N_G(D_t)$ as required.

Let $X$ be an $st$-minimal separator such that $X \supseteq S{\setminus}\set{u}$, and $u \notin X$. Suppose wlog that $u \in C_s(X)$. In particular, this means that $C_s(X)\supseteq C_s^*$. To see why, observe that since $C_s(S)$ is a connected component, then for every vertex $a \in C_s(S)$ there is a path from $a$ to a vertex $b\in S$ that lies entirely in $C_s(S)$. If $b\in X$,
then this path must reside in $C_s(X)$, and hence $a \in C_s(X)$. Otherwise, there is a path from $a$ to $u$ which belongs to $C_s(X)$ by definition, and hence $a\in C_s(X)$. 

Suppose that item 1 does not hold, and that there is a vertex $v \in S\setminus \set{u}$ such that $v \notin N_G(D_t)$. Since $v\in S$ then $N_G(v)\cap C_t(S)\neq \emptyset$. But since $N_G(v)\cap D_t=\emptyset$, it means
that $N_G(v) \subseteq N_G[C_s^*]$.
But then, since $C_s^*\subseteq C_s(X)$, then $N_G(v) \subseteq N_G[C_s^*] \subseteq N_G[C_s(X)]$. But this means that $N_G(v)\cap C_t(X)=\emptyset$, and since $X$ is a minimal $st$-separator, then by Lemma~\ref{lem:fullComponents}, $v \notin X$ which is a contradiction. Therefore, item 1 must hold. The case where $u\in C_t(X)$ is symmetric.
\end{proof}
}

\eat{
In other words, if $I\subseteq N_G(s)$, then we can use Theorem~\ref{thm:closeTos} to decide whether $\minlsep_{st}(G,I,\emptyset)$ is empty.

Let $X$ be an $st$-minimal separator that includes $S\setminus \set{u}$ and excludes $u$. If 
\begin{lemma}
\label{lem:minlSepCharacterizeClose}
Let $I\subseteq N_G(s)$, and let $S$ be the minimal $st$-separator that is close to $s$. There is a minimal $st$-separator that contains $I$ if and only if $I \subseteq S$.
\end{lemma}
\begin{proof}
Suppose, by way of contradiction, that there is a minimal $st$-separator $X$ that includes a vertex $v \in I \setminus S$.
Since, by Theorem~\ref{thm:closeTos}, $S$ is the unique minimal $st$-separator that is close to $s$, then if $P$ is an $st$-path in $G$ then it passes through a vertex in $S$.
Now, consider an $st$-path through $v$. By the previous, this path must pass through a vertex in $S \subseteq N_G(s)$, and hence this path avoids $v\in N_G(s)$. Therefore, this path must pass through a vertex in $X\setminus \set{v}$. But this means that $X$ is not minimal, which is a contradiction.
\end{proof}
}
\begin{theorem}
\label{thm:chordlessstpath}
Let $v \in \nodes(G)$. There exists a minimal $s,t$-separator that includes $v$ if and only if there exists a chordless $s,t$-path through $v$.
\eat{
 two paths: $P_{sv}=(s,u_1,\dots,u_k,v)$ from $s$ to $v$, and $P_{vt}=(v,w_1,\dots,w_\ell,t)$ from $v$ to $t$ such that $\set{u_1,\dots,u_k}\cap \set{w_1,\dots,w_\ell}=\emptyset$, and 
for every pair $i\in [1,k]$ and $j\in [1,\ell]$, it holds that $(u_i,w_j)\notin \edges(G)$.
}
\end{theorem}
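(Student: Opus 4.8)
The statement is an ``if and only if,'' so the plan is to prove the two implications separately, in both cases using the full-component characterization of minimal separators (Lemma~\ref{lem:fullComponents}) as the main tool.

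For the forward direction, suppose $S\in \minlsepst{G}$ with $v\in S$. By Lemma~\ref{lem:fullComponents}, $G\sminus S$ has full components $C_s(G\sminus S)$ and $C_t(G\sminus S)$ containing $s$ and $t$, with $S=N_G(C_s(G\sminus S))=N_G(C_t(G\sminus S))$; in particular $v$ has a neighbor in each of the two components. I would then take a shortest path $Q_s$ from $s$ to $v$ inside the induced subgraph $G[C_s(G\sminus S)\cup \set{v}]$ and a shortest path $Q_t$ from $v$ to $t$ inside $G[C_t(G\sminus S)\cup \set{v}]$; both exist since each subgraph is connected and contains $v$. A shortest path in an induced subgraph is itself induced, so $Q_s$ and $Q_t$ are chordless, and crucially the endpoint $v$ has exactly one neighbor on each of them. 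Concatenating at $v$ yields an $s,t$-path through $v$, and it is chordless because the internal vertices of $Q_s$ lie in $C_s(G\sminus S)$, those of $Q_t$ lie in $C_t(G\sminus S)$, and there are no edges between these two components of $G\sminus S$. Using shortest paths in the two augmented induced subgraphs (rather than arbitrary paths to fixed neighbors of $v$) is precisely what rules out a chord incident to $v$, and it handles the boundary cases $v\in N_G(s)$ or $v\in N_G(t)$ uniformly.

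For the backward direction, suppose $P=s,a_1,\dots,a_k,v,b_1,\dots,b_\ell,t$ is a chordless $s,t$-path through $v$ (with $k,\ell\geq 0$). Let $A\eqdef \set{s,a_1,\dots,a_k}$ and $B\eqdef \set{b_1,\dots,b_\ell,t}$, and form $G'$ by merging $A$ into a single vertex named $s$ and $B$ into a single vertex named $t$. Since $P$ is chordless and has length at least two, $s$ and $t$ are non-adjacent and there are no edges between $A$ and $B$, so $s,t$ remain non-adjacent in $G'$, while $v$ is a common neighbor of $s$ and $t$ in $G'$. Because $s$ and $t$ are non-adjacent, $G'$ admits a minimal $s,t$-separator $S'$; and since the length-two path $s,v,t$ must be hit by $S'$, necessarily $v\in S'$. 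It then remains to show that $S'$ is a minimal $s,t$-separator of the original $G$ as well, which is where the real work lies.

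The plan for that last step is to lift the full components of $S'$ in $G'$ back to $G$. Writing $C_s(G'\sminus S')$ and $C_t(G'\sminus S')$ for the two full components supplied by Lemma~\ref{lem:fullComponents}, I would set $\hat C_s\eqdef (C_s(G'\sminus S')\setminus \set{s})\cup A$ and $\hat C_t\eqdef (C_t(G'\sminus S')\setminus \set{t})\cup B$, reinstating the contracted subpaths. Each of $A$ and $B$ is a connected subpath, and every $G'$-edge incident to a merged vertex arises from a genuine $G$-edge at some vertex of $A$ (resp.\ $B$), so $\hat C_s$ and $\hat C_t$ are connected in $G\sminus S'$, are disjoint, and contain $s$ and $t$. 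The crux is verifying $N_G(\hat C_s)=S'=N_G(\hat C_t)$: one checks both inclusions by tracing each boundary edge through the merge, using that edges among unmerged vertices are preserved verbatim while edges at a merged vertex correspond to edges at $A$ or $B$. Once this equality is established, $\hat C_s$ and $\hat C_t$ are full components of $S'$ in $G$, and Lemma~\ref{lem:fullComponents} yields $S'\in \minlsepst{G}$ with $v\in S'$. I expect this boundary-preservation check to be the main obstacle, since it is the only place where the merge could in principle create or destroy adjacencies; carefully tracking which vertices were contracted is what makes it go through.
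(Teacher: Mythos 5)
Your proposal is correct and follows essentially the same route as the paper: the forward direction takes shortest paths from $s$ to $v$ and from $v$ to $t$ through the two full components supplied by Lemma~\ref{lem:fullComponents} and concatenates them, and the backward direction contracts the two subpaths $s,a_1,\dots,a_k$ and $b_1,\dots,b_\ell,t$ into $s$ and $t$, forces $v$ into any minimal separator of the contracted graph via the path $s,v,t$, and lifts the full components back to $G$ by reinstating the contracted vertices. The boundary-preservation check you flag as the crux is exactly the step the paper also handles (somewhat tersely) by observing that the lifted components contain the original full components and remain connected and disjoint.
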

\begin{proof}
Let $S\in \minlsepst{G}$ where $v\in S$, and let $C_s(G\sminus S)$, $C_t(G\sminus S)$ denote the connected components of $G\sminus S$ that contain $s$ and $t$ respectively. By Lemma~\ref{lem:fullComponents}, there exists a path from $s$ to $v$ where all the internal vertices belong to $C_s(G\sminus S)$. Let $P_{sv}$ denote the shortest such path. Likewise, let $P_{vt}$ denote the shortest path from $v$ to $t$ where all internal vertices belong to $C_t(G\sminus S)$. Clearly, $P_{sv}$ and $P_{vt}$ are both chordless paths. Since $C_s(G\sminus S)\cap C_t(G\sminus S)=\emptyset$, then $\nodes(P_{sv})\cap \nodes(P_{vt})=\set{v}$. Since $S\in \minlsepst{G}$, then there are no edges between vertices in $C_s(G\sminus S)$ and vertices in $C_t(G\sminus S)$. Consequently, there are no edges between vertices in $\nodes(P_{sv})$ and $\nodes(P_{vt})$. Therefore, the path $P_{sv}P_{vt}$ is a chordless $s,t$-path that passes through $v$. In other words, if $v\in S$, then there is an induced $s,t$-path through $v$.

Let $P=s,a_1,\dots,a_k,v,b_1,\dots,b_\ell,t$  denote a simple, chordless $s,t$-path through $v$. If $v \in N_G(s)$ ($v \in N_G(t)$), then $k=0$ ($\ell=0$). Contract all edges on the sub-path $P_a\eqdef (s,a_1,\dots,a_k)$ such that $P_a$ is reduced to an edge $(s,v)$. Likewise, contract all edges on the sub-path $P_b\eqdef (b_1,\dots,b_\ell,t)$ such that $P_b$ is reduced to an edge $(v,t)$. Denote the resulting graph by $G'$. Since $P$ is chordless, then there are no edges between $(a_i,b_j)$ for all $i\in [1,k]$ and all $j\in [1,\ell]$. Therefore, following the contraction, $s$ and $t$ are not adjacent in the new graph $G'$, and hence separable.

Let $S'\in \minlsepst{G'}$ be a minimal $s,t$-separator in $G'$. By construction, $v \in N_{G'}(s)\cap N_{G'}(t)$, and hence $v \in S'$. It is left to show that $S'\in \minlsepst{G}$. Let $C_{s}(S'\sminus G')$ and $C_{t}(S'\sminus G')$ denote the full connected components of $G'\sminus S'$ containing $s$ and $t$ respectively.
Define $D_s(S'\sminus G)\eqdef C_{s}(S'\sminus G')\cup \set{a_1,\dots,a_k}$ and $D_t(S'\sminus G)\eqdef C_{t}(S'\sminus G')\cup \set{b_1,\dots,b_\ell}$. By construction, $D_s(S'\sminus G)$ and $D_t(S'\sminus G)$ are disjoint, and $G[D_s(S'\sminus G)]$ ($G[D_t(S'\sminus G)]$) are both connected components in $G$. Since $C_{s}(S'\sminus G')$ and $C_{t}(S'\sminus G')$ are full components of $S'$ in $G'$, and $D_s(S'\sminus G)\supseteq C_{s}(S'\sminus G')$ and $D_t(S'\sminus G)\supseteq C_{t}(S'\sminus G')$, then $D_s(S'\sminus G)$ and $D_t(S'\sminus G)$ are full components of $S'$ in $G$. By Lemma~\ref{lem:fullComponents}, $S'\in \minlsepst{G}$.
\end{proof}

\eat{
\begin{lemma}
Let $S$ be a minimal $st$-separator in $G$, and let $v\in S$. For every simple, chordless $st$-path through $v$, $P_v=s,a_1,\dots,a_k,v,b_1,\dots,b_\ell$, one of the following holds:
\begin{enumerate}
	\item $a_k \in N_G(v) \cap (C_s(S)\cup S)$ and $b_1 \in N_G(v) \setminus N_G(\set{s,a_1,\dots,a_k})$ 
	\item $a_k,b_1 \in N_G(v) \cap C_t(S)$ and  $a_k \in N_G(v)\setminus N_G(\set{b_1,\dots,b_\ell,t})$
\end{enumerate}
\end{lemma}
\begin{proof}
By Theorem~\ref{thm:chordlessstpath} such a simple, chordless path $P_v=s,a_1,\dots,a_k,v,b_1,\dots,b_\ell$ exists. Clearly, $a_k,b_1\in N_G(v)$ are distinct, and neighbors of $v$.
Suppose, without loss of generality, that $a_k \in N_G(v) \cap (C_s(S)\cup S)$. Assume that $b_1 \notin N_G(v) \cap (C_t(S)\cup S)$. Hence, $b_1 \in N_G(v) \cap C_s(S)$ because $\nodes(G)=C_s(S)\cup S \cup C_t(S)$ forms a partition of $\nodes(G)$. Now, if $b_1 \in N_G(\set{s,a_1,\dots,a_k})$ then the path $P_v=s,a_1,\dots,a_k,v,b_1,\dots,b_\ell$ clearly has a chord because of the edge $(a_i,b_1)$. Therefore, $b_1 \notin N_G(\set{s,a_1,\dots,a_k})$. If $a_k \in N_G(v) \cap C_t(S)$ then it must hold that $b_1 \in C_t(S)$. 
Likewise, if $a_k\in N_G(v) \cap C_t(S)$ and $b_1 \in C_t(S)$ then $a_k \notin N_G(\set{b_1,\dots,b_\ell,t})$.
\end{proof}
}
Theorem~\ref{thm:chordlessstpath} provides a characterization of when a vertex $v$ is included in a minimal $s,t$-separator. By reduction from the \textsc{3-in-a-path} problem we conclude that deciding whether there is a minimal $s,t$-separator containing a subset $I\subseteq \nodes(G)$ is an NP-complete problem.

\eat{
In our case, we do not need to solve this problem directly. Let $S$ be a minimal $st$-separator of $G$, and let $u,v \in S$. By Theorem~\ref{thm:chordlessstpath} there is a simple, chordless $st$-path through $u$ and $v$. Let $a,b \in N_G(u)$. We wish to know if, after the addition of edge $(a,b)$ to $G$, there is still a chordless $st$-path through $v$. To that end we define the following.
}

\end{document}